\documentclass[pdflatex,sn-mathphys-num]{sn-jnl}
\usepackage{lipsum}
\usepackage[utf8]{inputenc}
\usepackage{amsmath,amssymb,amsthm,mathtools,bm}
\usepackage{microtype}
\usepackage[utf8]{inputenc}

\usepackage{tikz}
\usepackage{color}
\usepackage{graphicx}
\usepackage{dsfont}
\usepackage{mathrsfs}
\usepackage{verbatim}
\usepackage{enumerate}
\usepackage{tcolorbox}
\usepackage{enumitem}
\tcbuselibrary{breakable}
\usepackage{lipsum}

\theoremstyle{plain}

\theoremstyle{remark}

\usepackage{graphicx}
\usepackage{epic}\usepackage{eepic}\usepackage{epsfig}\usepackage{amsfonts}
\usepackage{latexsym}\usepackage{color}\usepackage{enumerate}\usepackage{bbm}
\allowdisplaybreaks[4]

\hypersetup{
colorlinks,
linkcolor=blue,
anchorcolor=blue,
citecolor=red
}

\theoremstyle{plain}
\newtheorem{theorem}{Theorem}\newtheorem{lemma}[theorem]{Lemma}\newtheorem{proposition}[theorem]{Proposition}\theoremstyle{definition}
\theoremstyle{remark}
\newtheorem{remark}{Remark}\newcommand{\supp}{\operatorname{supp}}

\newcommand{\nb}{\nonumber}

\newcommand{\Tr}{\operatorname{Tr}}

\providecommand{\claimname}{Claim}
\providecommand{\theoremname}{Theorem}

\makeatother
\providecommand{\claimname}{Claim}
\providecommand{\theoremname}{Theorem}

\begin{document}
\title{Strong Converse Exponents of Quantum Soft Covering and Privacy Amplification}
\author{\fnm{Shi-Bing} \sur{Li}\textsuperscript{*}}
\author{\fnm{Hongsen} \sur{Qiu}\textsuperscript{$\dagger$}}
\author{\fnm{Xinyu} \sur{Zhang}\textsuperscript{$\ddagger$}}
\affil[]{%
\makebox[0pt][c]{%
\begin{minipage}{\textwidth}
\centering
Institute for Advanced Study in Mathematics,\\
Harbin Institute of Technology, Harbin, 150001, China
\end{minipage}%
}%
}

\abstract{
We determine the exact strong converse exponent of quantum soft covering
under the sandwiched R{\'e}nyi divergence for all orders
$\alpha\in[\frac{1}{2},\infty)$. For
$\alpha\in[\frac{1}{2},1)$, the exponent is characterized by the
two-parameter club-sandwiched mutual information, whereas for
$\alpha\in[1,\infty)$, it is characterized by the order-$\alpha$
sandwiched R{\'e}nyi mutual information. We also determine the exact
strong converse exponent of privacy amplification against quantum side
information under the sandwiched R{\'e}nyi divergence for
$\alpha\in(2,\infty)$, expressed in terms of the corresponding
order-$\alpha$ sandwiched R{\'e}nyi conditional entropy. To the best of
our knowledge, these results provide the first exact characterization of
the strong converse exponent of quantum soft covering and the first precise operational interpretation of the
two-parameter club-sandwiched mutual information in the quantum setting. The key ingredient is that we establish the exponential rate of the $K$-functional, which is instrumental in deriving the strong converse exponent of quantum soft covering for $\alpha\in[\frac{1}{2},1)$.
}

\maketitle
\begingroup
\renewcommand{\thefootnote}{\fnsymbol{footnote}}

\footnotetext[1]{\texttt{shibingli10@gmail.com}}
\footnotetext[2]{\texttt{chieughongsen@gmail.com}}
\footnotetext[3]{\texttt{xy.zhang@stu.hit.edu.cn}}

\endgroup
\setcounter{footnote}{0}

\section{Introduction}

Consider a classical--quantum (C--Q) channel
$
\mathcal{N}_{X\to E}:x\mapsto\rho_E^x
$
and an input distribution $P_X$ on a finite alphabet $\mathcal X$.  The
corresponding C--Q state and average channel output are
\begin{equation}
\rho_{XE}
=
\sum_{x\in\mathcal X}P_X(x)|x\rangle\langle x|\otimes\rho_E^x,
\qquad
\rho_E
=
\sum_{x\in\mathcal X}P_X(x)\rho_E^x.
\label{eq:intro-cq-state}
\end{equation}
Two fundamental randomization tasks associated with
\eqref{eq:intro-cq-state} are quantum soft covering and quantum privacy
amplification.  Although the two tasks have different operational objectives,
both seek to erase classical--quantum correlations by averaging a randomly
selected family of the states $\{\rho_E^x\}_{x\in\mathcal X}$.

In quantum soft covering, one draws a random codebook
$
\mathcal C=\{X_1,\ldots,X_M\}
$
whose codewords are independently distributed according to $P_X$, and uses
the codebook-induced state
\begin{equation}
\rho_E^{\mathcal C}
:=
\frac{1}{M}\sum_{m=1}^{M}\rho_E^{X_m}
\label{eq:intro-codebook-state}
\end{equation}
to approximate the target state $\rho_E$.  Equivalently, a uniformly chosen
codeword is sent through the C--Q channel and the resulting average output is
required to be nearly indistinguishable from the output generated by the full
input distribution.  Soft covering originated in the classical study of
common information and channel resolvability and has become an important tool
in channel synthesis, wiretap coding, secrecy analysis, and lossy source
coding~\cite{Wyner1975common,HanVerdu1993approximation,Hayashi2006resolvability,
Cuff2013distributed,PTM2016exact,Yaglicuff2019exact,YuTan2018renyi,YuTan2018wyner,Yu2024renyi,YuTan2019exact}.  Its quantum counterpart appears in quantum
identification, measurement compression, channel simulation, and quantum
cryptography~\cite{AhlswedeWinter2002,BennettEtAl2014,DevetakWinter2003,
DevetakWinter2005,OYB2024exponents}.

Privacy amplification starts from the same state $\rho_{XE}$, but has the
opposite operational direction.  Here $X$ is held by a legitimate party and
$E$ represents the quantum side information available to an adversary.  A
hash function $f:\mathcal X\to\mathcal Z$ produces the key $Z=f(X)$ and
induces the state
\begin{equation}
\mathcal R_f(\rho_{XE})
=
\sum_{z\in\mathcal Z}|z\rangle\langle z|
\otimes
\sum_{x:f(x)=z}P_X(x)\rho_E^x.
\label{eq:intro-pa-state}
\end{equation}
The objective is to make this state close to
$
\frac{\mathbbm 1_{\mathcal Z}}{|\mathcal Z|}\otimes\rho_E
$, so that the extracted key is both uniform and independent of the
adversary.  Privacy amplification is a basic primitive in secret-key
distillation and quantum cryptography~\cite{RennerKonig2005,Renner2005security,
TomamichelEtAl2011,BBCM1995generalized,RennerWolf2005simple,DevetakWinter2005}.

In this paper, we study the strong converse exponents of these two quantum information-processing tasks, using the sandwiched Rényi divergence as the distinguishability measure.  This divergence interpolates between several central quantum
information measures: its order-one limit is the quantum relative entropy,
while at order $\frac{1}{2}$ it is, up to the convention for squaring the
fidelity, the negative logarithm of the fidelity.  Most importantly, it obeys
the data-processing inequality under quantum channels exactly in the
operationally relevant range $\alpha\geq\frac{1}{2}$
\cite{MDSFT2013on,WWY2014strong,FrankLieb2013,
AudenaertDatta2015alphaZ,Zhang2020trace}.  We consequently restrict attention
to $\alpha\geq\frac{1}{2}$.  This restriction is structural rather than merely
technical: for $0<\alpha<\frac{1}{2}$, the sandwiched R\'{e}nyi divergence fails
the data-processing inequality in general and therefore is not monotone under
quantum operations.  In that range it loses a basic property required of an
operational distinguishability measure.

\paragraph{Main results.}
Our first contribution is an exact characterization of the strong converse
exponent of quantum soft covering with i.i.d. random codebooks under the
sandwiched R\'{e}nyi divergence for every order
$\alpha\in[\frac{1}{2},\infty)$.  To the best of our knowledge, this is the first exact
strong converse exponent for quantum soft covering.  For orders at least one,
the answer takes the single-order positive-part form below. We write
$|t|^+:=\max\{t,0\}$.

\begin{theorem}
Let
$
\rho_{XE}
=
\sum_{x\in\mathcal X}P_X(x)|x\rangle\langle x|\otimes\rho_E^x
$
be a C--Q state and $R\geq0$.  For every
$\alpha\in[1,\infty)$, we have
\begin{equation}
\varGamma_{\rm sc}^{(\alpha)}(\rho_{XE},R)
=
\left|I_\alpha(X:E)_{\rho_{XE}}-R\right|^+,
\label{eq:intro-sc-alpha-large}
\end{equation}
where the sandwiched R{\'e}nyi mutual information $I_\alpha(X:E)_{\rho_{XE}}$ is defined as \eqref{def:renyi-mutual}.
\end{theorem}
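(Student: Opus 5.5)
We first fix the definitions we will work with. The strong converse exponent is
\[
\varGamma_{\rm sc}^{(\alpha)}(\rho_{XE},R)=\lim_{n\to\infty}-\frac1n\log\Bigl(1-\mathbb E_{\mathcal C}\exp\bigl(-\tfrac{?}{}\bigr)\Bigr)
\]
(or whichever normalization the paper adopts). In the usual form it is the exponential rate of $\mathbb E_{\mathcal C} D_\alpha(\rho_{E^n}^{\mathcal C}\|\rho_E^{\otimes n})$, with codebook size $M=e^{nR}$ and i.i.d. codewords drawn from $P_X^{\otimes n}$. The mutual information is taken as $I_\alpha(X:E)=\min_{\sigma_E}D_\alpha(\rho_{XE}\|\rho_X\otimes\sigma_E)$ in the sandwiched sense. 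The key identity we plan to use is
\[
\mathbb E_{\mathcal C}\,\rho^{\mathcal C}_{E}\text{-type quantities} = \text{the marginal-type expression } D_\alpha\bigl(\rho_{X^nE^n}\,\|\,\rho_{X^n}\otimes\rho_{E}^{\otimes n}\bigr),
\]
obtained by writing the codebook-induced state as the $E$-marginal of the C--Q state $\frac1M\sum_m |m\rangle\langle m|\otimes\rho_{E^n}^{X_m}$. The proof splits into an achievability (upper bound on the exponent) and a converse (lower bound).

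\textbf{Converse, $\varGamma\ge |I_\alpha-R|^+$.} The bound $\ge 0$ is trivial. For the nontrivial part we apply data processing for $\alpha\ge\frac12$ to the map that forgets the message index. The target is a one-shot bound of the form
\[
D_\alpha\bigl(\rho^{\mathcal C}_{E^n}\big\|\rho_E^{\otimes n}\bigr)\;\ge\; D_\alpha\bigl(\rho_{ME^n}\big\|\pi_M\otimes\rho_E^{\otimes n}\bigr)-\log M\cdot(\text{something}) .
\]
Concretely, for $\alpha\ge1$ we use the operator inequality $\rho_{ME^n}\le M\,\pi_M\otimes\rho_{E^n}^{\mathcal C}$ together with the monotonicity of $Q_\alpha$ in its second argument. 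This gives
\[
Q_\alpha\bigl(\rho_{ME^n}\|\pi_M\otimes\sigma\bigr)\le M^{\alpha-1}\,Q_\alpha\bigl(\rho^{\mathcal C}_{E^n}\|\sigma\bigr)\cdot(\cdots).
\]
Averaging over $\mathcal C$ and using the convexity/joint structure of $Q_\alpha$ then relates the result to $\exp\{(\alpha-1)nI_\alpha\}$. We use additivity of $I_\alpha$ for product C--Q states, which holds for sandwiched R\'enyi mutual information via the Sibson-type closed form. Finally we optimize the order $\alpha$ versus the rate $R$ in the exponent.

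\textbf{Achievability.} Here we need an upper bound on $\mathbb E_{\mathcal C}Q_\alpha(\rho^{\mathcal C}_{E^n}\|\rho_E^{\otimes n})$ of the form $\exp\{n(\alpha-1)|I_\alpha-R|^+ + o(n)\}$. We will use the standard Hayashi-type argument for $\alpha\in[1,2]$ and beyond:
\begin{itemize}
\item Pinch with respect to $\rho_E^{\otimes n}$. This costs only a polynomial factor, since the number of distinct eigenvalues grows polynomially, so we may assume commutativity with the reference.
\item Expand $\bigl(\frac1M\sum_m\rho^{X_m}\bigr)^\alpha$ and use the operator-convexity/Jensen trick, splitting the term $m$ from the others, which average to $\rho_E^{\otimes n}$.
\item This yields a bound of the form $\mathbb E Q_\alpha\lesssim \bigl(1+M^{1-\alpha}e^{(\alpha-1)nI_\alpha}\bigr)$, roughly up to polynomial factors.
\end{itemize}
For large $\alpha$ we would instead use a type/method-of-types decomposition: restrict to codewords of a fixed type and use the superadditivity bound $(a+b)^\alpha$ on commuting pinched operators.

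\textbf{Main obstacle.} We expect the tight achievability for all $\alpha>2$ to be the hardest step, since the simple convexity bound is not tight there. Our first approach would be to reduce, via the pinching and type-class decomposition, to a classical problem for each eigenbasis. We would then invoke the known classical exact result for R\'enyi soft covering (Yu--Tan) and take the limit to recover the sandwiched quantity through the asymptotic pinching identity $\lim \frac1n D(\mathcal P(\cdot)\|\cdot)=D_\alpha$.
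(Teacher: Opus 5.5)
\textbf{$\alpha\in(1,2]$ is fine; $\alpha>2$ has a genuine gap.} For $\alpha\in(1,2]$ your achievability argument is the paper's: pinch with respect to $\rho_E^{\otimes n}$, isolate the $m$-th codeword, and average the others to $\rho_E^{\otimes n}$. The case $\alpha>2$, which you rightly flag, is where the proof breaks. There the averaging step is not merely loose but invalid, because it uses operator concavity of $t\mapsto t^{\alpha-1}$, and your proposed replacement does not work either.

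Pinching with respect to $\rho_E^{\otimes n}$ makes each block $\Pi_i\rho_{E^n}^{X^n(m)}\Pi_i$ commute with $\rho_E^{\otimes n}$, but not with the blocks of the other codewords. You therefore still have to bound $\sum_i\lambda_i^{1-\alpha}\,\mathbb{E}\operatorname{Tr}\bigl(\frac1M\sum_m\Pi_i\rho_{E^n}^{X^n(m)}\Pi_i\bigr)^{\alpha}$. This is an $\alpha$-th moment of a sum of independent, mutually non-commuting random positive matrices, so scalar inequalities for $(a+b)^\alpha$ are unavailable.

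The reduction to a classical problem also fails. The basis diagonalizing $\Pi_i\rho_{E^n}^{\mathcal C_n}\Pi_i$ depends on the codebook, so there is no fixed classical channel to which the Yu--Tan result applies. Passing to a fixed basis by a measurement only decreases $Q_\alpha$ for $\alpha>1$ (data processing), which is the wrong direction for an upper bound. Restricting to a type class changes none of this.

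What is missing is a genuinely noncommutative moment estimate. The paper imports from the authors' earlier work the one-shot bound $\mathbb{E}_{\mathcal C}Q_\alpha(\rho_E^{\mathcal C}\|\rho_E)\le1+C_\alpha\bigl(2^{I_2^{(\alpha)}(X:E)-R}+2^{(\alpha-1)(I_\alpha(X:E)-R)}\bigr)$. It then uses the ordering $I_2^{(\alpha)}\le I_\alpha$ to absorb the first term, obtaining $1+2C_\alpha\max\{1,2^{(\alpha-1)(I_\alpha-R)}\}$. This tensorizes by additivity of $D_\alpha$.

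\textbf{Definitions.} In the statement, $I_\alpha(X:E)_{\rho_{XE}}=D_\alpha(\rho_{XE}\|\rho_X\otimes\rho_E)$ uses the fixed marginal $\rho_E$, which is the target state. It is not $\min_{\sigma_E}D_\alpha(\rho_{XE}\|\rho_X\otimes\sigma_E)$. With your definition the claimed identity fails whenever the minimum is strictly smaller and $R<D_\alpha(\rho_{XE}\|\rho_X\otimes\rho_E)$. Moreover, your own pinching bound produces the fixed-marginal quantity, so your two halves could not meet. Also, $\varGamma_{\rm sc}^{(\alpha)}$ is $\liminf_n\frac{1}{n(\alpha-1)}\log\mathbb{E}_{\mathcal C_n}Q_\alpha(\rho_{E^n}^{\mathcal C_n}\|\rho_E^{\otimes n})$, with the expectation inside the logarithm, not a rate of $\mathbb{E}_{\mathcal C}D_\alpha$.

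\textbf{Converse.} The operator inequality $\rho_{ME^n}\le\mathbbm{1}_M\otimes\rho_{E^n}^{\mathcal C}$, combined with monotonicity of $Q_\alpha$ in its first argument (not its second), gives only $\frac1M\sum_mQ_\alpha(\rho_{E^n}^{X^n(m)}\|\sigma)\le M^{\alpha}Q_\alpha(\rho_{E^n}^{\mathcal C}\|\sigma)$. That is a factor $M^\alpha$ rather than $M^{\alpha-1}$, and it yields only $|I_\alpha-\frac{\alpha}{\alpha-1}R|^+$. The inequality you display with $M^{\alpha-1}$ is true, but it is McCarthy's superadditivity $\operatorname{Tr}(\sum_mY_m)^\alpha\ge\sum_m\operatorname{Tr}Y_m^\alpha$ for $\alpha\ge1$, which is what the paper uses. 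Averaging over the codebook then gives $\mathbb{E}_{\mathcal C_n}Q_\alpha\ge2^{-n(\alpha-1)R}Q_\alpha(\rho_{XE}^{\otimes n}\|\rho_X^{\otimes n}\otimes\rho_E^{\otimes n})$ directly, with no optimization over orders.

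\textbf{The case $\alpha=1$.} This case needs separate treatment, because the $Q_\alpha$ bounds are vacuous there. The paper proves the lower bound by the chain rule, $I(\mathcal C_n:E^n)\ge I(\mathcal C_nJ:E^n)-H(J|\mathcal C_n)=nI(X:E)-nR$. It proves the upper bound by letting $\alpha\downarrow1$, using monotonicity of $D_\alpha$ in $\alpha$ and continuity of $I_\alpha$.
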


For orders below one, a single R\'{e}nyi order no longer suffices.  The exact
exponent involves an optimization over an auxiliary order and is characterized
by the two-parameter club-sandwiched mutual information.

\begin{theorem}
Let
$
\rho_{XE}
=
\sum_{x\in\mathcal X}P_X(x)|x\rangle\langle x|\otimes\rho_E^x
$
be a C--Q state and $R\geq0$.  For every
$\alpha\in[\frac{1}{2},1)$, we have
\begin{equation}
\varGamma_{\rm sc}^{(\alpha)}(\rho_{XE},R)
=
\sup_{\alpha\leq a\leq1}
\frac{\alpha(1-a)}{a(1-\alpha)}
\left\{
\widetilde{I}_a^{\frac{\alpha-a}{\alpha(1-a)}}
(X:E)_{\rho_{XE}}-R
\right\},
\label{eq:intro-sc-alpha-small}
\end{equation}
where the two-parameter club-sandwiched mutual information $\widetilde{I}_a^{\frac{\alpha-a}{\alpha(1-a)}}
(X:E)_{\rho_{XE}}$ is defined as \eqref{def:club-mutual}.
\end{theorem}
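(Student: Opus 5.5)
The plan is to prove matching achievability and converse bounds for the exponent of $\mathbb{E}_{\mathcal C}\exp\{-n(1-\alpha)/\alpha \cdot D_\alpha(\rho_{E^n}^{\mathcal C}\|\rho_E^{\otimes n})\}$, or equivalently of the normalized quantity $\mathbb{E}\|(\rho_E^{\otimes n})^{\frac{1-\alpha}{2\alpha}}\rho^{\mathcal C}_{E^n}(\rho_E^{\otimes n})^{\frac{1-\alpha}{2\alpha}}\|_\alpha$. We work with $M=e^{nR}$ and an i.i.d. codebook over $\rho_{XE}^{\otimes n}$. For $\alpha<1$ the map $\rho\mapsto\|\sigma^{\frac{1-\alpha}{2\alpha}}\rho\,\sigma^{\frac{1-\alpha}{2\alpha}}\|_\alpha$ is concave, and the $\alpha$-"norm" with $\alpha<1$ is only a quasi-norm. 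This is why the answer has to come from an interpolation (K-functional) argument rather than from the triangle inequality used for $\alpha\ge1$. We take the abstract's hint and rely on the exponential rate of the $K$-functional established earlier in the paper. Concretely, for a pair of Schatten-type spaces we use the known formula for $\lim -\frac1n\log K(t^n, A^{\otimes n})$ expressed through a sup over intermediate orders.

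\textbf{Decomposition.} Write $\rho^{\mathcal C}_{E^n}=\frac1M\sum_m \rho^{X_m}_{E^n}$. Using pinching with respect to $\rho_E^{\otimes n}$ (which costs only a polynomial factor), we reduce to commuting-like estimates. We split each summand's contribution into a ``typical'' part, where $\rho^{x^n}$ is small relative to $M\rho_E^{\otimes n}$, and an ``atypical'' part, where a single codeword dominates. In the first regime the average is essentially $\rho_E^{\otimes n}$ and contributes an exponent $0$. In the second regime the sum behaves like $\frac1M\rho^{X_m}$, which produces the linear term $-R$ in the exponent. The optimal split at level $e^{nr}$ is exactly a $K$-functional of the operator $\rho_{XE}$ relative to the two endpoint norms: an $L_1$-type norm weighted by $P_X$, and the $\alpha$-sandwiched norm with reference $\rho_E$.

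\textbf{Achievability (upper bound on the exponent).} For each $a\in[\alpha,1]$ we lower bound $\mathbb{E}\|\cdot\|_\alpha$. We restrict to the event that one codeword lies in a type class whose conditional states are atypically concentrated, and then use the concavity/monotonicity of $\|\cdot\|_\alpha$ for $\alpha<1$: dropping positive terms can only decrease the quantity by a controlled amount, since $\|A+B\|_\alpha\ge\|A\|_\alpha$ for positive operators. Optimizing the type via the method of types and Sion's minimax yields the exponent bound. In the resulting expression the relevant interpolation parameters are $a$ and $\frac{\alpha-a}{\alpha(1-a)}$, producing $\frac{\alpha(1-a)}{a(1-\alpha)}\{\widetilde I_a^{\frac{\alpha-a}{\alpha(1-a)}}-R\}$, and we take the sup over $a$.

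\textbf{Converse and main obstacle.} For the converse we upper bound $\mathbb{E}\|\cdot\|_\alpha$. We use an operator Hölder/Araki–Lieb–Thirring-type inequality to pass from the quasi-norm of the random sum to the $K$-functional of the decomposition $\frac1M\sum_m\rho^{X_m}=(\text{small part})+(\text{large part})$. We then bound each part by its expectation using the quasi-triangle inequality with constants that are subexponential after pinching. The key step, and the expected obstacle, is identifying $\lim\frac1n\log K(e^{nR},\rho^{\otimes n})$ with the variational formula over $a\in[\alpha,1]$. The plan is to invoke the paper's $K$-functional rate result and then verify by a Legendre-type duality that its output matches \eqref{eq:intro-sc-alpha-small}. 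The remaining delicate point is the boundary case $\alpha=\frac12$, where the exponent $\frac{\alpha-a}{\alpha(1-a)}$ reaches its extreme values; we handle this by continuity in $\alpha$.
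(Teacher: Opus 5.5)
\textbf{There is a genuine gap, and it lies in the hard direction.}

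A bookkeeping point first. Since $D_\alpha(\rho_{\mathcal CE}\|\rho_{\mathcal C}\otimes\rho_E)=\frac{1}{\alpha-1}\log\mathbb E_{\mathcal C}Q_\alpha(\rho_E^{\mathcal C}\|\rho_E)$, the quantity to control is $\mathbb E_{\mathcal C}Q_\alpha$. Your $\mathbb E\|\cdot\|_\alpha=\mathbb E_{\mathcal C}Q_\alpha^{1/\alpha}$ is a different quantity.

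\emph{Your ``achievability'' step is logically inverted.} There you need a lower bound on $\mathbb E_{\mathcal C}Q_\alpha$, i.e.\ an upper bound on $\varGamma$. Suppose you had lower bounds indexed by $a$ with exponents $f(a):=\frac{\alpha(1-a)}{a(1-\alpha)}\{\widetilde I_a^{\frac{\alpha-a}{\alpha(1-a)}}(X:E)_{\rho_{XE}}-R\}$. They would give $\varGamma\le\inf_a f(a)\le f(1)=0$. This contradicts $\varGamma\ge f(\alpha)=I_\alpha(X:E)_{\rho_{XE}}-R>0$ for $R<I_\alpha$.

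Bounds indexed by $a$ followed by a supremum are legitimate only for the \emph{upper} bound on $\mathbb E_{\mathcal C}Q_\alpha$. That is exactly how the paper proves $\varGamma\ge\sup_a f(a)$, and it does so elementarily: insert the optimizer $\sigma_E$, apply H\"older with $\frac2\beta+\frac1a=\frac1\alpha$, then Jensen, then McCarthy's inequality $\operatorname{Tr}(\sum_mP_m)^a\le\sum_m\operatorname{Tr}P_m^a$. No $K$-functional is needed there.

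In the lower-bound direction, the supremum over $a$ must instead emerge from an infimum over decompositions, and your mechanism cannot supply it. Pinching with respect to $\rho_E^{\otimes n}$ makes each $\rho_{E^n}^{x^n}$ commute with $\rho_E^{\otimes n}$, but not with one another; for maximally mixed $\rho_E$ it does nothing at all. So a method-of-types analysis has nothing to diagonalize and cannot produce the noncommutative club-sandwiched expression. Moreover, keeping a single atypical codeword discards the collective contribution of the many codewords whose states overlap.

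\emph{The missing ingredient is the paper's square-root column embedding combined with a reverse noncommutative Rosenthal inequality.} Put $A_x=(\rho_E^x)^{1/2}\rho_E^{\frac{1-\alpha}{2\alpha}}$ and $Z_{\mathcal C}=M^{-1/2}\sum_mA_{X_m}\otimes e_{m1}$. Then $\|Z_{\mathcal C}\|_{2\alpha}^{2\alpha}=Q_\alpha(\rho_E^{\mathcal C}\|\rho_E)$ with $p=2\alpha\ge1$, which also disposes of your quasi-norm worry. The argument then proceeds in three steps:
\begin{enumerate}
\item symmetrize with Rademacher signs;
\item apply the Junge--Xu inequality for $1<p<2$;
\item remove the row term by concavity and average the local decompositions over $m$.
\end{enumerate}
This yields $(\mathbb E_{\mathcal C}Q_\alpha)^{1/(2\alpha)}\ge c_\alpha K(M^{\frac{1-\alpha}{2\alpha}})$. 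Here $K$ is the $K$-functional of $A_X$ for the couple formed by $L_{2\alpha}(\ell_\infty(\mathcal X)\bar{\otimes}\mathcal B(E))$ and the conditional column space $L^c_{2\alpha}$. It is not the $K$-functional of your ``$L_1$-type'' and ``$\alpha$-sandwiched'' norms of $\rho^{\otimes n}$ at level $e^{nR}$.

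For this couple, the reverse bound $(\mathbb E_{\mathcal C}Q_\alpha)^{1/(2\alpha)}\le K(M^{\frac{1-\alpha}{2\alpha}})$ is elementary, via Minkowski, McCarthy and concavity of $T\mapsto\operatorname{Tr}T^\alpha$. So $K$ could serve your converse too, but that is the easy half.

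Even with the right couple, a Legendre transform does not by itself produce $\widetilde I$. The rate formula $k(r)=\inf_\theta\{(1-\theta)r+\log\|A_X\|_{[D_{p,1},C_{p,1}]_\theta}\}$ requires the complex interpolation norms. The paper identifies them through Junge--Parcet amalgamated interpolation as $\inf_{\sigma}[\mathbb E\operatorname{Tr}(\sigma^{-r_a}A_X^*A_X\sigma^{-r_a})^a]^{1/(2a)}$, at $\theta_a=\frac{a-\alpha}{a(1-\alpha)}$ with $r_a=\frac{a-\alpha}{2a\alpha}$. This is precisely where the club-sandwiched quantity comes from.

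Finally, the obstruction at $\alpha=\frac12$ is not about the second parameter reaching extreme values: Rosenthal needs $p=2\alpha>1$. The paper handles this endpoint via $D_{1/2}\le D_\beta$ together with joint upper semicontinuity of the objective as $\beta\downarrow\frac12$. Continuity of $\varGamma$ in $\alpha$ is not available a priori.
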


Our second contribution closes the remaining high-order regime for quantum
privacy amplification.  We prove the following exact strong converse exponent
for every $\alpha>2$.

\begin{theorem}\label{thm:5-14}
Let
$
\rho_{XE}
=
\sum_{x\in\mathcal X}P_X(x)|x\rangle\langle x|\otimes\rho_E^x
$
be a C--Q state and
$R\geq 0$.  For every $\alpha\in(2,\infty)$, we have
\begin{equation}
\varGamma_{\rm pa}^{(\alpha)}(\rho_{XE},R)
=
\left|R-H_\alpha(X|E)_{\rho_{XE}}\right|^+,
\end{equation}
where the sandwiched R{\'e}nyi conditional entropy $H_\alpha(X|E)_{\rho_{XE}}$ is defined as \eqref{def:renyi-conditional}.
\end{theorem}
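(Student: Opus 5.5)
We prove the two inequalities separately, working in the i.i.d. setting: the $n$-fold state $\rho_{XE}^{\otimes n}$, hash functions $f_n:\mathcal X^n\to\mathcal Z_n$ with $\frac1n\log|\mathcal Z_n|\to R$, and exponent
\begin{equation*}
\liminf_n \tfrac1n D_\alpha\bigl(\mathcal R_{f_n}(\rho_{XE}^{\otimes n})\,\big\|\,\tfrac{\mathbbm 1}{|\mathcal Z_n|}\otimes\rho_E^{\otimes n}\bigr).
\end{equation*}
We take the sandwiched conditional entropy in the form $H_\alpha(X|E)=-D_\alpha(\rho_{XE}\|\mathbbm 1_X\otimes\sigma_E)$, optimized over $\sigma_E$ or evaluated at $\sigma_E=\rho_E$, according to \eqref{def:renyi-conditional}.

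\emph{Converse (lower bound).} Fix any hash $f$ and set $\omega_{ZE}=\mathcal R_f(\rho_{XE})$, which is the image of $\rho_{XE}$ under the classical channel $x\mapsto f(x)$. For C--Q states, coarse-graining $X$ into $Z$ can only increase the sandwiched conditional entropy, but by at most $\log$ of the fibre sizes. The inequality we actually need goes the other way: $H_\alpha(Z|E)_\omega\le\log|\mathcal Z|$, together with
\begin{equation*}
D_\alpha\bigl(\omega_{ZE}\,\big\|\,\tfrac{\mathbbm 1}{|\mathcal Z|}\otimes\rho_E\bigr)=\log|\mathcal Z|-H_\alpha^{\rho_E}(Z|E)_\omega .
\end{equation*}
We also use the data-processing-type bound $H_\alpha(Z|E)_\omega\le H_\alpha(X|E)_\rho$ for $\alpha\ge 1/2$. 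This holds because a deterministic function on the classical register cannot increase conditional entropy when fibres are summed: $\mathcal R_f$ applied to $\mathbbm 1_X\otimes\sigma$ gives $\sum_z|f^{-1}(z)|\,|z\rangle\langle z|\otimes\sigma\ge\mathbbm 1_Z\otimes\sigma$, and the divergence is antimonotone in its second argument. Additivity of $H_\alpha$ on tensor powers then yields $\frac1n D_\alpha\ge\frac1n\log|\mathcal Z_n|-H_\alpha(X|E)$, and nonnegativity of $D_\alpha$ handles the positive part. If \eqref{def:renyi-conditional} uses the optimized $\sigma$, the fixed-$\rho_E$ version above still lower-bounds the divergence against $\frac{\mathbbm 1}{|\mathcal Z|}\otimes\rho_E$, so the converse holds for either convention.

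\emph{Achievability (upper bound).} This is the main obstacle, and it is why we need $\alpha>2$. We use random universal$_2$ hashing and bound the expectation of $\Tr\bigl[(\sigma^{-\frac{\alpha'}{2}}\omega\sigma^{-\frac{\alpha'}{2}})^{\alpha}\bigr]$, with $\alpha'=\frac{\alpha-1}{\alpha}$ and $\sigma=\frac{\mathbbm 1}{|\mathcal Z|}\otimes\rho_E^{\otimes n}$. Writing $\omega_z=\sum_x \mathbf 1[f(x)=z]\,P(x)\rho^x$, we first restrict to a single type class using the method of types together with pinching with respect to $\rho_E^{\otimes n}$. This makes every $\rho^x$ commute with $\sigma$ up to a polynomial factor $v_n^{\alpha}$, so the problem reduces to a commutative-like moment $\mathbb E\,\Tr\bigl[(\sum_x \mathbf 1[f(x)=z]A_x)^\alpha\bigr]$. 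For $\alpha>2$ we do not use the convexity argument that works for $\alpha\le2$. Instead we use a Rosenthal-type or multinomial moment bound for sums of independent-like positive operators: $\mathbb E\|\sum_x \xi_xA_x\|_\alpha^\alpha$ is bounded up to constants by $\max\{\sum_x\mathbb E\,\xi_x\|A_x\|_\alpha^\alpha,\ \|\sum_x\mathbb E\,\xi_xA_x\|_\alpha^\alpha\}$. The first term produces $|\mathcal Z|^{\alpha-1}\cdot 2^{-n(\alpha-1)H_\alpha(X|E)}$ and the second gives the perfect-key value $1$. Since universal$_2$ hashing provides only pairwise independence, we would use random linear hash functions over a finite field, for which the indicators are $k$-wise independent up to a permitted rank loss; alternatively we could simply use fully random functions, which suffices for achievability. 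Taking $\frac{1}{n(\alpha-1)}\log$ of the max gives $\frac1n D_\alpha\le|R-H_\alpha(X|E)|^++o(1)$. A standard derandomization step then yields a deterministic $f_n$: Markov's inequality on the expected trace, with the exponent of the $\alpha$-th power converted by the monotone map $t\mapsto\frac{1}{\alpha-1}\log t$.

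Finally we optimize over types and over $\sigma_E$. The type restriction is removed by a union bound that loses only polynomial factors, and the pinched $\rho_E^{\otimes n}$ versus the optimal universal symmetric state is handled by the standard universal-state argument $\sigma_n\le v_n\,\sigma_{u}$. This matches the converse and proves $\varGamma_{\rm pa}^{(\alpha)}=|R-H_\alpha(X|E)|^+$.
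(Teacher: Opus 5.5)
Your overall plan is sound, and it takes a different route from the paper. However, the justification you give for the converse inequality is wrong. Data processing under $\mathcal R_f$ gives
\[
-H_\alpha(X|E)_\rho\geq D_\alpha\bigl(\omega_{ZE}\big\|\mathcal R_f(\mathbbm 1_X\otimes\rho_E)\bigr).
\]
Since $\mathcal R_f(\mathbbm 1_X\otimes\rho_E)\geq\mathbbm 1_Z\otimes\rho_E$ on the image of $f$, antimonotonicity gives
\[
D_\alpha\bigl(\omega_{ZE}\big\|\mathcal R_f(\mathbbm 1_X\otimes\rho_E)\bigr)\leq -H_\alpha(Z|E)_\omega .
\]
Both inequalities bound the same middle quantity from the same side, so they cannot be chained. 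What data processing actually yields, using $\mathcal R_f(\mathbbm 1_X\otimes\rho_E)\leq\max_z|f^{-1}(z)|\,\mathbbm 1_Z\otimes\rho_E$, is $H_\alpha(Z|E)_\omega\geq H_\alpha(X|E)_\rho-\log\max_z|f^{-1}(z)|$. That is the opposite of what you need, and your sentence that coarse-graining ``increases'' conditional entropy has the sign reversed too.

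The inequality $H_\alpha(Z|E)_\omega\leq H_\alpha(X|E)_\rho$ is nevertheless true, but it is not a data-processing consequence. For $\alpha>1$ it follows from McCarthy's superadditivity $\operatorname{Tr}(\sum_x B_x)^\alpha\geq\sum_x\operatorname{Tr}B_x^\alpha$, applied within each fibre with $B_x=P_X(x)\rho_E^{\frac{1-\alpha}{2\alpha}}\rho_E^x\rho_E^{\frac{1-\alpha}{2\alpha}}$. This gives $Q_\alpha\bigl(\mathcal R_f(\rho_{XE})\big\|\frac{\mathbbm 1_{\mathcal Z}}{|\mathcal Z|}\otimes\rho_E\bigr)\geq|\mathcal Z|^{\alpha-1}Q_\alpha(\rho_{XE}\|\mathbbm 1_X\otimes\rho_E)$ for every $f$, which is the same argument the paper uses for the soft-covering converse when $\alpha>1$. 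With this fix your converse is a self-contained one-shot proof, whereas the paper simply cites \cite{LYH2023tight} for this direction.

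For achievability, the paper imports the sharper one-shot bound $\mathbb E_FQ_\alpha\leq 1+C_\alpha[\cdots]$ from \cite{liQiuZhang2026reliability}. That bound involves the mixed-order quantity $H_2^{(\alpha)}$, so the paper also needs the ordering $H_\alpha\leq H_2^{(\alpha)}$ to collapse it to $1+4C_\alpha\max\{1,2^{(\alpha-1)(R-H_\alpha)}\}$.

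Your core step is different and more elementary:
\begin{itemize}
\item Use fully random binning, so that for each fixed $z$ the indicators $\mathbf 1[F(x)=z]$ are independent over $x$.
\item Apply a dimension-free Rosenthal inequality for independent positive semidefinite summands with $A_x=P_X(x)\rho_E^{\frac{1-\alpha}{2\alpha}}\rho_E^x\rho_E^{\frac{1-\alpha}{2\alpha}}$. For $\alpha\geq 2$ such an inequality follows, for example, from the Junge--Xu noncommutative Rosenthal inequality plus a self-bounding step.
\item This gives directly $\mathbb E_FQ_\alpha\leq C_\alpha\bigl(1+|\mathcal Z|^{\alpha-1}2^{-(\alpha-1)H_\alpha(X|E)}\bigr)$, and $\frac{1}{n(\alpha-1)}\log$ of this gives $|R-H_\alpha(X|E)|^+$.
\end{itemize}
A multiplicative constant is harmless for the exponent; the paper's additive ``$1+$'' form is what a reliability-function application needs, not this theorem. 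Make sure the cited constant depends only on $\alpha$, since both $|E|^n$ and the number of summands grow exponentially in $n$.

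You should drop the rest of the scaffolding:
\begin{itemize}
\item The Rosenthal bound needs no commutativity. In any case, pinching with respect to $\rho_E^{\otimes n}$ would not make the $A_x$ commute with one another.
\item The type-class restriction is unnecessary.
\item Here $H_\alpha$ uses the fixed marginal $\rho_E$ and the target is $\frac{\mathbbm 1}{|\mathcal Z_n|}\otimes\rho_E^{\otimes n}$, so there is no $\sigma_E$ to optimize and no universal-state step. An achievability bound in terms of an optimized $\sigma_E$ would actually contradict your own converse.
\end{itemize}
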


Theorem~\ref{thm:5-14} complements the result of Li, Yao, and Hayashi for
$\alpha\in[1,2]$~\cite{LYH2023tight}.  For the low-order regime, Rubboli and
Tomamichel characterized the endpoint $\alpha=\frac{1}{2}$ under the composable
fidelity criterion~\cite{rubboliTomamichel2026composable}; their argument
extends directly to every $\alpha\in(\frac{1}{2},1)$.  Alternatively, the method
developed here for quantum soft covering below order one can be adapted to
privacy amplification and yields the corresponding club-sandwiched
conditional-entropy formula.  Consequently, combining the present results
with~\cite{LYH2023tight,rubboliTomamichel2026composable} gives exact strong
converse exponents of quantum privacy amplification throughout the entire
data-processing range $
\alpha\in\left[\frac{1}{2},\infty\right).$

\paragraph{Related work.} The exponential behavior of classical soft covering has been studied under
several distinguishability criteria, including total variation and R\'{e}nyi
divergences~\cite{Hayashi2006resolvability,YuTan2018renyi,Yaglicuff2019exact,Yassaee2019almost}.  In the quantum setting, Cheng and Gao
derived one-shot exponential achievability and strong converse bounds for
i.i.d. and constant-composition random codebooks under the trace
distance~\cite{ChengGao2024error}.  The optimal second-order asymptotics under
the same criterion were subsequently characterized in
\cite{ShenGaoCheng2024optimal}.  These works establish the quantum mutual
information as the threshold for trace-distance soft covering, but do not give
an exact strong converse exponent.  Thus, even for i.i.d. random codebooks, an
exact quantum characterization has remained open.

Quantum privacy amplification has an extensive literature.  Its optimal
first-order extraction rate under standard composable security criteria is
the conditional von Neumann entropy~\cite{RennerKonig2005,
Renner2005security}.  Error exponents and strong converse bounds have been
investigated under the trace distance, relative entropy, fidelity, purified
distance, and R\'{e}nyi criteria~\cite{Hayashi2015security,LYH2023tight,
LiYao2024operational,rubboliTomamichel2026composable}.  In particular, Li,
Yao, and Hayashi established the strong converse exponent relevant here for
orders $\alpha\in[1,2]$~\cite{LYH2023tight}, while Rubboli and Tomamichel
obtained the exact composable fidelity exponent, corresponding to the endpoint
$\alpha=\frac{1}{2}$, in terms of a club-sandwiched conditional
entropy~\cite{rubboliTomamichel2026composable}.

\section{Preliminaries}
\subsection{Basic Notations}
Let $\mathcal{H}$ be a Hilbert space associated with a finite-dimensional
quantum system. We denote the set of bounded linear operators on $\mathcal{H}$
by $\mathcal{B(H)}$, and the positive semi-definite operators by
$\mathcal{P(H)}$. We use the notations $\mathcal{D(H)}$ to represent
the set of quantum states. We use $\mathbbm{1}_{\mathcal{H}}$ to represent the identity
operator on $\mathcal{H}$. When $\mathcal{H}$ is associated
with a system $A$, the above notations $\mathcal{B(H)},\mathcal{P(H)},\mathcal{D(H)}$
and $\mathbbm{1}_{\mathcal{H}}$ are also written as $\mathcal{B}(A),\mathcal{P}(A),\mathcal{D}(A)$
and $\mathbbm{1}_{A}$, respectively. The dimension of system $A$
is denoted by $|A|$. A classical-quantum (C-Q) state is a bipartite
state of the form
\begin{equation*}
\rho_{XE}=\sum_{x\in\mathcal{X}}q(x)|x\rangle\langle x|_{X}\otimes\rho_{E}^{x},
\end{equation*}
where $\rho_{E}^{x}\in\mathcal{D}(E)$, $\{|x\rangle\}$ is an orthonormal basis of $\mathcal{H}_{X}$ and $q(x)$ is a probability
distribution on a finite set $\mathcal{X}$. In this sense, we can view $\rho_{E}^{X}$ as a discrete matrix-valued random variable with probability
distribution $P(X=x)=q(x)$ and it is finitely supported.

Let $H$ be a self-adjoint operator on $\mathcal{H}$ with spectral
projections $P_{1},...,P_{q}$. Then the pinching map associated with
$H$ is defined as
\begin{equation*}
\mathcal{E}_{H}:X\mapsto\sum_{i=1}^{q}P_{i}XP_{i}.
\end{equation*}
The pinching inequality~\cite{Hayashi2002optimal} states that for any $\sigma\in\mathcal{P(H)}$,
we have
\begin{equation*}
\sigma\leq \nu(H)\mathcal{E}_{H}(\sigma),
\end{equation*}
where $\nu(H)$ is the number of distinct eigenvalues of $H$.

Throughout this paper, the functions ``log'' and ``exp'' are with base $2$. For any $\alpha\in(0,1)\cup(1,\infty)$, the order-$\alpha$ fidelity between a quantum state $\rho\in\mathcal{D}(\mathcal{H})$ and a semi-definite operator $\sigma\in\mathcal{P}(\mathcal{H})$ is defined as
\begin{equation*}
Q_{\alpha}(\rho\|\sigma)
:=\mathrm{Tr} \left(\sigma^{\frac{1-\alpha}{2\alpha}}\rho\,\sigma^{\frac{1-\alpha}{2\alpha}}\right)^{\alpha}.
\end{equation*}
For any $\alpha\in(0,1)\cup(1,\infty)$, the corresponding order-$\alpha$ sandwiched R{\'e}nyi divergence
~\cite{MDSFT2013on,WWY2014strong} is defined as
\begin{equation*}
D_\alpha(\rho\|\sigma)
:=\frac{1}{\alpha-1}\log Q_\alpha(\rho\|\sigma).
\end{equation*}
Taking the limit $\alpha\to1$, it reduces to the quantum relative entropy~\cite{Umegaki1954conditional}
\begin{equation*}
D_1(\rho\|\sigma):=\lim_{\alpha\to1} D_\alpha(\rho\|\sigma)
= D(\rho\|\sigma)
:=\mathrm{Tr}\,\rho(\log \rho - \log \sigma).
\end{equation*}

Let $\rho_{XE}=\sum_{x\in\mathcal{X}} p(x)\,|x\rangle\langle x|_X \otimes \rho_E^x$ be a C-Q state. For any $\alpha\in(0,\infty)$, the order-$\alpha$ sandwiched Rényi conditional entropy~\cite{TBH2014relating} is defined as
\begin{align}
H_\alpha(X|E)_{\rho}
&:= - D_\alpha(\rho_{XE}\,\|\,\mathbbm{1}_{\mathcal{X}} \otimes \rho_E).\label{def:renyi-conditional}
\end{align}
Similarly, the order-$\alpha$ sandwiched Rényi mutual information is defined as
\begin{align}
I_\alpha(X:E)_{\rho}
&:= D_\alpha(\rho_{XE}\,\|\,\rho_X \otimes \rho_E).\label{def:renyi-mutual}
\end{align}
In the limit $\alpha \to 1$, these quantities recover the standard von Neumann conditional entropy and mutual information, i.e.,
\begin{align*}
H(X|E)_{\rho_{XE}}
&:= H(\rho_{XE}) - H(\rho_E), \\
I(X:E)_{\rho_{XE}}
&:= D(\rho_{XE}\,\|\,\rho_X \otimes \rho_E)
= H(\rho_X) + H(\rho_E) - H(\rho_{XE}),
\end{align*}
where $H(\rho) := -\mathrm{Tr}\,\rho \log \rho$ is the von Neumann entropy.

For any $\alpha\in(0,\infty)$, the
mixed-order order-two R{\'e}nyi conditional entropy and R{\'e}nyi mutual information are defined as \cite{liQiuZhang2026reliability}
\begin{align}
H_{2}^{(\alpha)}(X|E)_{\rho_{XE}}:=&
-\log
\sum_{x\in\mathcal X}P_X(x)^2
\operatorname{Tr}
\left[
\left(
\rho_E^{\frac{1-\alpha}{2\alpha}}
\rho_E^x
\rho_E^{\frac{1-\alpha}{2\alpha}}
\right)^2
\rho_E^{\frac{\alpha-2}{\alpha}}
\right], \\
I_{2}^{(\alpha)}(X:E)_{\rho_{XE}}
:=&
\log
\sum_{x\in\mathcal X}P_X(x)
\operatorname{Tr}
\left[
\left(
\rho_E^{\frac{1-\alpha}{2\alpha}}
\rho_E^x
\rho_E^{\frac{1-\alpha}{2\alpha}}
\right)^2
\rho_E^{\frac{\alpha-2}{\alpha}}
\right].
\end{align}

In the next proposition, we collect a few properties of the R{\'e}nyi quantities defined above.
\begin{lemma}\label{lem:proverty}
Let $\rho,\sigma \in \mathcal{D}(\mathcal{H})$. Then the sandwiched Rényi divergence satisfies the following properties.
\begin{enumerate}
\item Monotonicity~\cite{MDSFT2013on,Beigi2013sandwiched}:
For any $0 < \alpha < \beta$, we have
\begin{equation*}
D_\alpha(\rho\|\sigma) \le D_\beta(\rho\|\sigma).
\end{equation*}
\item Additivity~\cite{MullerLennertEtAl2013}:
For any $\rho_1,\rho_2,\sigma_1,\sigma_2$,
\begin{equation*}
D_\alpha(\rho_1\otimes\rho_2 \,\|\, \sigma_1\otimes\sigma_2)
= D_\alpha(\rho_1\|\sigma_1) + D_\alpha(\rho_2\|\sigma_2).
\end{equation*}
\end{enumerate}
\end{lemma}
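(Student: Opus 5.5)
For additivity, the plan is to use the fact that functional calculus commutes with tensor products. For positive $\sigma_1,\sigma_2$ and any real $s$ we have $(\sigma_1\otimes\sigma_2)^{s}=\sigma_1^{s}\otimes\sigma_2^{s}$, with negative powers taken on the support. Therefore
\begin{equation*}
(\sigma_1\otimes\sigma_2)^{\frac{1-\alpha}{2\alpha}}(\rho_1\otimes\rho_2)(\sigma_1\otimes\sigma_2)^{\frac{1-\alpha}{2\alpha}}
=\Bigl(\sigma_1^{\frac{1-\alpha}{2\alpha}}\rho_1\sigma_1^{\frac{1-\alpha}{2\alpha}}\Bigr)\otimes\Bigl(\sigma_2^{\frac{1-\alpha}{2\alpha}}\rho_2\sigma_2^{\frac{1-\alpha}{2\alpha}}\Bigr).
\end{equation*}
Raising this to the power $\alpha$ again factorizes, since $(A\otimes B)^\alpha=A^\alpha\otimes B^\alpha$ for $A,B\geq0$. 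The trace of a tensor product is the product of the traces, so $Q_\alpha(\rho_1\otimes\rho_2\|\sigma_1\otimes\sigma_2)=Q_\alpha(\rho_1\|\sigma_1)Q_\alpha(\rho_2\|\sigma_2)$. Taking $\frac{1}{\alpha-1}\log$ gives additivity for $\alpha\neq1$.

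Two cases need separate handling. The case $\alpha=1$ follows either by taking the limit or directly from $\log(\sigma_1\otimes\sigma_2)=\log\sigma_1\otimes\mathbbm 1+\mathbbm 1\otimes\log\sigma_2$. The support convention for $\alpha>1$ also needs a check: the divergence is $+\infty$ exactly when $\supp\rho\not\subseteq\supp\sigma$. Since $\supp(\rho_1\otimes\rho_2)\subseteq\supp(\sigma_1\otimes\sigma_2)$ iff both factor inclusions hold, both sides are infinite simultaneously.

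For monotonicity, I would reparametrize by $s:=\frac{\alpha-1}{\alpha}\in(-\infty,1)$, so that $\alpha=\frac{1}{1-s}$, and assume $\supp\rho\subseteq\supp\sigma$; otherwise the large-$\alpha$ side is $+\infty$ and there is nothing to prove. The operators $\sigma^{-s/2}\rho\,\sigma^{-s/2}$ and $\rho^{1/2}\sigma^{-s}\rho^{1/2}$ have the same nonzero spectrum, since they are $A^*A$ and $AA^*$ with $A=\rho^{1/2}\sigma^{-s/2}$. Hence
\begin{equation*}
D_\alpha(\rho\|\sigma)=\frac{\varphi(s)}{s},\qquad \varphi(s):=\log\bigl\|\rho^{1/2}\sigma^{-s}\rho^{1/2}\bigr\|_{\frac{1}{1-s}} .
\end{equation*}
Here $\|\cdot\|_p$ is the Schatten (quasi-)norm. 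Since $\alpha\mapsto s$ is increasing, it suffices to show that $s\mapsto\varphi(s)/s$ is nondecreasing. We have $\varphi(0)=\log\operatorname{Tr}\rho=0$, so this follows once $\varphi$ is convex, because slopes of secants through the point $(0,\varphi(0))$ of a convex function are nondecreasing. The value at $\alpha=1$ is then recovered as the limit $\varphi'(0)=D(\rho\|\sigma)$, which is sandwiched between the two sides.

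To prove convexity of $\varphi$, I would use complex interpolation (Hadamard three-lines, in the Stein--Riesz--Thorin form for Schatten classes, as in Beigi's argument). For $s_0<s_1$ and $\theta\in(0,1)$, consider the holomorphic family $z\mapsto\rho^{1/2}\sigma^{-((1-z)s_0+zs_1)}\rho^{1/2}$ on the strip $0\leq\operatorname{Re}z\leq1$. The exponent $\frac1p=1-s$ is affine in $s$, which is exactly the interpolation setting. Moreover, the imaginary parts only introduce unitaries $\sigma^{-i t(s_1-s_0)}$, which do not change Schatten norms after a polar-decomposition argument. This yields $\varphi((1-\theta)s_0+\theta s_1)\leq(1-\theta)\varphi(s_0)+\theta\varphi(s_1)$.

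The main obstacle is the range where $\frac{1}{1-s}<1$, i.e.\ $\alpha<1$, and especially $\alpha<\frac12$. There $\|\cdot\|_p$ is only a quasi-norm, and the standard duality-based interpolation does not apply directly. For $\alpha\in[\frac12,1)$ I would first try the duality $\|X\|_p=\min\{\operatorname{Tr}(X\tau^{1-1/p})^{\ldots}\}$-type variational formula for $p<1$, which turns the norm into an infimum of functions that are log-convex in $s$. For the full range $0<\alpha<\beta$, I would fall back on the derivative argument of M\"uller-Lennert et al., computing $\frac{d}{d\alpha}D_\alpha$ and showing it is a relative-entropy-type quantity, hence nonnegative.
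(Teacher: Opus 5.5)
Your additivity argument is correct, including the support bookkeeping. Your monotonicity argument is also correct on the range $1\le\alpha<\beta$: there you substitute $s=1-1/\alpha$, write $D_\alpha=\varphi(s)/s$ with $\varphi(0)=0$, and get convexity of $\varphi$ on $[0,1)$ from Stein interpolation with the affine index $1/p=1-s\le 1$, which is essentially Beigi's argument. The paper gives no proof of this lemma and only cites the literature.

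\textbf{The gap: monotonicity for $\alpha<1$.} This is the case $s<0$. You flag it but do not close it, and the paper actually needs it: the endpoint case of Theorem~\ref{thm:exact-sc-alpha-half-one} uses $D_{1/2}\le D_\beta$ for $\beta\in(\frac12,1)$. There are three concrete problems.
\begin{enumerate}
\item The variational formula $\|X\|_p=\inf_{\tau>0,\operatorname{Tr}\tau=1}\operatorname{Tr}X\tau^{1-1/p}$ holds for every $0<p<1$, not just $p\ge\frac12$. It gives $\varphi(s)=\inf_\tau h_\tau(s)$ with $h_\tau(s)=\log\operatorname{Tr}\rho^{1/2}\sigma^{-s}\rho^{1/2}\tau^{s}$. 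Each $h_\tau$ is convex: expanding in the eigenprojections $P_i$ of $\sigma$ and $Q_j$ of $\tau$, it is the logarithm of $\sum_{i,j}\lambda_i^{-s}\mu_j^{s}\|P_i\rho^{1/2}Q_j\|_2^2$. But an infimum of convex functions is not convex, so this route does not deliver the convexity of $\varphi$ that your argument relies on.
\item The derivative fallback is only a pointer. The derivative of $D_\alpha$ is not a relative entropy, and its sign is exactly the nontrivial content.
\item Your reduction ``otherwise the large-$\alpha$ side is $+\infty$'' is false when $\beta<1$. If $\supp\rho\not\subseteq\supp\sigma$ but $\rho\not\perp\sigma$, then $D_\beta(\rho\|\sigma)$ is finite, so that case needs an argument too.
\end{enumerate}

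\textbf{The repair.} You do not need convexity of $\varphi$ on $s<0$. You only need $\varphi(s)/s$ to be monotone, i.e.\ star-shapedness about $s=0$. This property \emph{is} preserved under infima, because every $h_\tau$ takes the same nonpositive value at $0$: $h_\tau$ extends continuously to $s=0$ with $h_\tau(0)=\log\operatorname{Tr}\Pi_\sigma\rho\le 0$, where $\Pi_\sigma$ is the support projection of $\sigma$. Take $s_1<s_2<0$. Convexity of $h_\tau$ on $[s_1,0]$ gives $h_\tau(s_2)\le\frac{s_2}{s_1}h_\tau(s_1)+\bigl(1-\frac{s_2}{s_1}\bigr)h_\tau(0)\le\frac{s_2}{s_1}h_\tau(s_1)$. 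Taking the infimum over $\tau$ gives $\varphi(s_2)\le\frac{s_2}{s_1}\varphi(s_1)$. Dividing by $s_2<0$ gives $D_{\alpha_1}\le D_{\alpha_2}$ for all $\alpha_1<\alpha_2<1$, with no support assumption. Crossing $\alpha=1$ then only requires $\lim_{\alpha\to1}D_\alpha=D$.

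\textbf{Alternative: the derivative route.} With nested supports, set $\Delta=\sigma^{-s/2}\rho\sigma^{-s/2}$ and $\omega=\Delta^\alpha/\operatorname{Tr}\Delta^\alpha$. Then $\varphi'(s)=D(\omega\|\sigma)$. Also, $\operatorname{Tr}\rho=1$ with $\rho=\sigma^{s/2}\Delta\sigma^{s/2}$ gives $\varphi(s)=-\log\operatorname{Tr}\omega^{1-s}\sigma^{s}$. Hence
\[
\frac{d}{d\alpha}D_\alpha(\rho\|\sigma)=\frac{1}{\alpha(\alpha-1)}\Bigl[D(\omega\|\sigma)-\tfrac{1}{\beta-1}\log\operatorname{Tr}\omega^{\beta}\sigma^{1-\beta}\Bigr],\qquad \beta=1/\alpha .
\]
This is nonnegative on both sides of $\alpha=1$ by the elementary monotonicity of the Petz R\'enyi divergence in its order. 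So the sign comes from comparing two divergences, not from the nonnegativity of a single relative entropy.
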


For later use, we fix our convention for Schatten norms.  For
$A\in\mathcal B(\mathcal H)$, let $|A|=(A^*A)^{1/2}$.  If
$1\leq p<\infty$, set
\[
    \|A\|_{S_p(\mathcal H)}
    :=
    \bigl(\operatorname{Tr}|A|^p\bigr)^{1/p},
\]
and let
$\|A\|_{S_\infty(\mathcal H)}:=\|A\|_{\mathcal B(\mathcal H)}$.
Since $\mathcal H$ is finite-dimensional, every operator belongs to every
Schatten class.  If $1/p+1/p'=1$, the trace pairing gives
\[
    \bigl|\operatorname{Tr}(A^*B)\bigr|
    \leq
    \|A\|_{S_p(\mathcal H)}\|B\|_{S_{p'}(\mathcal H)},
    \qquad
    \|A\|_{S_p(\mathcal H)}
    =
    \sup_{\|B\|_{S_{p'}(\mathcal H)}\leq1}
    \bigl|\operatorname{Tr}(A^*B)\bigr|.
\]
More generally, if $1/r=1/p+1/q$, then
\[
    \|AB\|_{S_r(\mathcal H)}
    \leq
    \|A\|_{S_p(\mathcal H)}\|B\|_{S_q(\mathcal H)},
\]
and the tensor-product norm is multiplicative:
\[
    \|A\otimes B\|_{S_p(\mathcal H\otimes\mathcal K)}
    =
    \|A\|_{S_p(\mathcal H)}\|B\|_{S_p(\mathcal K)}.
\]
We use these facts in their standard forms; see
\cite[Chapter~2]{Simon2005}. We will simply write $\|\cdot\|_p=\|\cdot\|_{S_p(\mathcal H)}$ without ambiguity. 

\subsection{Two-Parameter club-sandwiched mutual information}
In this paper, we introduce a new R{\'e}nyi information quantity,
termed the two-parameter club-sandwiched mutual information. For any
C-Q state $\rho_{XE}$ and parameters $\alpha<1$ and $\lambda<0$, it is
defined as
\begin{align}
\widetilde{I}_\alpha^{\lambda}(X:E)_{\rho_{XE}}
:={}&
\sup_{\sigma_E\in\mathcal{D}(E)}
\frac{1}{\alpha-1}
\log\operatorname{Tr}
\Big[
(\rho_X\otimes\sigma_E)^{
\frac{\lambda}{2}\frac{1-\alpha}{\alpha}}
(\rho_X\otimes\rho_E)^{
\frac{1-\lambda}{2}\frac{1-\alpha}{\alpha}}
\notag\\
&\qquad\quad{}\cdot
\rho_{XE}
(\rho_X\otimes\rho_E)^{
\frac{1-\lambda}{2}\frac{1-\alpha}{\alpha}}
(\rho_X\otimes\sigma_E)^{
\frac{\lambda}{2}\frac{1-\alpha}{\alpha}}
\Big]^\alpha .
\label{def:club-mutual}
\end{align}
From the definition, it is clear that at the point where the maximum is attained, the optimizer $\sigma_E$ satisfies $\mathrm{supp}(\rho_E) \subseteq \mathrm{supp}(\sigma_E)$.

For fixed $\alpha\in[\frac12,1)$, define
\[
\lambda_{\alpha,a}
:=
\frac{\alpha-a}{\alpha(1-a)},
\qquad a\in(\alpha,1).
\]
Following the boundary convention used in
\cite[Lemma~17]{rubboliTomamichel2026composable}, the values at
$a=\alpha$ and $a=1$ are understood through one-sided limits. More
precisely,
\begin{align*}
\lim_{a\downarrow\alpha}
\widetilde I_a^{\lambda_{\alpha,a}}(X:E)_{\rho_{XE}}
&=
I_\alpha(X:E)_{\rho_{XE}},\\
\lim_{a\uparrow1}
\widetilde I_a^{\lambda_{\alpha,a}}(X:E)_{\rho_{XE}}
&=I(X:E)_{\rho_{XE}}.
\end{align*}
Consequently, the objective function in
\eqref{eq:intro-sc-alpha-small} admits the boundary values
\begin{align*}
&\left.
\frac{\alpha(1-a)}{a(1-\alpha)}
\left\{
\widetilde I_a^{\lambda_{\alpha,a}}(X:E)_{\rho_{XE}}-R
\right\}
\right|_{a=\alpha}
=I_\alpha(X:E)_{\rho_{XE}}-R,
\\
&\left.
\frac{\alpha(1-a)}{a(1-\alpha)}
\left\{
\widetilde I_a^{\lambda_{\alpha,a}}(X:E)_{\rho_{XE}}-R
\right\}
\right|_{a=1}
=0.
\end{align*}
Thus, the supremum in \eqref{eq:intro-sc-alpha-small} is well defined
over the closed interval $[\alpha,1]$. In particular, the endpoint
$a=1$ guarantees that the right-hand side is nonnegative.

% We shall also use a weighted version, always under an explicit commutation
% hypothesis.  Let $\omega\in\mathcal D(\mathcal H)$, compress all operators
% to $\operatorname{supp}\omega$, and suppose that $[A,\omega]=0$.  On the
% finite von Neumann algebra $\{\omega\}'$, the functional
% \[
%     \tau_\omega(X):=\operatorname{Tr}(\omega X)
% \]
% is a faithful normal tracial state.  Indeed, for $X,Y\in\{\omega\}'$,
% \[
%     \tau_\omega(XY)
%     =\operatorname{Tr}(\omega XY)
%     =\operatorname{Tr}(Y\omega X)
%     =\operatorname{Tr}(\omega YX)
%     =\tau_\omega(YX).
% \]
% Consequently,
% \[
%     \|A\|_{p,\omega}
%     :=
%     \bigl(\operatorname{Tr}(\omega|A|^p)\bigr)^{1/p}
% \]
% is precisely the noncommutative $L_p$-norm associated with
% $(\{\omega\}',\tau_\omega)$.  If $A_\xi$ is a discrete random
% operator satisfying $[A_\xi,\omega]=0$ almost surely, then
% \[
%     \bigl(\mathbb E_\xi\operatorname{Tr}
%         (\omega|A_\xi|^p)\bigr)^{1/p}
% \]
% is the noncommutative $L_p$-norm in
% $L_\infty(\Omega)\,\overline\otimes\,\{\omega\}'$, equipped with the trace
% $\mathbb E_\xi\otimes\tau_\omega$.  This is the precise meaning of the
% term ``weighted Schatten norm'' used below; see also
% \cite{PisierXu2003} for the semifinite(tracial) noncommutative $L_p$ framework.

\section{Problem Statement and Main Results}
\subsection{Soft covering and main results}
Let $\mathcal N:\mathcal X\to\mathcal D(\mathcal H_E)$ be a
classical--quantum channel given by $
x\longmapsto\rho_E^x$
and $P_X$ be an input distribution on $\mathcal X$. The
corresponding average output state is
\[
\rho_E
:=
\sum_{x\in\mathcal X}P_X(x)\rho_E^x.
\]
The goal of quantum soft covering is to approximate the average output
state $\rho_E$ using the uniform mixture of the channel output states
associated with a finite random codebook. More precisely, let $
\mathcal C=\{X(1),\ldots,X(M)\}$
be a random codebook whose codewords are independently drawn according
to $P_X$. The state induced by the codebook $\mathcal C$ is
\[
\rho_E^{\mathcal C}
:=
\frac{1}{M}\sum_{m=1}^M\rho_E^{X(m)}.
\]
The quantum soft-covering problem asks how large the codebook must be
so that $\rho_E^{\mathcal C}$ closely approximates $\rho_E$.

In this work, the approximation error is measured by the order-$\alpha$
sandwiched R{\'e}nyi divergence. Introducing the classical codebook
register
\[
\rho_{\mathcal CE}
:=
\sum_cP_{\mathcal C}(c)
|c\rangle\langle c|\otimes\rho_E^c,
\qquad
\rho_{\mathcal C}
:=
\sum_cP_{\mathcal C}(c)|c\rangle\langle c|,
\]
the approximation error is defined as
\begin{equation*}
D_\alpha
\left(
\rho_{\mathcal CE}
\,\middle\|\,
\rho_{\mathcal C}\otimes\rho_E
\right)=
\frac{1}{\alpha-1}
\log
\mathbb E_{\mathcal C}
Q_\alpha
\left(
\rho_E^{\mathcal C}
\,\middle\|\,
\rho_E
\right).
\end{equation*}
For the memoryless extension, let
\[
\mathcal C_n
=
\{X^n(1),\ldots,X^n(2^{nR})\},\footnote{\label{fn:common} Without loss of generality, we assume that $2^R$ is an
	integer. Consequently, $2^{nR}$ is an integer for every
	$n\in\mathbb{N}$. This convention has no effect on the strong converse exponent.}
\]
where the positive number $R$ is called code rate and each codeword is independently drawn according to
$P_X^{\otimes n}$. The induced output state is
\[
\rho_{E^n}^{\mathcal C_n}
=
\frac{1}{2^{nR}}
\sum_{m=1}^{2^{nR}}
\rho_{E^n}^{X^n(m)},
\]
where $
\rho_{E^n}^{x^n}
=
\bigotimes_{i=1}^n\rho_E^{x_i}.$ Then, the strong converse exponent of soft covering is defined as
\begin{equation*}
\varGamma_{\rm sc}^{(\alpha)}(\rho_{XE},R):=\liminf_{n\to\infty}\frac{1}{n}D_\alpha
\left(
\rho_{\mathcal{C}_nE^n}
\,\middle\|\,
\rho_{\mathcal C_n}\otimes\rho_E^{\otimes n}
\right).
\end{equation*}

As announced in the Introduction, our first main result determines the
exact strong converse exponent of quantum soft covering for i.i.d. random
codebooks under the sandwiched R{\'e}nyi divergence of every order
$\alpha\in[\frac{1}{2},\infty)$. To the best of our knowledge, this is
the first exact characterization of a strong converse exponent in the
quantum soft-covering setting. For convenience, we restate the result
below, separately for the two regimes $\alpha\in[1,\infty)$ and
$\alpha\in[\frac{1}{2},1)$.

\begin{theorem}\label{thm:sc-sc}
Let $\rho_{XE}=\sum_{x\in \mathcal{X}}P_{X}(x)|x\rangle\langle x|\otimes\rho_{E}^{x}$
be a C-Q state and rate $R\geq0$. For any $\alpha \in [1,\infty)$, we have
\begin{equation*}
\varGamma_{\rm sc}^{(\alpha)}(\rho_{XE},R)=|I_{\alpha}(X:E)_{\rho_{XE}}-R|^{+}.
\end{equation*}
\end{theorem}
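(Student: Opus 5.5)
We need to prove $\varGamma_{\rm sc}^{(\alpha)}(\rho_{XE},R)=|I_\alpha(X:E)_{\rho_{XE}}-R|^+$ for $\alpha\ge1$, and the plan is to establish the two inequalities separately. For $\alpha=1$ the expected relative entropy is $\mathbb E_{\mathcal C}D(\rho^{\mathcal C}_{E^n}\|\rho_E^{\otimes n})$, which is handled by the same scheme with $Q_\alpha$ replaced by relative entropy (or by letting $\alpha\downarrow1$ and using continuity), so we focus on $\alpha>1$. In that case $\varGamma$ is $\liminf\frac{1}{n(\alpha-1)}\log\mathbb E_{\mathcal C}Q_\alpha(\rho^{\mathcal C}_{E^n}\|\rho_E^{\otimes n})$.

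\emph{Lower bound (converse).} Nonnegativity is immediate. For the other term we use that $Q_\alpha(\cdot\|\sigma)$ is superadditive on positive operators when $\alpha>1$. Writing $\tau_m:=\rho_E^{-\frac{\alpha-1}{2\alpha}}\rho_{E^n}^{X^n(m)}\rho_E^{-\frac{\alpha-1}{2\alpha}}$ (tensor powers understood), we have $\operatorname{Tr}(\sum_m\tau_m)^\alpha\ge\sum_m\operatorname{Tr}\tau_m^\alpha$, which follows from $\operatorname{Tr}(A+B)^\alpha\ge\operatorname{Tr}A^\alpha+\operatorname{Tr}B^\alpha$ for $A,B\ge0$ and $\alpha\ge1$. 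Hence
\[
\mathbb E_{\mathcal C}Q_\alpha(\rho^{\mathcal C}_{E^n}\|\rho_E^{\otimes n})\ge M^{1-\alpha}\,\mathbb E_{X^n}Q_\alpha(\rho_{E^n}^{X^n}\|\rho_E^{\otimes n})=2^{-nR(\alpha-1)}\,2^{n(\alpha-1)I_\alpha(X:E)},
\]
where the last equality uses additivity (Lemma~\ref{lem:proverty}) and the identity $\sum_xP_X(x)Q_\alpha(\rho^x_E\|\rho_E)=Q_\alpha(\rho_{XE}\|\rho_X\otimes\rho_E)$ for C--Q states. This gives $\varGamma\ge I_\alpha-R$ already at finite $n$.

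\emph{Upper bound (achievability).} We must show $\frac1n D_\alpha\le|I_\alpha-R|^++o(1)$. The first step is to pinch: replace $\rho^{x^n}_{E^n}$ by $\mathcal E_{\rho_E^{\otimes n}}(\rho^{x^n}_{E^n})$, whose number of eigenvalues is polynomial in $n$. By the pinching inequality and operator monotonicity, $Q_\alpha$ of the codebook state changes by at most a factor $\mathrm{poly}(n)^{\alpha}$ (the original state is at most $\nu\cdot$ its pinching, and $\operatorname{Tr}f$ is monotone for monotone $f$), so after pinching everything commutes with $\sigma=\rho_E^{\otimes n}$. Next we use a Rosenthal/triangle-type bound. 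Set $Y_m:=\sigma^{-\frac{\alpha-1}{2\alpha}}\mathcal E(\rho^{X^n(m)})\sigma^{-\frac{\alpha-1}{2\alpha}}$ and $\bar Y:=\mathbb EY_m=\sigma^{1/\alpha}$. Write $\frac1M\sum_mY_m=\bar Y+\frac1M\sum_m(Y_m-\bar Y)$, and bound $\mathbb E\|\cdot\|_\alpha^\alpha$ by $2^{\alpha-1}$ times the sum of the two terms. For the fluctuation, in the commuting picture we apply a noncommutative Rosenthal or Burkholder inequality in $L_\alpha(\mathbb E\otimes\operatorname{Tr})$. With only $\mathrm{poly}(n)$ slack, this yields the estimate
\[
\mathbb E\Bigl\|\tfrac1M\textstyle\sum_m(Y_m-\bar Y)\Bigr\|_\alpha^\alpha\lesssim M^{1-\alpha}\,\mathbb E\|Y\|_\alpha^\alpha+\bigl(\text{a term controlled by the 2-norm/variance}\bigr).
\]
Since $\|\bar Y\|_\alpha^\alpha=\operatorname{Tr}\sigma=1$ and $\mathbb E\|Y\|_\alpha^\alpha$ equals $2^{n(\alpha-1)I_\alpha}$ up to pinching, the result is $Q_\alpha\lesssim\mathrm{poly}(n)\,(1+2^{n(\alpha-1)(I_\alpha-R)})$. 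This gives exactly $|I_\alpha-R|^+$.

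\emph{Main obstacle.} The hard step is the fluctuation bound for large $\alpha$. The variance term in Rosenthal's inequality, of the form $M^{-\alpha/2}\|(\mathbb E Y^2)^{1/2}\|_\alpha^\alpha$, must not dominate. Since the pinched operators commute with $\sigma$, my first attempt is to avoid Rosenthal entirely and use a cruder, elementwise argument. I would apply the method of types to $x^n$, so that on each type class $\mathcal E(\rho^{x^n})$ has controlled spectrum relative to $\sigma$. I would then use that, for $\alpha\ge1$, $\|\sum_m Y_m\|_\alpha\le\bigl(\sum_m\|Y_m\|_\alpha^\alpha\bigr)^{1/\alpha}+\|\text{typical part}\|$. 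Concretely, I would split each $Y_m$ via a spectral threshold $\gamma=2^{nR}\sigma$-scaled projector into a "large" part and a "small" part. For the large part, $\operatorname{Tr}(\sum)^\alpha\le M^{\alpha-1}\sum\operatorname{Tr}(\cdot)^\alpha$. For the small part, the operator is bounded by the threshold and its expectation is at most $\bar Y$, so $\operatorname{Tr}(\cdot)^\alpha$ is $O(1)$ up to a variance correction handled by Jensen's inequality. Optimizing over the threshold by a Chernoff-type argument reproduces $I_\alpha$ via its variational form, and this optimization is where I expect most of the technical work.
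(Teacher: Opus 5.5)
\paragraph{Where the proposal is complete and where it is not.} Your converse (McCarthy's inequality plus additivity) is exactly the paper's argument. For $\alpha\in(1,2]$ your fluctuation route is also sound: for $p=\alpha\le 2$ the square-function term can be dropped, since $S_p$ has type $p$. The paper argues differently there, via pinching and operator concavity of $t^{\alpha-1}$.

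The genuine gap is the achievability bound for $\alpha>2$, which is precisely the step you leave open. The noncommutative Rosenthal inequality produces the term $M^{-\alpha/2}\operatorname{Tr}(\mathbb E Y^2)^{\alpha/2}$, and nothing in your proposal controls it. The only bound at hand is trace Jensen, $\operatorname{Tr}(\mathbb E Y^2)^{\alpha/2}\le\mathbb E\operatorname{Tr}Y^\alpha$. It gives $M^{-\alpha/2}2^{n(\alpha-1)I_\alpha}$, and since $M^{-\alpha/2}\gg M^{1-\alpha}$ when $\alpha>2$, this yields only the non-tight exponent $|I_\alpha-\frac{\alpha}{2(\alpha-1)}R|^+$.

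Your fallback does not close the gap either, for two reasons. First, Jensen's inequality for the convex functional $\operatorname{Tr}(\cdot)^\alpha$ points the wrong way for an upper bound. Second, after thresholding, the small parts are still independent non-commuting matrices, and their $\alpha$-th Schatten moment needs a genuine matrix moment inequality; operator-norm Chernoff bounds lose a factor $|E|^n$ when converted to traces. So the same square-function term reappears, and your claimed estimate $Q_\alpha\lesssim\mathrm{poly}(n)(1+2^{n(\alpha-1)(I_\alpha-R)})$ remains unproved.

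The paper handles $\alpha>2$ by importing the one-shot bound of Lemma~\ref{prop:two-sided-one-shot-detailed}, whose fluctuation term is $M^{-1}2^{I_2^{(\alpha)}}$. It then absorbs that term using $I_2^{(\alpha)}\le I_\alpha$ (Lemma~\ref{cor:additivity-ordering-information}).

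\paragraph{How your route can be repaired.} What is missing is an interpolation bound for the square function; with it, your route works and even makes pinching unnecessary. Take random $Y\ge0$ (your $Y$, pinched or not) and $\alpha>2$, and set $c=\frac{\alpha-2}{2(\alpha-1)}$, so that $Y^2=Y^cY^{2-2c}Y^c$. Test $\mathbb E Y^2$ against $B\ge0$ with $\|B\|_{\alpha/(\alpha-2)}\le1$ and combine three steps:
\begin{itemize}
\item Hölder's inequality in $L_p(\mathbb E\otimes\operatorname{Tr})$ with exponents $\bigl(\frac{2(\alpha-1)}{\alpha-2},\alpha-1,\frac{2(\alpha-1)}{\alpha-2}\bigr)$;
\item the Araki--Lieb--Thirring inequality $\operatorname{Tr}(Y^cBY^c)^{1/(2c)}\le\operatorname{Tr}B^{1/(2c)}Y$;
\item Hölder once more.
\end{itemize}
Together these give $\operatorname{Tr}(\mathbb E Y^2)^{\alpha/2}\le\|\mathbb E Y\|_\alpha^{\alpha(1-\theta)}(\mathbb E\|Y\|_\alpha^\alpha)^{\theta}$ with $\theta=\frac{\alpha}{2(\alpha-1)}$. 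Since $\|\mathbb E Y\|_\alpha^\alpha=\operatorname{Tr}\rho_E=1$ and $M^{-\alpha/2}=(M^{1-\alpha})^\theta$, the square-function term is at most $\max\{1,M^{1-\alpha}\mathbb E\|Y\|_\alpha^\alpha\}$. This yields $\mathbb E_{\mathcal C_n}Q_\alpha\le C_\alpha\max\{1,2^{n(\alpha-1)(I_\alpha-R)}\}$, as needed.

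\paragraph{A smaller point at $\alpha=1$.} Letting $\alpha\downarrow1$ only gives the upper bound, because monotonicity in the order goes the wrong way for the lower bound. So $\varGamma^{(1)}_{\rm sc}\ge I(X:E)-R$ needs its own argument. One option is the paper's chain rule, $I(\mathcal C_nJ:E^n)=nI(X:E)$ together with $I(J:E^n|\mathcal C_n)\le nR$. Another is superadditivity of $\operatorname{Tr}\,t\log t$, which follows from operator monotonicity of $\log$.
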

\begin{theorem}\label{thm:sc-sc-1/2}
Let $\rho_{XE}=\sum_{x\in \mathcal{X}}P_{X}(x)|x\rangle\langle x|\otimes\rho_{E}^{x}$
be a C-Q state and rate $R\geq0$. For any $\alpha \in [\frac{1}{2},1)$, we have
\begin{equation*}
\varGamma_{\rm sc}^{(\alpha)}(\rho_{XE},R)=\sup_{\alpha\leq a\leq1}\frac{\alpha\left(1-a\right)}{a\left(1-\alpha\right)}\left\{\widetilde{I}_a^{\frac{\alpha-a}{\alpha(1-a)}}(X:E)_{\rho_{XE}}-R\right\}.
\end{equation*}
\end{theorem}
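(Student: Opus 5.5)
The proof splits into an achievability bound (upper bound on $\varGamma_{\rm sc}^{(\alpha)}$) and a converse (lower bound), both reduced to the exponential rate of $\mathbb E_{\mathcal C_n}Q_\alpha(\rho_{E^n}^{\mathcal C_n}\|\rho_E^{\otimes n})$. Since $\alpha<1$, the prefactor $\frac{1}{\alpha-1}$ is negative. A lower bound on the exponent therefore amounts to an upper bound on the expected order-$\alpha$ fidelity, and an upper bound on the exponent amounts to a lower bound on it.

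\emph{Step 1: reduction to commuting operators.} I would first pinch. Write $\sigma=\rho_E^{\otimes n}$ and let $\mathcal E_\sigma$ denote the pinching with respect to $\sigma$. Since $\nu(\sigma)\le (n+1)^{|E|}$, the pinching inequality together with operator monotonicity and concavity of $t\mapsto t^\alpha$ for $\alpha\in[\frac12,1)$ shows that replacing $\rho_{E^n}^{\mathcal C_n}$ by $\mathcal E_\sigma(\rho_{E^n}^{\mathcal C_n})$ changes $Q_\alpha$ only by polynomial factors. After this, $Q_\alpha$ becomes $\|\sigma^{\frac{1-\alpha}{2\alpha}}\rho^{\mathcal C}\sigma^{\frac{1-\alpha}{2\alpha}}\|_\alpha^\alpha$. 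Here $\rho^{\mathcal C}=2^{-nR}\sum_m \rho^{X^n(m)}$ is an empirical average of i.i.d. random operators with mean $\sigma$.

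\emph{Step 2: the $K$-functional.} The key device, announced in the abstract, is the $K$-functional of the sum of i.i.d. positive random matrices. For $p=\alpha<1$ we would use a quasi-norm analogue. The idea is to split each summand $Y_m=\sigma^{\frac{1-\alpha}{2\alpha}}\rho^{X^n(m)}\sigma^{\frac{1-\alpha}{2\alpha}}$ into a ``typical/small'' part and a ``large'' part, using the spectral decomposition relative to $\sigma$ and a threshold $2^{n r}$:
\begin{equation*}
\mathbb E\Bigl\|\sum_m Y_m\Bigr\|_\alpha^\alpha \approx \inf_{A+B}\Bigl\{\bigl\|\mathbb E\textstyle\sum_m A_m\bigr\|_\alpha^\alpha+\mathbb E\textstyle\sum_m\|B_m\|_\alpha^\alpha\Bigr\}.
\end{equation*}
The first term is controlled by concavity and Jensen (the averaged small part), and the second by subadditivity of $\|\cdot\|_\alpha^\alpha$ for $\alpha\le 1$ (the rare large terms). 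Establishing this two-sided equivalence up to $2^{o(n)}$ factors is what I expect to be the main obstacle. The lower bound, showing that no better decomposition exists, is the hard direction. I would try a noncommutative Rosenthal/Johnson--Schechtman-type inequality in the pinched (commuting-with-$\sigma$) algebra, together with method of types on $x^n$ to reduce the infimum to a one-parameter family of thresholds.

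\emph{Step 3: evaluating the optimization.} The threshold decomposition is evaluated by a type-class/Hölder optimization that introduces a free state $\sigma_E$ and an auxiliary order $a\in[\alpha,1]$. Interpolating between the $\alpha$-norm and the $1$-norm via the Riesz--Thorin-type exponent gives exactly the weights $\lambda_{\alpha,a}=\frac{\alpha-a}{\alpha(1-a)}$ and the prefactor $\frac{\alpha(1-a)}{a(1-\alpha)}$. The supremum over $\sigma_E$ in \eqref{def:club-mutual} appears as the dual (variational) form of the $a$-norm. Additivity of these quantities under tensor powers, in the spirit of Lemma~\ref{lem:proverty}, then gives single-letter expressions. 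Finally, taking the optimum over the threshold rate $r$ corresponds to the supremum over $a$. The boundary cases $a=\alpha$ and $a=1$ reproduce the trivial bounds: the Rényi bound $I_\alpha-R$, obtained by keeping everything in the large part, and $0$, obtained since $Q_\alpha\le1$. This matches the stated formula and shows the right-hand side is attained in both directions.
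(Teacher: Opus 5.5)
Your outline has the right shape: the easy inequality $\mathbb E_{\mathcal C}Q_\alpha\le\inf\{\cdots\}$ via Jensen and McCarthy, a $K$-functional in the middle, a Rosenthal-type inequality for the hard direction, and the correct endpoint values $I_\alpha-R$ at $a=\alpha$ and $0$ at $a=1$. However, the two steps that actually prove $\varGamma_{\rm sc}^{(\alpha)}\le\sup_a(\cdots)$ are only announced, and the tools you name are not adequate as stated.

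The first missing step is the lower bound of $\mathbb E_{\mathcal C}Q_\alpha$ by a $K$-functional. You want a ``quasi-norm analogue'' of Rosenthal at $p=\alpha<1$ for $\sum_mY_m$ with $Y_m\geq0$, but you neither supply nor cite such a noncommutative inequality. The available one (Junge--Xu, $1<p<2$) is for mean-zero summands. Your $Y_m$ are not centred, and $\|\sum_m\varepsilon_mY_m\|_\alpha\neq\|\sum_mY_m\|_\alpha$, so you cannot symmetrize. The missing idea is to pass to square roots. Take $A_x=(\rho_E^x)^{1/2}\rho_E^{\frac{1-\alpha}{2\alpha}}$ and the column $Z_{\mathcal C}=M^{-1/2}\sum_mA_{X_m}\otimes e_{m1}$. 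Then $Z_{\mathcal C}^*Z_{\mathcal C}=\rho_E^{\frac{1-\alpha}{2\alpha}}\rho_E^{\mathcal C}\rho_E^{\frac{1-\alpha}{2\alpha}}\otimes e_{11}$, hence $Q_\alpha(\rho_E^{\mathcal C}\|\rho_E)=\|Z_{\mathcal C}\|_{2\alpha}^{2\alpha}$. This moves the problem to $L_p$ with $p=2\alpha\in(1,2)$. Left multiplication by the unitary $\sum_m\varepsilon_me_{mm}$ makes Rademacher symmetrization exact. The reverse Rosenthal inequality then applies; after absorbing the row term into the diagonal one by concavity and averaging the local decompositions over $m$, it yields $[\mathbb E_{\mathcal C}Q_\alpha]^{1/(2\alpha)}\geq c_\alpha\inf_{A_X=U_X+V_X}\{M^{\frac{1-\alpha}{2\alpha}}\|U\|_{L_{2\alpha}}+\|V\|_{L^c_{2\alpha}}\}$. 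This decomposes the square root into an $L_{2\alpha}$ part and a conditional-column part, which is what puts you in a Banach couple where interpolation theory applies.

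The second missing step is the exact exponential rate of this $K$-functional on tensor powers, which you hope to get from spectral thresholds and the method of types. Types only handle the classical index $x^n$. The infimum runs over arbitrary operator decompositions on $E^{\otimes n}$, and the $A_{x^n}$ do not commute with one another. Pinching by $\rho_E^{\otimes n}$ does not change this, since it leaves exponentially large non-commutative blocks. (Incidentally, the direction $Q_\alpha(\mathcal E_\sigma(\rho)\|\sigma)\le\nu(\sigma)^{1-\alpha}Q_\alpha(\rho\|\sigma)$ needed for the converse comes from writing the pinching as a unitary mixture and using McCarthy, not from operator monotonicity.) Nothing in your sketch shows that a one-parameter family of thresholds is asymptotically optimal, so it gives no lower bound on the $K$-functional.

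The paper never exhibits an optimizer. It proceeds as follows:
\begin{itemize}
\item It shows that $k(r)=\lim_n\frac1n\log K_n(2^{nr})$ exists and is concave, by supermultiplicativity from the dual description of the $K$-functional.
\item It proves $\sup_r\{k(r)-(1-\theta)r\}=\log\|A_X\|_{[L_{2\alpha},L^c_{2\alpha}]_\theta}$. Complex interpolation gives one inequality and the embedding $(\cdot,\cdot)_{\theta,1}\subset[\cdot,\cdot]_\theta$ gives the other; both use tensor multiplicativity of the interpolation norms via Calder\'on duality.
\item It then inverts this Legendre transform.
\end{itemize}
The club-sandwiched quantity enters through the Junge--Parcet identification of $[L_{2\alpha},L^c_{2\alpha}]_{\theta_a}$, with $\theta_a=\frac{a-\alpha}{a(1-\alpha)}$, with the amalgamated space $L_\infty L_{2a}L_{1/r_a}$. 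The norm of that space is the factorization $\inf_\sigma\bigl(\mathbb E\Tr(\sigma^{-r_a}A_X^*A_X\sigma^{-r_a})^a\bigr)^{1/(2a)}$ with $r_a=\frac{a-\alpha}{2a\alpha}$, and this is where $\sigma_E$ and the weights come from.

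Your plan also ignores $\alpha=\frac12$, where $p=2\alpha=1$ and the Rosenthal step is unavailable. The paper obtains this case by letting $\beta\downarrow\frac12$, using monotonicity of $D_\beta$ in $\beta$ and joint upper semicontinuity of the objective in $(\beta,a)$.

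For the other direction, $\varGamma_{\rm sc}^{(\alpha)}\ge\sup_a(\cdots)$, no thresholds are needed. Inserting $\sigma_E^{r_a}\sigma_E^{-r_a}$ on both sides and applying H\"older from the $\alpha$- to the $a$-quasinorm, then Jensen and McCarthy, gives the one-shot bound directly.
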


\subsection{Privacy amplification and main results}
Let $\rho_{XE}=\sum_{x\in\mathcal{X}} P_X(x)\, |x\rangle\langle x| \otimes \rho_E^x$
be a C-Q state.
The goal of quantum privacy amplification is to extract from $X$ a key that is approximately uniform and independent of the quantum system $E$. To this end, let $f:\mathcal{X}\to\mathcal{Z}$ be a hash function.
The C-Q state induced by hash function $f$ is
\begin{equation*}
\mathcal{R}_{f}(\rho_{XE})
=\sum_{x\in\mathcal{X}} P_X(x)\, |f(x)\rangle\langle f(x)|_Z \otimes \rho_E^x.
\end{equation*}
The ideal target state is $
\frac{\mathbbm{1}_{\mathcal{Z}}}{|\mathcal{Z}|}\otimes \rho_E$ and $
\rho_E=\sum_{x\in\mathcal{X}}P_X(x)\rho_E^x.$ To quantify the performance of quantum privacy amplification, we measure the discrepancy via the sandwiched Rényi divergence of order-$\alpha$. That is
\begin{equation*}
D_\alpha\left(\mathcal{R}_{f}(\rho_{XE})\Big\|\frac{\mathbbm{1}_{\mathcal{Z}}}{|\mathcal{Z}|}\otimes \rho_E\right).
\end{equation*}
We consider the n-shot extension of quantum privacy amplification. Let
\begin{equation*}
f_n:\mathcal{X}^n \to \mathcal{Z}_n=\{1,2,\cdots,2^{nR}\},
\end{equation*}
where $R\ge 0$ is the key rate. The induced state is
\begin{equation*}
\mathcal{R}_{f_n}(\rho_{XE}^{\otimes n})
=\sum_{x^n\in\mathcal{X}^n} P_X^{\otimes n}(x^n)\,
|f_n(x^n)\rangle\langle f_n(x^n)| \otimes \rho_{E^n}^{x^n},
\end{equation*}
where $\rho_{E^n}^{x^n}=\bigotimes_{i=1}^n \rho_E^{x_i}$.
The performance in the $n$-shot setting is characterized by
\begin{equation*}
D_\alpha\left(\mathcal{R}_{f_n}(\rho_{XE}^{\otimes n})\Big\|\frac{\mathbbm{1}_{\mathcal{Z}_n}}{|\mathcal{Z}_n|}\otimes \rho_E^{\otimes n}\right).
\end{equation*}
The strong converse exponent of quantum privacy amplification is defined as
\begin{align}\label{def:str-pa}
\varGamma_{\rm pa}^{(\alpha)}(\rho_{XE},R)
:=&\liminf_{n\to\infty}
\frac{1}{n} \min_{f_n:\mathcal{X}^{n}\to \mathcal{Z}_n}D_\alpha\left(\mathcal{R}_{f_n}(\rho_{XE}^{\otimes n})\Big\|\frac{\mathbbm{1}_{\mathcal{Z}_n}}{|\mathcal{Z}_n|}\otimes \rho_E^{\otimes n}\right).
\end{align}

As announced in the Introduction, our second main result determines the
exact strong converse exponent of quantum privacy amplification under
the sandwiched R{\'e}nyi divergence for every order
$\alpha\in(2,\infty)$. For convenience, we restate the result below.
\begin{theorem}\label{thm:5-15}
Let $\alpha\in(2,\infty)$, $\rho_{XE}=\sum_{x\in\mathcal{X}} P_X(x)\, |x\rangle\langle x| \otimes \rho_E^x$
be a C-Q state and $R\geq 0$. We have
\begin{equation}
\varGamma_{\rm pa}^{(\alpha)}(\rho_{XE},R)= |R-H_\alpha(X|E)_{\rho_{XE}}|^+.\label{result:PA}
\end{equation}
\end{theorem}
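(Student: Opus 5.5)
We prove the two inequalities separately: a converse lower bound valid for every hash function, and an achievability upper bound from a random hash.

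\emph{Converse.} Fix $f_n$ and write $\tau_{ZE}=\mathcal R_{f_n}(\rho_{XE}^{\otimes n})$ with $|\mathcal Z_n|=2^{nR}$. Since $\mathcal R_{f_n}$ is a channel, the sandwiched R\'enyi conditional entropy obeys data processing, so $H_\alpha(Z|E)_\tau\le H_\alpha(X^n|E^n)_{\rho^{\otimes n}}=nH_\alpha(X|E)_\rho$ by additivity (Lemma~\ref{lem:proverty}). Moreover
\[
D_\alpha\!\left(\tau_{ZE}\,\middle\|\,\tfrac{\mathbbm 1_Z}{|\mathcal Z_n|}\otimes\rho_E^{\otimes n}\right)=\log|\mathcal Z_n|-H_\alpha(Z|E)_\tau\ge nR-nH_\alpha(X|E)_\rho .
\]
The divergence is also nonnegative, because $\tau_E=\rho_E^{\otimes n}$ and the target is a normalized state. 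Together these give the lower bound $|R-H_\alpha(X|E)_\rho|^+$.

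\emph{Achievability.} If $R\le H_\alpha(X|E)$, we need a vanishing exponent; otherwise we need exponent $R-H_\alpha$. We reduce to the case of large $R$ by monotonicity in $R$ of the optimal value, possibly after composing with a further hash.

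For a random universal$_2$ (or uniformly random) hash $f$, set $\sigma=\frac{\mathbbm 1}{|\mathcal Z|}\otimes\rho_E$. We then bound
\[
\mathbb E_f\, Q_\alpha(\mathcal R_f(\rho)\|\sigma)=\mathbb E_f\sum_z\operatorname{Tr}\Bigl(|\mathcal Z|^{\frac{\alpha-1}{\alpha}}\rho_E^{\frac{1-\alpha}{2\alpha}}\textstyle\sum_{x:f(x)=z}P(x)\rho_E^x\,\rho_E^{\frac{1-\alpha}{2\alpha}}\Bigr)^{\alpha}.
\]
Let $A_x:=P(x)\rho_E^{\frac{1-\alpha}{2\alpha}}\rho_E^x\rho_E^{\frac{1-\alpha}{2\alpha}}$. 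For $\alpha>2$, write $\|\sum_x \mathbf 1[f(x)=z]A_x\|_\alpha^\alpha$ and use a noncommutative Rosenthal/triangle-type inequality for sums of independent positive random matrices, in the form $\mathbb E\|\sum Y_x\|_\alpha^\alpha\lesssim \bigl(\|\mathbb E\sum Y_x\|_\alpha+(\mathbb E\sum_x\|Y_x\|_\alpha^\alpha)^{1/\alpha}\bigr)^\alpha$ with constants depending only on $\alpha$.

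After summing over $z$, the two terms are as follows. The mean term is $|\mathcal Z|^{\alpha-1}|\mathcal Z|\cdot|\mathcal Z|^{-\alpha}\operatorname{Tr}\rho_E=1$, contributing exponent $0$. The diagonal term is $|\mathcal Z|^{\alpha-1}\sum_x\operatorname{Tr}A_x^\alpha=2^{(\alpha-1)(nR-nH_\alpha(X|E))}$ in the $n$-fold setting, since $\sum_x\operatorname{Tr}A_x^\alpha=Q_\alpha(\rho_{XE}\|\mathbbm 1\otimes\rho_E)$. Hence $\frac1{\alpha-1}\log\mathbb E_fQ_\alpha\le n|R-H_\alpha|^++O(1)$. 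Because $\alpha>1$, Jensen lets us pick a single $f_n$ achieving at most the average, which yields the matching upper bound.

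\emph{Main obstacle.} The hard step is the moment inequality for $\alpha>2$ with constants independent of $n$ and $|E|^n$. Non-identity constants are harmless after dividing by $n$, but dimension dependence is not. We would use the noncommutative Rosenthal inequality of Junge--Zeng for independent positive summands in $S_{\alpha}$, whose constants depend only on $\alpha$. An alternative is pinching with respect to $\rho_E^{\otimes n}$ (polynomial $\nu$), which commutes the weights so that a commutative Rosenthal bound applies. The range $\alpha>2$ is exactly where the diagonal term dominates the variance term, so no mixed-order quantity like $H_2^{(\alpha)}$ is needed.
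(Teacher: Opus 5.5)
Your overall plan works: the achievability half is correct and differs from the paper's route, but the converse relies on a justification that is wrong as stated, although the claim it supports is true.

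\emph{Converse.} The inequality $H_\alpha(Z|E)_\tau\le nH_\alpha(X|E)_\rho$ holds, but it does not follow from data processing. The channel $\mathcal R_{f_n}$ acts on the \emph{conditioned} register. Data processing only yields $D_\alpha(\rho_{XE}^{\otimes n}\|\mathbbm 1\otimes\rho_E^{\otimes n})\ge D_\alpha(\tau_{ZE}\|\mathcal R_{f_n}(\mathbbm 1)\otimes\rho_E^{\otimes n})$, where $\mathcal R_{f_n}(\mathbbm 1)=\sum_z|f_n^{-1}(z)|\,|z\rangle\langle z|$ dominates $\mathbbm 1_Z$ on the support of $\tau_Z$. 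That is a lower bound on $-H_\alpha(X^n|E^n)$, whereas you need an upper bound. Indeed, channels acting on the conditioned side (for example unital ones) can increase conditional entropy.

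What makes your claim true is that $f_n$ is a deterministic function of a classical register. Set $B_{x^n}:=P_X^{\otimes n}(x^n)(\rho_E^{\otimes n})^{\frac{1-\alpha}{2\alpha}}\rho_{E^n}^{x^n}(\rho_E^{\otimes n})^{\frac{1-\alpha}{2\alpha}}\ge0$. McCarthy's inequality (Lemma~\ref{Mc}) then gives
\[
Q_\alpha\bigl(\tau_{ZE}\big\|\mathbbm 1_Z\otimes\rho_E^{\otimes n}\bigr)=\sum_z\operatorname{Tr}\Bigl(\sum_{x^n:f_n(x^n)=z}B_{x^n}\Bigr)^\alpha\ge\sum_{x^n}\operatorname{Tr}B_{x^n}^\alpha=Q_\alpha\bigl(\rho_{XE}^{\otimes n}\big\|\mathbbm 1\otimes\rho_E^{\otimes n}\bigr).
\]
For $\alpha>1$ this is exactly the inequality you need. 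The paper uses the same device for the soft-covering lower bound; for privacy amplification it simply cites Li--Yao--Hayashi.

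\emph{Achievability.} Here your route is genuinely different from the paper's, and it is correct. The paper imports from the authors' earlier work a three-term one-shot bound for random binning. That bound contains a variance term governed by the mixed-order entropy $H_2^{(\alpha)}(X|E)$ and a cross term. The paper then removes both using the ordering $H_\alpha\le H_2^{(\alpha)}$.

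You instead apply, for each $z$, a two-term Rosenthal inequality for independent PSD matrices, which absorbs the variance term from the outset. Your evaluations of the mean term ($=1$) and the diagonal term ($=2^{(\alpha-1)(R-H_\alpha)}$) are right.

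Your ``main obstacle'' is not actually an obstacle: this inequality holds for every $q\ge2$ with constants depending only on $q$. It is the matrix Rosenthal inequality for PSD summands (Chen--Gittens--Tropp; Mackey et al.). To check it, set
\begin{itemize}
\item $T:=(\mathbb E\|\sum_kY_k\|_q^q)^{1/q}$,
\item $\mu:=\|\sum_k\mathbb EY_k\|_q$,
\item $\delta:=(\sum_k\mathbb E\|Y_k\|_q^q)^{1/q}$.
\end{itemize}
Symmetrization and noncommutative Khintchine give $T\le\mu+c\sqrt q\,(\mathbb E\|\sum_kY_k^2\|_{q/2}^{q/2})^{1/q}$. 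Next, write $\sum_kY_k^2\otimes e_{11}=W^*\Lambda W$ with $W=\sum_kY_k^{1/2}\otimes e_{k1}$ and $\Lambda=\sum_kY_k\otimes e_{kk}$. H\"older then gives the deterministic bound $\|\sum_kY_k^2\|_{q/2}\le\|\sum_kY_k\|_q(\sum_k\|Y_k\|_q^q)^{1/q}$. Cauchy--Schwarz yields $T\le\mu+c\sqrt q\,(T\delta)^{1/2}$, hence $T\le2\mu+c^2q\delta$.

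Compared with the paper, your argument is more self-contained. It needs neither $H_2^{(\alpha)}$ nor the ordering lemma, and it works verbatim for all $\alpha\ge2$. The paper's finer three-term bound retains the variance term, which matters for reliability-type questions but is superfluous for the strong converse exponent.

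Two side remarks. First, the reduction ``to large $R$ by composing with a further hash'' is unnecessary, because the bound $C_\alpha(1+2^{n(\alpha-1)(R-H_\alpha)})$ already covers both regimes. Second, the pinching alternative would not work as stated. Pinching by $\rho_E^{\otimes n}$ makes the summands commute with $\rho_E^{\otimes n}$ but not with one another, so a commutative Rosenthal bound still does not apply.
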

\begin{remark}
In \cite{LYH2023tight}, Li, Yao and Hayashi proved that identity \eqref{result:PA} holds for all $\alpha\in[1,2]$. Rubboli and Tomamichel \cite{rubboliTomamichel2026composable} derived the strong converse exponent for privacy amplification in terms of the order-$\frac{1}{2}$ fidelity. In fact, their method can be trivially extended to the full range $\alpha\in(\frac{1}{2},1)$. Combining the results from \cite{LYH2023tight}, \cite{rubboliTomamichel2026composable} and Theorem~\ref{thm:5-15}, we thus obtain the strong converse exponents of privacy amplification for all orders $\alpha\in[\frac{1}{2},\infty)$. Furthermore, our proposed approach for soft covering with $\alpha\in(\frac{1}{2},1)$ can also be adopted to derive the strong converse exponent of privacy amplification for $\alpha\in(\frac{1}{2},1)$.
\end{remark}

\section{Strong Converse Exponent of Soft Covering for $\alpha\in[1/2,1)$}
In this section, we prove Theorem~\ref{thm:sc-sc-1/2}, which
characterizes the strong converse exponent of quantum soft covering for
R{\'e}nyi orders $\alpha\in[1/2,1)$. We first establish the exponential rate of the $K$-functional related to the noncommutative $L_p$-space and conditional column $L_p$-space. We then use the result to give an upper bound of the strong converse exponent of quantum soft covering. Finally, we prove the lower bound for such exponent. 
\subsection{The $K$-functional method}
\label{K-func}

In this subsection, we introduce the $K$-functional associated with a
noncommutative $L_p$ space and a conditional column $L_p$ space.

The tracial noncommutative $L_p$-spaces used below are standard; see
\cite{PisierXu2003} and \cite[Section~1.1]{JungeParcet2010mixed}.  The precise
conditional row and column spaces, their completions, and their isometric
dualities are given in \cite[Section~1.3]
{JungeParcet2010mixed}; see also the original construction in
\cite{Junge2002}.  The more general conditional $L_p$
framework is developed in \cite[Chapter~4]{JungeParcet2010mixed}.  For Peetre's
$K$-functional and the interpolation convention used below, see
\cite[Section~3.1]{BerghLofstrom1976interpolation}.

Let $(\mathcal M,\tau_{\mathcal M})$ be a finite von Neumann algebra
equipped with a normal faithful finite trace, let
$\mathcal N\subseteq\mathcal M$ be a von Neumann subalgebra, and let
\[
    \mathcal E:\mathcal M\longrightarrow\mathcal N
\]
be the normal trace-preserving conditional expectation.  We write
$\tau_{\mathcal N}:=\tau_{\mathcal M}|_{\mathcal N}$.  For
$1\leq p<\infty$, the noncommutative $L_p$-space
$L_p(\mathcal M)$ is the completion of $\mathcal M$ with respect to
\[
    \|x\|_{L_p(\mathcal M)}
    :=
    \bigl(\tau_{\mathcal M}(|x|^p)\bigr)^{1/p}.
\]
The conditional column space $L_p^c(\mathcal M,\mathcal E)$ is the
completion of $\mathcal M$ with respect to
\[
    \|x\|_{L_p^c(\mathcal M,\mathcal E)}
    :=
    \bigl\|\mathcal E(x^*x)^{1/2}\bigr\|_{L_p(\mathcal N)}
    =\left(\tau_{\mathcal{N}}\big(\mathcal E(x^*x)^{p/2}\big)\right)^{1/p}.
\]
% When $1\leq p<2$, the last expression is first defined for
% $x\in\mathcal M$, and $L_p^c(\mathcal M,\mathcal E)$ is then obtained
% by completion.  In particular, one must not tacitly regard
% $\mathcal E$ as a continuous map on all of $L_{p/2}(\mathcal M)$,
% because $p/2<1$.

For $1<p<\infty$, let $p'$ be determined by
$1/p+1/p'=1$.  With respect to the anti-linear trace pairing
\[
    \langle x,G\rangle
    :=
    \tau_{\mathcal M}(x^*G),
\]
one has the isometric dualities
\[
    L_p(\mathcal M)^*=L_{p'}(\mathcal M),
    \qquad
    L_p^c(\mathcal M,\mathcal E)^*
    =L_{p'}^c(\mathcal M,\mathcal E).
\]
The second identity is exactly the conditional-column duality in
\cite[Section~1.3]{JungeParcet2010mixed}.

On the $n$-fold tensor product, define
\begin{align*}
D_{p,n}(x)=
\|x\|_{L_p(\mathcal M^{\bar\otimes n})},\quad
C_{p,n}(x)
=
\|x\|_{
L_p^c(\mathcal M^{\bar\otimes n},\mathcal E^{\otimes n})
}.
\end{align*}
Both norms are multiplicative under tensor products. Namely,
\begin{align*}
D_{p,n+m}(x\otimes y)
=
D_{p,n}(x)D_{p,m}(y),\quad
C_{p,n+m}(x\otimes y)
=
C_{p,n}(x)C_{p,m}(y).
\end{align*}
We also simply denote
$$ D_{p,n}=L_p(\mathcal M^{\bar\otimes n}),\quad C_{p,n}=L_p^c(\mathcal M^{\bar\otimes n},\mathcal E^{\otimes n}).  $$
Both $D_{p,n}$ and $C_{p,n}$ are reflexive.

Next, we turn to the case $1<p\leq2$. Since $p/2\in(0,1]$, operator
concavity and the trace-preserving property of $\mathcal E^{\otimes n}$
give
\begin{align*}
C_{p,n}(x)^p=
\tau_{\mathcal N}^{\otimes n}
\left[
\bigl(\mathcal E^{\otimes n}(x^*x)\bigr)^{p/2}
\right]
\geq
\tau_{\mathcal N}^{\otimes n}
\left[
\mathcal E^{\otimes n}
\bigl((x^*x)^{p/2}\bigr)
\right]
=
\tau_{\mathcal M}^{\otimes n}
\bigl((x^*x)^{p/2}\bigr)
=
D_{p,n}(x)^p.
\end{align*}
Thus
$
D_{p,n}(x)\leq C_{p,n}(x)$ and one has the embedding $C_{p,n}\hookrightarrow D_{p,n}$.

For $t>0$, the corresponding $K$-functional is defined by
\begin{equation*}
K_{p,n,t}(x)
:=
\inf_{x=y+z}
\left\{
tD_{p,n}(y)+C_{p,n}(z)
\right\},\quad x\in D_{p,n}+C_{p,n},
\end{equation*}
where $K_{p,n,t}(\cdot)$ is the norm of the sum space $tD_{p,n}+C_{p,n}$. We simply denote $K_{p,n,t}=tD_{p,n}+C_{p,n}$.
\begin{remark}
With the standard Peetre convention
\[
    K(t,x;X_0,X_1)
    :=
    \inf_{x=x_0+x_1}
    \bigl\{\|x_0\|_{X_0}+t\|x_1\|_{X_1}\bigr\},
\]
see \cite[Section~3.1]{BerghLofstrom1976interpolation}, the functional used here is
precisely
\[
    K_{p,n,t}(x)
    =K(t,x;C_{p,n},D_{p,n})
    =t\,K(t^{-1},x;D_{p,n},C_{p,n}).
\]
Thus our convention is the standard Peetre $K$-functional applied to
the reversed compatible couple $(C_{p,n},D_{p,n})$.
\end{remark}

With respect to the duality pairing
\[
\langle x,G\rangle
:=
\tau_{\mathcal M^{\bar{\otimes} n}}(x^*G),
\]
the standard dualities are
\[
D_{p,n}^*
=
D_{p^\prime,n},\quad
C_{p,n}^*
=
C_{p^\prime,n}.
\]
% Consequently, the dual unit ball of the norm
% $K_{p,n,t}$ is given by
% \[
% D_{p',n}(G)\leq t,
% \qquad
% C_{p',n}(G)\leq1.
% \]
If $G$ satisfies $D_{p',n}(G)\leq t$ and $
C_{p',n}(G)\leq1$, then every decomposition
$x=y+z$ satisfies
\[
\begin{aligned}
|\langle x,G\rangle|
&\leq
|\langle y,G\rangle|
+
|\langle z,G\rangle|
\\
&\leq
D_{p,n}(y)D_{p',n}(G)
+
C_{p,n}(z)C_{p',n}(G)
\\
&\leq
tD_{p,n}(y)+C_{p,n}(z).
\end{aligned}
\]
Taking the infimum over all decompositions gives
$$ K_{p,n,t}(x)
\geq
\sup_G
\left\{
|\langle x,G\rangle|:
D_{p',n}(G)\leq t,\
C_{p',n}(G)\leq1
\right\}.$$
 Conversely, if $G\in D_{p^\prime,n}\cap C_{p^\prime,n}$ belongs to the dual unit ball of $K_{p,n,t}$, then for every $x$,
\[
|\langle x,G\rangle|
\leq
K_{p,n,t}(x)
\leq
tD_{p,n}(x),\quad
|\langle x,G\rangle|
\leq
K_{p,n,t}(x)
\leq
C_{p,n}(x),
\]
where we use the decompositions $x=x+0$ and
$x=0+x$. Thus
\[
D_{p',n}(G)\leq t,
\qquad
C_{p',n}(G)\leq1.
\]
Hence one can conclude that
\begin{equation}\label{dual-def}
K_{p,n,t}(x)
=
\sup_G
\left\{
|\langle x,G\rangle|:
D_{p',n}(G)\leq t,\
C_{p',n}(G)\leq1
\right\}.
\end{equation}
Applying the tensoring multiplicativity in
\eqref{dual-def} gives
\begin{equation}
K_{p,n+m,t_1t_2}(x\otimes y)
\geq
K_{p,n,t_1}(x)K_{p,m,t_2}(y),
\qquad t_1,t_2>0.
\label{eq:abstract-K-supermultiplicativity}
\end{equation}

% Next, we turn to the case $1<p\leq2$. Since $p/2\in(0,1]$, operator
% concavity and the trace-preserving property of $\mathcal E^{\otimes n}$
% give
% \begin{align*}
% C_{p,n}(x)^p=
% \tau_{\mathcal N}^{\otimes n}
% \left[
% \bigl(\mathcal E^{\otimes n}(x^*x)\bigr)^{p/2}
% \right]
% \geq
% \tau_{\mathcal N}^{\otimes n}
% \left[
% \mathcal E^{\otimes n}
% \bigl((x^*x)^{p/2}\bigr)
% \right]
% =
% \tau_{\mathcal M}^{\otimes n}
% \bigl((x^*x)^{p/2}\bigr)
% =
% D_{p,n}(x)^p.
% \end{align*}
% Thus
% $
% D_{p,n}(x)\leq C_{p,n}(x)$.
For $0<t\leq1$, every decomposition $x=y+z$ satisfies
\[
\begin{aligned}
tD_{p,n}(y)+C_{p,n}(z)
\geq
tD_{p,n}(y)+D_{p,n}(z)\geq
t\bigl(D_{p,n}(y)+D_{p,n}(z)\bigr)
\geq
tD_{p,n}(x).
\end{aligned}
\]
Taking the infimum one obtains $K_{p,n,t}(x)\geq tD_{p,n}(x)$. Moreover, the decomposition $x=x+0$ gives $K_{p,n,t}(x)\leq tD_{p,n}(x)$. Hence
\begin{equation}\label{small-t}
K_{p,n,t}(x)=tD_{p,n}(x),
\qquad 0<t\leq1.
\end{equation}
For $t\geq1$, a similar argument gives
\begin{equation*}
D_{p,n}(x)
\leq
K_{p,n,t}(x)
\leq
C_{p,n}(x),
\qquad t\geq1.
\end{equation*}

For $0<t_1\leq t_2$, every decomposition $x=y+z$ satisfies
\[
t_1D_{p,n}(y)+C_{p,n}(z)
\leq
t_2D_{p,n}(y)+C_{p,n}(z)
\leq
\frac{t_2}{t_1}
\left(
t_1D_{p,n}(y)+C_{p,n}(z)
\right).
\]
Taking the infimum over all decompositions yields
\begin{equation}
K_{p,n,t_1}(x)
\leq
K_{p,n,t_2}(x)
\leq
\frac{t_2}{t_1}K_{p,n,t_1}(x).
\label{eq:abstract-K-monotonicity}
\end{equation}

Finally, fix a nonzero $x\in K_{p,n,t}$ and define
\begin{equation*}
k_{p,n,r}(x)
:=
\frac{1}{n}
\log
K_{p,n,2^{nr}}(x^{\otimes n}).
\end{equation*}
By \eqref{eq:abstract-K-supermultiplicativity},
\[
\log
K_{p,n+m,2^{(n+m)r}}(x^{\otimes(n+m)})
\geq
\log K_{p,n,2^{nr}}(x^{\otimes n})+
\log K_{p,m,2^{mr}}(x^{\otimes m}).
\]
Fekete's lemma therefore implies that
\begin{equation*}
k_{p,r}(x)
:=
\lim_{n\to\infty}k_{p,n,r}(x)
=
\sup_{n\geq1}k_{p,n,r}(x).
\end{equation*}

Putting $t=2^{nr}$, the estimates on $K_{p,n,t}$ above give
\[
k_{p,n,r}(x)
=
r+\log D_{p,1}(x),
\qquad r\leq0,
\]
and
\[
\log D_{p,1}(x)
\leq
k_{p,n,r}(x)
\leq
\log C_{p,1}(x),
\qquad r\geq0.
\]
Thus $k_{p,n,r}(x)$ and $k_{p,r}(x)$ are finite. Moreover,
\eqref{eq:abstract-K-monotonicity} implies that, for $r_1\leq r_2$,
\begin{equation*}
0
\leq
k_{p,n,r_2}(x)-k_{p,n,r_1}(x)
\leq
r_2-r_1.
\end{equation*}
Hence both $r\mapsto k_{p,n,r}(x)$ and
$r\mapsto k_{p,r}(x)$ are nondecreasing and $1$-Lipschitz.

It remains to record the concavity of $k_{p,r}$. Let $\eta=j/(j+\ell)$, where $j,\ell\geq1$ are integers. Applying
\eqref{eq:abstract-K-supermultiplicativity} to $jN$ copies with
$t=2^{Nr_0}$ and $\ell N$ copies with $t=2^{Nr_1}$, we obtain
\begin{align*}
K_{p,(j+\ell)N,\,
2^{(j+\ell)N(\eta r_0+(1-\eta)r_1)}}
\bigl(x^{\otimes (j+\ell)N}\bigr)
\geq
K_{p,N,2^{Nr_0}}\bigl(x^{\otimes N}\bigr)^j
K_{p,N,2^{Nr_1}}\bigl(x^{\otimes N}\bigr)^\ell .
\end{align*}
Taking logarithms and dividing by $(j+\ell)N$ gives
\[
k_{p,(j+\ell)N,\eta r_0+(1-\eta)r_1}(x)
\geq
\eta k_{p,N,r_0}(x)
+(1-\eta)k_{p,N,r_1}(x).
\]
Letting $N\to\infty$ proves this inequality for the $k_{p,\eta r_0+(1-\eta)r_1}$
and every rational $\eta\in[0,1]$. Then its continuity with $r$ extends the result to every
$\eta\in[0,1]$. Therefore,
\begin{equation*}
k_{p,\eta r_0+(1-\eta)r_1}(x)
\geq
\eta k_{p,r_0}(x)
+
(1-\eta)k_{p,r_1}(x).
\end{equation*}
To summarize, $r\mapsto k_{p,r}(x)$ is finite, nondecreasing, $1$-Lipschitz, and concave.

% By construction, the common core
% $\mathcal M^{\overline\otimes n}$ is dense in both $D_{p,n}$ and
% $C_{p,n}$.  Moreover, for $1<p<\infty$, $D_{p,n}$ is reflexive, and
% the two isometric dualities
% \[
%     C_{p,n}^*=C_{p',n},
%     \qquad
%     C_{p',n}^*=C_{p,n}
% \]
% imply that $C_{p,n}$ is reflexive as well.  Thus the hypotheses of
% Calder\'on duality are satisfied, and
% \cite[Corollary~4.5.2]{BerghLofstrom1976interpolation} gives isometrically
% \[
%     [D_{p,n},C_{p,n}]_\theta^*
%     =
%     [D_{p',n},C_{p',n}]_\theta.
% \]

\begin{lemma}
\label{lem:interpolation-tensorization}
For every $n,m\geq1$, $0\leq\theta\leq1$, let $U \in K_{p,n,t}$
and $V\in K_{p,m,t}$. Then
\begin{equation}
\|U\otimes V\|_{[D_{p,n+m},C_{p,n+m}]_\theta}
=
\|U\|_{[D_{p,n},C_{p,n}]_\theta}
\|V\|_{[D_{p,m},C_{p,m}]_\theta}.
\label{eq:interpolation-tensorization}
\end{equation}
\end{lemma}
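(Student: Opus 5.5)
We prove the two inequalities separately. For ``$\leq$'' we use that complex interpolation is compatible with projective/bounded tensoring. For ``$\geq$'' we dualize and apply the same inequality to the dual couple.

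\textbf{Upper bound.} Fix $\varepsilon>0$ and choose analytic families $f$ and $g$ on the strip $S=\{0\leq\operatorname{Re}z\leq1\}$ with $f(\theta)=U$, $g(\theta)=V$, such that $\max_j\sup_s\|f(j+is)\|_{X_j}\leq(1+\varepsilon)\|U\|_\theta$, and similarly for $g$. Here $X_0=D_{p,\cdot}$ and $X_1=C_{p,\cdot}$. Everything is finite-dimensional, so all compatibility and density issues disappear. Set $h(z)=f(z)\otimes g(z)$. This is analytic, bounded and continuous on $S$, and $h(\theta)=U\otimes V$. On the boundary lines, the tensor multiplicativity
\[
D_{p,n+m}(x\otimes y)=D_{p,n}(x)D_{p,m}(y),\qquad C_{p,n+m}(x\otimes y)=C_{p,n}(x)C_{p,m}(y)
\]
gives $\|h(j+is)\|_{X_j}\leq(1+\varepsilon)^2\|U\|_\theta\|V\|_\theta$. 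Letting $\varepsilon\to0$ yields
\[
\|U\otimes V\|_\theta\leq\|U\|_\theta\|V\|_\theta .
\]
The endpoint cases $\theta\in\{0,1\}$ are just the stated multiplicativity.

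\textbf{Lower bound.} We invoke Calder\'on's duality theorem \cite[Corollary~4.5.2]{BerghLofstrom1976interpolation}. The spaces are finite-dimensional, hence reflexive, with a common dense core. Combined with the dualities $D_{p,n}^*=D_{p',n}$ and $C_{p,n}^*=C_{p',n}$ under the pairing $\langle x,G\rangle=\tau(x^*G)$, it gives isometrically
\[
[D_{p,n},C_{p,n}]_\theta^*=[D_{p',n},C_{p',n}]_\theta .
\]
Choose norming functionals $G_U$ and $G_V$ with $\|G_U\|_{[D_{p',n},C_{p',n}]_\theta}=1$ and $\langle U,G_U\rangle=\|U\|_\theta$, and similarly for $V$. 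The upper-bound step, applied to the primed couple (the multiplicativity of $D_{p',\cdot}$ and $C_{p',\cdot}$ holds for the same reasons), shows that $\|G_U\otimes G_V\|_\theta\leq1$. Since the trace factorizes,
\[
\langle U\otimes V,G_U\otimes G_V\rangle=\langle U,G_U\rangle\langle V,G_V\rangle .
\]
Therefore $\|U\otimes V\|_\theta\geq\|U\|_\theta\|V\|_\theta$.

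\textbf{Main obstacle.} The delicate point is justifying the dual identification, that is, checking that the couple $(D_{p,n},C_{p,n})$ satisfies the hypotheses of Calder\'on duality. It is also the reason we need multiplicativity of the conditional column norm for the dual exponent $p'$, which may be larger than $2$. For $C_{p',n}$, multiplicativity follows from
\[
\mathcal E^{\otimes(n+m)}\bigl((x\otimes y)^*(x\otimes y)\bigr)=\mathcal E^{\otimes n}(x^*x)\otimes\mathcal E^{\otimes m}(y^*y),
\]
together with multiplicativity of $L_{p'/2}$-type traces of tensor powers. I would verify this identity explicitly. In finite dimensions the dualities and the density of the core are automatic, so the remaining care is only in the bookkeeping of the pairing convention: it is antilinear in the first slot, which is consistent with tensoring.
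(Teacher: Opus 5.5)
Your proof is correct and follows the paper's route. The upper bound is bilinear complex interpolation using tensor multiplicativity of the endpoint norms; you reprove \cite[Theorem~4.4.1]{BerghLofstrom1976interpolation} by hand via $h=f\otimes g$ rather than citing it. The lower bound is Calder\'on duality with norming functionals $G_U,G_V$, fed back into the upper bound for the dual couple $(D_{p',\cdot},C_{p',\cdot})$.

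Your explicit check that $C_{p',\cdot}$ is multiplicative for $p'>2$ supplies a point the paper only asserts. One caveat: ``everything is finite-dimensional'' holds in the paper's application, where $\mathcal M_n=\ell_\infty(\mathcal X^n)\,\overline\otimes\,\mathcal B(E^n)$. It does not hold in the abstract finite von Neumann algebra setting in which the lemma is stated. There the paper instead relies on reflexivity and on density of the common core $\mathcal M^{\bar\otimes n}$ to meet the hypotheses of Calder\'on duality.
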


\begin{proof}
Let $T(U,V)=U\otimes V$ be the multilinear map. With the tensoring multiplicativity of $D_{p,n}$ and $C_{p,n}$,
\cite[Theorem~4.4.1]{BerghLofstrom1976interpolation} therefore gives
\begin{align}
\|U\otimes V\|_{[D_{p,n+m},C_{p,n+m}]_\theta}
\leq
\|U\|_{[D_{p,n},C_{p,n}]_\theta}
\|V\|_{[D_{p,m},C_{p,m}]_\theta}.
\label{eq:interpolation-tensorization-upper}
\end{align}

For the reverse inequality, since $D_{p,n}$ and $C_{p,n}$ are reflexive, Calderón duality
\cite[Corollary~4.5.2]{BerghLofstrom1976interpolation} gives isometrically
\begin{equation*}
[D_{p,n},C_{p,n}]_\theta^*
=
[D_{p',n},C_{p',n}]_\theta .
\end{equation*}
For nonzero $U$ and $V$, choose $G$ and $H$ such that
\begin{align*}
\|G\|_{[D_{p',n},C_{p',n}]_\theta}
&=
\|H\|_{[D_{p',m},C_{p',m}]_\theta}
=1,\\
|\langle U,G\rangle|
&=
\|U\|_{[D_{p,n},C_{p,n}]_\theta},\\
|\langle V,H\rangle|
&=
\|V\|_{[D_{p,m},C_{p,m}]_\theta}.
\end{align*}
Applying the preceding bilinear interpolation argument to the dual
endpoint spaces yields
\[
\|G\otimes H\|_{[D_{p',n+m},C_{p',n+m}]_\theta}\leq1.
\]
Consequently,
\begin{align*}
\|U\otimes V\|_{[D_{p,n+m},C_{p,n+m}]_\theta}\geq
|\langle U\otimes V,G\otimes H\rangle|=
\|U\|_{[D_{p,n},C_{p,n}]_\theta}
\|V\|_{[D_{p,m},C_{p,m}]_\theta}.
\end{align*}
Together with \eqref{eq:interpolation-tensorization-upper}, this proves
\eqref{eq:interpolation-tensorization}.
\end{proof}

The following theorem gives the asymptotic rate of the $K$-functional.
\begin{theorem}
\label{thm:K-functional-rate}
Let $1<p\leq2$ and let $x\in K_{p,n,t}$ be nonzero. For every $r\in\mathbb R$,
$$ k_{p,r}(x)=\inf_{0\leq \theta \leq 1}\{(1-\theta)r+\log \|x\|_{[D_{p,1},C_{p,1}]_\theta}\}. $$
\end{theorem}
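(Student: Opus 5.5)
We prove the two inequalities separately, writing $\|\cdot\|_\theta$ for the norm of $[D_{p,n},C_{p,n}]_\theta$ (with $n$ clear from the argument).

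\emph{Upper bound.} Fix $\theta\in[0,1]$. The standard comparison between the complex method and the $K$-functional (e.g.\ \cite[Theorem~4.7.1 and Section~3.5]{BerghLofstrom1976interpolation}) shows that $[X_0,X_1]_\theta$ is of class $\mathcal C_K(\theta)$. Applied to the Peetre functional of the reversed couple $(C_{p,n},D_{p,n})$, this gives
\[
K(s,x;C_{p,n},D_{p,n})\le s^{1-\theta}\|x\|_{[C_{p,n},D_{p,n}]_{1-\theta}}=s^{1-\theta}\|x\|_{[D_{p,n},C_{p,n}]_\theta},
\]
with constant one. Indeed, for $f$ in the Calder\'on class with $f(\theta)=x$, one evaluates the $K$-functional via Poisson integrals on the two boundary lines, using the log-convexity (three-lines) estimate. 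Hence $K_{p,n,t}(x^{\otimes n})\le t^{1-\theta}\|x^{\otimes n}\|_\theta$. By Lemma~\ref{lem:interpolation-tensorization}, $\|x^{\otimes n}\|_\theta=\|x\|_{[D_{p,1},C_{p,1}]_\theta}^n$. Taking $t=2^{nr}$, $\frac1n\log$, and then $\inf_\theta$ yields $k_{p,r}(x)\le\inf_\theta\{(1-\theta)r+\log\|x\|_\theta\}$.

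\emph{Lower bound.} This is the main obstacle: the $K$-functional and the complex norm are only equivalent up to constants in general, and these constants must be shown to be subexponential in $n$. The plan is to dualize. By \eqref{dual-def} and Calder\'on duality, both sides are suprema over dual elements $G$, so it suffices to produce, for each $n$, a test $G_n$ with
\[
D_{p',n}(G_n)\le 2^{nr},\quad C_{p',n}(G_n)\le 1,\quad |\langle x^{\otimes n},G_n\rangle|\ge 2^{n(\phi(r)-o(1))},
\]
where $\phi(r)$ denotes the right-hand side of the theorem.

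To construct $G_n$, use the concave, nondecreasing, $1$-Lipschitz function $r\mapsto k_{p,r}(x)$ established above. Its Legendre-type transform is
\[
\psi(\theta):=\sup_r\{k_{p,r}(x)-(1-\theta)r\},\qquad\theta\in[0,1],
\]
since the slopes of $k_{p,r}$ lie in $[0,1]$. Concavity gives $k_{p,r}(x)=\inf_\theta\{(1-\theta)r+\psi(\theta)\}$. It therefore suffices to show $\psi(\theta)\ge\log\|x\|_\theta$.

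For this, use the reiteration/power-type relation between real and complex interpolation: $\|y\|_\theta\lesssim\|y\|_{(D,C)_{\theta,1}}=\int_0^\infty s^{-\theta}K(s,y)\,ds/s$, with an absolute constant independent of $n$ \cite[Theorem~4.7.1]{BerghLofstrom1976interpolation}. Apply this to $y=x^{\otimes n}$. Discretize the integral in $s=2^{nr}$ over a window of $r$ of length $O(1)$; this is justified because $k_{p,n,r}$ equals $r+\log D$ for $r\le0$ and is bounded by $\log C$ for $r\ge0$, so the tails are exponentially controlled. One obtains $n\log\|x\|_\theta\le n\sup_r\{k_{p,n,r}(x)-(1-\theta)r\}+O(\log n)$. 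Combined with $k_{p,n,r}\le k_{p,r}$, which holds by Fekete's supremum, this gives $\log\|x\|_\theta\le\psi(\theta)+O(\log n/n)$. Letting $n\to\infty$ proves the claim.

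The delicate points are the uniformity of the constant in the Bergh--L\"ofstr\"om embedding and the handling of the $(1-\theta)$ versus $\theta$ convention arising from the reversed couple. The endpoint cases $\theta=0$ and $\theta=1$ should be checked directly against \eqref{small-t} and the bound $k\le\log C_{p,1}(x)$.
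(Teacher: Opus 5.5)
Your proposal is correct and is essentially the paper's proof. The upper bound comes from $K_{p,n,t}(x^{\otimes n})\le t^{1-\theta}\|x\|_{[D_{p,1},C_{p,1}]_\theta}^n$, via complex interpolation and Lemma~\ref{lem:interpolation-tensorization}. The lower bound comes from the $n$-independent embedding of $(D_{p,n},C_{p,n})_{\theta,1}$ into $[D_{p,n},C_{p,n}]_\theta$, the linear tail bounds from the trivial decompositions together with $k_{p,n,r}\le k_{p,r}$, and concave Fenchel--Moreau duality. Two small points. The announced dualization through test elements $G_n$ is never actually used. At $\theta=1$ the lower bound needs $\sup_r k_{p,r}(x)\geq\log C_{p,1}(x)$, which is the reverse of the bound $k\le\log C_{p,1}(x)$ you cite; the paper gets it by weak compactness, but you can skip the endpoints, since $\theta\mapsto(1-\theta)r+\psi(\theta)$ is convex and finite on $[0,1]$ and so has the same infimum over $(0,1)$.
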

\begin{proof}
Fix any $x\in\mathcal M$, applying Lemma~\ref{lem:interpolation-tensorization} to $x^{\otimes n}$ repeatedly gives
$$ \Vert x^{\otimes n} \Vert_{[D_{p,n},C_{p,n}]_\theta}=\Vert x \Vert_{[D_{p,1},C_{p,1}]_\theta}^n.$$
For fixed $t>0$, consider the Banach space with norm $K_{p,n,t}(\cdot)$.  The identity maps from the norm $D_{p,n}(\cdot)$ and $C_{p,n}(\cdot)$
into $K_{p,n,t}(\cdot)$ satisfy
\begin{align*}
  \Vert I \Vert_{D_{p,n}\to K_{p,n,t}}\leq t,\quad\Vert I \Vert_{C_{p,n}\to K_{p,n,t}}&\leq 1.
\end{align*}
Complex interpolation (cf. Theorem~\ref{complex-interpolation}) therefore gives
$$ \Vert I \Vert_{[D_{p,n},C_{p,n}]_\theta\to K_{p,n,t}}\leq t^{1-\theta}. $$
Hence we obtain
\[
K_{p,n,t}(x^{\otimes n})
\leq
t^{1-\theta}\|x^{\otimes n}\|_{[D_{p,n},C_{p,n}]_{\theta}}=t^{1-\theta}\Vert x \Vert_{[D_{p,1},C_{p,1}]_\theta}^n.
\]
Choose $t=2^{nr}$. It follows that
$$ k_{p,n,r}(x)
=
\frac{1}{n}
\log
K_{p,n,2^{nr}}(x^{\otimes n})\leq (1-\theta)r+\log \Vert x \Vert_{[D_{p,1},C_{p,1}]_\theta}.$$
Passing to the limit $n\to\infty$ and taking the supremum over $r$ gives
\begin{equation}\label{upper-bound}
S_{1-\theta}
:=
\sup_{r\in\mathbb R}
\{k_{p,r}(x)-(1-\theta)r\}
\leq
\log \Vert x \Vert_{[D_{p,1},C_{p,1}]_\theta}.
\end{equation}

For the reverse inequality, the standard embedding
of the real interpolation space into the complex interpolation space (cf.
\cite[Theorem~4.7.1]{BerghLofstrom1976interpolation}) gives
\begin{align}\label{up-inter}
\|x\|_{[D_{p,1},C_{p,1}]_\theta}^n\leq
B_{\theta}
\int_0^\infty
t^{-(1-\theta)}K_{p,n,t}(x^{\otimes n})\frac{dt}{t}
=
n\ln 2B_{\theta}
\int_{\mathbb R}
2^{
n[k_{p,n,r}(x)-(1-\theta) r]
}dr
\end{align}
where $B_{\theta}$ depends only on $\theta$.

Taking two trivial decompositions $x^{\otimes n}=x^{\otimes n}+0$ and $x^{\otimes n}=0+x^{\otimes n}$ in the definition of $K_{p,n,t}(x^{\otimes n})$ gives
\begin{equation*}
k_{p,n,r}(x)-(1-\theta) r
\leq
\min\left\{
\log D_{p,1}(x)+\theta r,
\log C_{p,1}(x)-(1-\theta) r
\right\}.
\end{equation*}
Since $k_{p,n,r}\leq k_{p,r}$, the left-hand side is at most $S_{1-\theta}$. For the right-hand side, it attains its maximum at
$$r^\sharp=\log C_{p,1}(x)-\log D_{p,1}(x),$$ which gives
$$ S_{1-\theta}\leq(1-\theta)\log D_{p,1}(x)+\theta\log C_{p,1}(x).$$
For $0<\theta<1$, set
\begin{equation*}
r_-
:=
\frac{S_{1-\theta}-\log D_{p,1}(x)}{1-(1-\theta)},
\qquad
r_+
:=
\frac{\log C_{p,1}(x)-S_{1-\theta}}{1-\theta},
\end{equation*}
where $r_-\leq r_+$. It follows that
\begin{equation*}
k_{p,n,r}(x)-(1-\theta)r\leq\left\{
\begin{aligned}
&S_{1-\theta}+\theta(r-r_-),\quad r\leq r_-,\\
&S_{1-\theta},\quad r_-\leq r\leq r_+,\\
&S_{1-\theta}-(1-\theta)(r-r_+),\quad r\geq r_+
\end{aligned}\right.
\end{equation*}

\begin{figure}[htbp]
\centering
\begin{tikzpicture}[
    x=1.25cm,
    y=1.35cm,
    line cap=round,
    line join=round,
    font=\small
]

% Numerical positions used only for the schematic.
\def\rminus{-1.3}
\def\rplus{1.4}
\def\Svalue{1.6}
\def\thetavalue{0.4}

% Intersection of the blue and green lines.
\pgfmathsetmacro{\rstar}{
    \thetavalue*\rminus+(1-\thetavalue)*\rplus
}
\pgfmathsetmacro{\Sstar}{
    \Svalue
    +\thetavalue*(1-\thetavalue)*(\rplus-\rminus)
}

% Plot region.
\begin{scope}
\clip (-4,0) rectangle (4,3);

% Three regions.
\fill[blue!6]
(-4,0) rectangle (\rminus,3);

\fill[orange!6]
(\rminus,0) rectangle (\rplus,3);

\fill[green!6]
(\rplus,0) rectangle (4,3);

% Continuations of the three bounds.
\draw[blue!35,thin,domain=-4:4,smooth,variable=\r]
plot
({\r},{\Svalue+\thetavalue*(\r-\rminus)});

\draw[orange!40,thin]
(-4,\Svalue)--(4,\Svalue);

\draw[green!35,thin,domain=-4:4,smooth,variable=\r]
plot
({\r},{\Svalue-(1-\thetavalue)*(\r-\rplus)});

% The effective piecewise upper bound.
\draw[blue,very thick,domain=-4:\rminus,smooth,variable=\r]
plot
({\r},{\Svalue+\thetavalue*(\r-\rminus)});

\draw[orange,very thick]
(\rminus,\Svalue)--(\rplus,\Svalue);

\draw[green!70!black,very thick,domain=\rplus:4,smooth,variable=\r]
plot
({\r},{\Svalue-(1-\thetavalue)*(\r-\rplus)});

% Vertical dividing lines.
\draw[dashed,gray]
(\rminus,0)--(\rminus,3);

\draw[dashed,gray]
(\rplus,0)--(\rplus,3);

% Dashed projections of the blue--green intersection.
\draw[dashed,gray]
(\rstar,0)--(\rstar,\Sstar);

\draw[dashed,gray]
(-4,\Sstar)--(\rstar,\Sstar);

\end{scope}

% Axes.
\draw[->]
(-4.2,0)--(4.3,0)
node[right] {$r$};

\draw[->]
(-4,0)--(-4,3.15);

% Original intersection points.
\fill[blue]
(\rminus,\Svalue) circle (1.8pt);

\fill[green!70!black]
(\rplus,\Svalue) circle (1.8pt);

% Intersection of the blue and green lines.
\fill[black]
(\rstar,\Sstar) circle (1.8pt);

% Axis labels.
\node[below] at (\rminus,0)
{$r_-$};

\node[below] at (\rplus,0)
{$r_+$};

\node[left] at (-4,\Svalue)
{$S_{1-\theta}$};

% Coordinates of the blue--green intersection.
\node[below,align=center] at (\rstar,0)
{$r^\sharp$};

\node[left,align=right,font=\scriptsize] at (-4,\Sstar)
{Max};

% Region labels.
\node[blue,align=center] at (-2.55,0.95)
{$S_{1-\theta}+\theta(r-r_-)$};

\node[orange!80!black,align=center] at (0.05,1.82)
{$S_{1-\theta}$};

\node[green!60!black,align=center] at (2.75,0.75)
{$S_{1-\theta}-(1-\theta)(r-r_+)$};

\end{tikzpicture}
\caption{Upper bound for
$k_{p,n,r}(x)-(1-\theta)r$.}
\label{fig:linear-tail-bound}
\end{figure}

Splitting the
integral at $r_-$ and $r_+$ gives
\begin{align*}
\int_{-\infty}^{r_-}+\int_{r_-}^{r_+}+\int_{r_+}^{\infty}
2^{
n[k_{p,n,r}(x)-(1-\theta) r]
}dr
\leq
2^{nS_{1-\theta}}
\left(
\frac1{n\theta\ln 2}+r_+-r_-
+
\frac1{n(1-\theta)\ln 2 }
\right).
\end{align*}
Substituting this into \eqref{up-inter}, taking
$n^{-1}\log$, and letting $n\to\infty$ gives
\begin{equation*}
\log \|x\|_{[D_{p,1},C_{p,1}]_\theta}
\leq
S_{1-\theta}.
\end{equation*}
Together with \eqref{upper-bound}, this proves
\begin{equation*}
\log \|x\|_{[D_{p,1},C_{p,1}]_\theta}
=S_{1-\theta}=
\sup_{r\in\mathbb R}
\{k_{p,r}(x)-(1-\theta)r\}.
\end{equation*}
for $0<\theta<1$.

When
$\theta=0$, trivial decomposition $x^{\otimes n}=x^{\otimes n}+0$ gives
$$\sup_{r\in\mathbb R}
\{k_{p,r}(x)-(1-0)r\}\leq \log D_{p,1}(x).$$
When $r<0$, the rate satisfies $0<t=2^{nr}<1$. Then \eqref{small-t} gives
$$k_{p,r}(x)=r+\log D_{p,1}(x).$$
Hence we obtain
\begin{equation*}
\sup_{r\in\mathbb R}
\{k_{p,r}(x)-r\}=\log D_{p,1}(x)=\log \|x\|_{[D_{p,1},C_{p,1}]_0}.
\end{equation*}
If $\theta=1$,  choose $\varepsilon_{t}\to 0^+$ and the decomposition $x=y_t+z_t$ such that
\begin{align*}
K_{p,1,t}(x)\leq tD_{p,1}(y_t)+C_{p,1}(z_t)\leq K_{p,1,t}(x)+\varepsilon_t.
\end{align*}
With trivial decomposition $x=0+x$, we have
\begin{align}\label{estimate}
tD_{p,1}(y_t)+C_{p,1}(z_t)\leq C_{p,1}(x)+\varepsilon_t.
\end{align}
Hence
$$0\leq\lim_{t\to \infty}D_{p,1}(y_t)\leq \lim_{t\to \infty}\frac{C_{p,1}(x)+\varepsilon_t}{t}=0$$
and $z_t=x-y_t$ converges to $x$ with respect to the norm $D_{p,1}(\cdot)$.
While \eqref{estimate} also implies
$$ C_{p,1}(z_t)\leq C_{p,1}(x)+\varepsilon_t,$$
one obtains that $z_t$ is uniformly bounded in $C_{p,1}$. Since $C_{p,1}$ is reflexive, there exists a subsequence $(z_{t_n})$ that weakly converges to some $z$ in $C_{p,1}$ and hence in $D_{p,1}$. Since the norm convergence $z_t\to x$ in $D_{p,1}$ implies $z_{t_n}\overset{w}{\to} x$ in $D_{p,1}$, we obtain $z=x$. The weak convergence $z_{t_n}\overset{w}{\to} x$ in $C_{p,1}$ implies that
$$ C_{p,1}(x)\leq \liminf_{t\to \infty} C_{p,1}(z_{t_n})\leq \liminf_{t\to \infty} K_{p,1,t}(x)+\varepsilon_t.$$
While $\varepsilon_t \to 0^+$, this yields
$$ \sup_{r\in\mathbb R}
k_{p,r}(x)\geq \lim_{r\to \infty}
k_{p,1,r}(x)\geq\log C_{p,1}(x).
$$
Besides, the trivial decomposition $x^{\otimes n}=0+x^{\otimes n}$ gives
$$K_{p,n,t}(x^{\otimes n})\leq C_{p,n}(x^{\otimes n})=C_{p,1}(x)^n.$$
 Thus we have
$$ \sup_{r\in\mathbb R}
k_{p,r}(x)\leq \log C_{p,1}(x)$$
and it follows that
$$\sup_{r\in\mathbb R}
k_{p,r}(x)=\log C_{p,1}(x)=\log \|x\|_{[D_{p,1},C_{p,1}]_1}.$$
Now we can conclude that
\begin{equation*}
\log \|x\|_{[D_{p,1},C_{p,1}]_\theta}
=
\sup_{r\in\mathbb R}
\{k_{p,r}(x)-(1-\theta)r\}.
\end{equation*} holds on $0\leq\theta\leq1$.
Since $k_{p,r}$ is continuous, concave and every supergradient lies in
$[0,1]$, the concave Fenchel--Moreau theorem (\cite[Theorem~12.2 and
Corollary~13.3.3]{ConvexAnalysis}) yields
\begin{equation*}
k_{p,r}(x)
=
\inf_{0\leq\theta\leq1}
\left\{
(1-\theta)r+\log \|x\|_{[D_{p,1},C_{p,1}]_{\theta}}
\right\}.
\end{equation*}
This completes the proof.
\end{proof}

\subsection{Achievability Part for $\alpha\in[\frac{1}{2},1)$ }
In this subsection, we establish the achievability part for the strong converse exponent of quantum soft covering, measured by the sandwiched R{\'e}nyi divergence with order $\alpha\in[\frac{1}{2},1)$.

\begin{proposition}\label{prop:renyi}
\label{prop:sce-converse}Let $\alpha\in[\frac{1}{2},1)$ and $a\in(\alpha,1)$. For the codebook-induced
state $\rho_{E}^{\mathcal{{C}}}$ and the target marginal state \(\rho_E\),
we have
\begin{equation*}
\frac{a}{\alpha(1-a)}\log\mathbb{E}_{\mathcal{C}}Q_{\alpha}\left(\rho_{E}^{\mathcal{C}}\|\rho_{E}\right)\leq\log M - \widetilde{I}_a^{\frac{\alpha-a}{\alpha(1-a)}}(X:E)_{\rho_{XE}}.
\end{equation*}
\end{proposition}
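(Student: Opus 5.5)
The bound compares the expected fidelity of the random codebook state with a single-letter quantity. The plan is to reduce $\mathbb{E}_{\mathcal C}Q_\alpha(\rho_E^{\mathcal C}\|\rho_E)$ to a variational expression and then apply Hölder's inequality with a well-chosen exponent. That exponent should make the auxiliary order $a$ and the parameter $\lambda_{\alpha,a}=\frac{\alpha-a}{\alpha(1-a)}$ appear.

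\textbf{Step 1 (variational form).} Since $\alpha<1$, we have
\[
Q_\alpha(\rho\|\sigma)=\min_{\omega>0}\left\{\alpha\operatorname{Tr}\rho\,\omega+(1-\alpha)\operatorname{Tr}\big(\sigma^{\frac{\alpha-1}{2\alpha}}\omega\sigma^{\frac{\alpha-1}{2\alpha}}\big)^{\frac{\alpha}{\alpha-1}}\right\}.
\]
Equivalently, by Hölder duality,
\[
Q_\alpha(\rho\|\sigma)=\inf_{\tau}\operatorname{Tr}\big(\sigma^{\frac{1-\alpha}{2\alpha}}\rho\,\sigma^{\frac{1-\alpha}{2\alpha}}\tau^{\frac{\alpha-1}{\alpha}}\big)^{\alpha}\ldots
\]
For a minimization one should test with a fixed $\omega$ depending on $\sigma_E$, not on the codebook. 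I would take
\[
\omega=\rho_E^{-\frac{1-\alpha}{2\alpha}}\big(\text{something built from }\sigma_E\big)\rho_E^{-\frac{1-\alpha}{2\alpha}}.
\]
Because $\mathbb{E}_{\mathcal C}$ of an infimum is at most the infimum of the expectations, and $\operatorname{Tr}\rho_E^{\mathcal C}\omega$ is linear, this gives $\mathbb{E}Q_\alpha\le\inf_\omega\{\alpha\operatorname{Tr}\rho_E\omega+\ldots\}$. That bound is useless, since it only recovers $\le1$.

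The real mechanism must therefore be different. Concavity in $\rho$ gives $\mathbb{E}Q_\alpha(\rho^{\mathcal C}\|\rho_E)\le Q_\alpha(\rho_E\|\rho_E)=1$, which is again only trivial. To get the $\log M$ dependence, I will instead use a lower-order expression: by the monotonicity of the Schatten norm in $\alpha$, write $Q_\alpha$ as $\|\rho_E^{\frac{1-\alpha}{2\alpha}}\rho^{\mathcal C}\rho_E^{\frac{1-\alpha}{2\alpha}}\|_\alpha^\alpha$. Then I will bound it through the $K$-functional machinery: for $p=1/\alpha\in(1,2]$, identify
\[
x=\rho_E^{\frac{1-\alpha}{2\alpha}}(\cdot)
\]
as an element of $L_p(\mathcal M)$, with $\mathcal M$ the algebra $L_\infty(\mathcal X,P_X)\bar\otimes B(\mathcal H_E)$ carrying the trace $\mathbb{E}_X\otimes\operatorname{Tr}$, and $\mathcal E$ the expectation over $X$.

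\textbf{Step 2 (codebook as $K$-functional).} Write $\rho_E^{\mathcal C}-$type sums as $\frac1M\sum_m x_m$ with $x_m$ i.i.d. copies. A noncommutative Rosenthal / square-function type inequality for $1<p\le2$ bounds $\mathbb{E}\|\frac1M\sum x_m\|_p^p$ by $\inf_{x=y+z}\{M^{1-p}\cdot\text{(diagonal $L_p$ part of }y)+\text{(column/conditional part of }z)\}$, up to constants. After normalization this is exactly $K_{p,n,t}$ with $t=M^{(1-p)/p}$, or its reciprocal. Applied to the $n$-letter problem this gives
\[
\tfrac1n\log\mathbb{E}Q_\alpha\lesssim \alpha\,k_{p,r}(x)
\]
with $r$ proportional to $-R(1-\alpha)/\alpha$. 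Theorem~\ref{thm:K-functional-rate} then gives
\[
\inf_\theta\big\{(1-\theta)r+\log\|x\|_{[D_{p,1},C_{p,1}]_\theta}\big\}.
\]

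\textbf{Step 3 (identify the interpolation norm).} Pick $\theta$ corresponding to $a$; the natural guess is the relation $\frac1{p_\theta}=\frac{1-\theta}{p}+\theta\cdot(\ldots)$ producing order $a$. Then bound $\log\|x\|_\theta$ from above by an explicit Stein-type analytic family $F(z)$. This family mixes $\rho_X\otimes\rho_E$ powers with powers of $\rho_X\otimes\sigma_E$, where $\sigma_E$ enters through the duality of the column space (the conditional column norm's dual is a sup over densities $\sigma_E$). Evaluating the three-lines bound yields the club-sandwiched expression with exponents $\frac{\lambda}{2}\frac{1-a}{a}$ and $\frac{1-\lambda}{2}\frac{1-a}{a}$. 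Multiplying by $\frac{a}{\alpha(1-a)}$ gives the claimed $\log M-\widetilde I_a^{\lambda_{\alpha,a}}$.

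The main obstacle is Step 3: matching the interpolation parameter and the analytic family so that the one-shot bound closes exactly with the constants $\frac{a}{\alpha(1-a)}$ and $\lambda_{\alpha,a}$. As stated, the proposition is one-shot and has no $n$, so I would first try to avoid the asymptotics altogether. The fallback is a direct one-shot argument: a Hölder/Araki–Lieb–Thirring step at order $a$, combined with the operator concavity bound
\[
\mathbb{E}\Big(\frac1M\sum_m\rho^{X_m}\Big)^{s}\le\ldots,
\]
and the split $\rho_E^{\mathcal C}\le\frac1M\rho^{X_1}+\ldots$ averaged over the choice of $\sigma_E$.
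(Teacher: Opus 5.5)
There is a genuine gap: your main route (Steps 2--3) cannot produce the stated one-shot, constant-free inequality. The noncommutative Rosenthal inequality is an equivalence only up to a constant $C_p$. Theorem~\ref{thm:K-functional-rate} only describes $\lim_n\frac1n\log K_{p,n,2^{nr}}(x^{\otimes n})$, so it carries no one-shot information.

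Suppose the route is set up correctly: $p=2\alpha$ rather than $1/\alpha$, $Z_{\mathcal C}=M^{-1/2}\sum_m A_{X_m}\otimes e_{m1}$ with $A_x=(\rho_E^x)^{1/2}\rho_E^{\frac{1-\alpha}{2\alpha}}$ (so that $\|Z_{\mathcal C}\|_{2\alpha}^{2\alpha}=Q_\alpha(\rho_E^{\mathcal C}\|\rho_E)$), $t=M^{\frac{1-\alpha}{2\alpha}}$, and the one-shot bound $K_{p,1,t}(A_X)\le t^{1-\theta_a}\|A_X\|_{[D_{p,1},C_{p,1}]_{\theta_a}}$. Even then, the upper Rosenthal direction gives at best
\[
\frac{a}{\alpha(1-a)}\log\mathbb E_{\mathcal C}Q_\alpha\left(\rho_E^{\mathcal C}\|\rho_E\right)\le\log M-\widetilde I_a^{\frac{\alpha-a}{\alpha(1-a)}}(X:E)_{\rho_{XE}}+\frac{2a}{1-a}\log C_p .
\]
This also requires the identification $\|A_X\|_{[D_{p,1},C_{p,1}]_{\theta_a}}=2^{-\frac{1-a}{2a}\widetilde I_a^{\lambda_{\alpha,a}}(X:E)_{\rho_{XE}}}$, which your Step 3 leaves as a guess; in the paper it rests on the Junge--Parcet amalgamated interpolation theorem. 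The paper uses this machinery only in the opposite direction, for the lower bound on $\mathbb E_{\mathcal C}Q_\alpha$ in the optimality part, where multiplicative constants are harmless.

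Your fallback is the right idea, and it is essentially the paper's proof, but it names the wrong key inequality. An operator-concavity bound on $\mathbb E(\frac1M\sum_m\rho^{X_m})^s$ pushes $\mathbb E_{\mathcal C}$ inside and collapses to the trivial bound $\le1$ you found in Step 1. The $M$-dependence comes instead from subadditivity. The paper's argument runs as follows.
\begin{enumerate}
\item Fix $\sigma_E$ to be the optimizer of $\widetilde I_a^{\lambda_{\alpha,a}}$; there is no averaging over $\sigma_E$.
\item Write $Q_\alpha=\|\rho_E^{\frac{1-\alpha}{2\alpha}}\rho_E^{\mathcal C}\rho_E^{\frac{1-\alpha}{2\alpha}}\|_\alpha^\alpha$ and insert $\sigma_E^{\frac{a-\alpha}{2a\alpha}}\sigma_E^{\frac{\alpha-a}{2a\alpha}}$ on both sides; this is legitimate since $\supp\rho_E\subseteq\supp\sigma_E$.
\item Apply H\"older with $\frac2\beta+\frac1a=\frac1\alpha$. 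Then $\frac1\beta=\frac{a-\alpha}{2a\alpha}$, so the outer factor is $\|\sigma_E^{1/\beta}\|_\beta^{2\alpha}=1$.
\item Use Jensen for the concave map $y\mapsto y^{\alpha/a}$ to bring $\mathbb E_{\mathcal C}$ inside $(\operatorname{Tr}(\cdot)^a)^{\alpha/a}$.
\item Use McCarthy's inequality $\operatorname{Tr}(\sum_mP_m)^a\le\sum_m\operatorname{Tr}P_m^a$ for $0<a<1$ to split the codebook sum. This yields $M\cdot M^{-a}=M^{1-a}$, i.e.\ $\log M$ after the prefactor $\frac{a}{\alpha(1-a)}\cdot\frac{\alpha}{a}=\frac1{1-a}$.
\end{enumerate}
For $\lambda=\lambda_{\alpha,a}$ one has $\frac{\lambda}{2}\frac{1-a}{a}=\frac{\alpha-a}{2a\alpha}$ and $\frac{1-\lambda}{2}\frac{1-a}{a}=\frac{1-\alpha}{2\alpha}$. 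Hence the remaining single-letter quantity $\sum_xP_X(x)\operatorname{Tr}(\sigma_E^{\frac{\alpha-a}{2a\alpha}}\rho_E^{\frac{1-\alpha}{2\alpha}}\rho_E^x\rho_E^{\frac{1-\alpha}{2\alpha}}\sigma_E^{\frac{\alpha-a}{2a\alpha}})^a$ equals $2^{-(1-a)\widetilde I_a^{\lambda_{\alpha,a}}(X:E)_{\rho_{XE}}}$, and the bound closes exactly with no constant.
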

\begin{proof}
Let $\sigma_E$ be the optimizing state for $\widetilde{I}_a^{\frac{\alpha-a}{\alpha(1-a)}}(X:E)_{\rho_{XE}}$. We have
\begin{align*}
& \frac{a}{\alpha(1-a)}\log\mathbb{E}_{\mathcal{C}}Q_{\alpha}\left(\rho_{E}^{\mathcal{C}}\|\rho_{E}\right)\nb\\
= & \frac{a}{\alpha(1-a)}\log\mathbb{E}_{\mathcal{C}}\left\Vert \rho_{E}^{\frac{1-\alpha}{2\alpha}}\rho_{E}^{\mathcal{C}}\rho_{E}^{\frac{1-\alpha}{2\alpha}}\right\Vert _{\alpha}^{\alpha}\nb\\
\overset{(a)}{=} & \frac{a}{\alpha(1-a)}\log\mathbb{E}_{\mathcal{C}}\left\Vert \sigma_E^{\frac{a-\alpha}{2a\alpha}} \sigma_E^{\frac{\alpha-a}{2a\alpha}}  \rho_{E}^{\frac{1-\alpha}{2\alpha}}\rho_{E}^{\mathcal{C}}\rho_{E}^{\frac{1-\alpha}{2\alpha}}\sigma_E^{\frac{\alpha-a}{2a\alpha}}\sigma_E^{\frac{a-\alpha}{2a\alpha}}\right\Vert _{\alpha}^{\alpha}\nb\\
\stackrel{(b)}{\leq} & \frac{a}{\alpha(1-a)}\log\mathbb{E}_{\mathcal{C}}\left\Vert \sigma_E^{\frac{a-\alpha}{2a\alpha}}\right\Vert _{\beta}^{2\alpha}\left\Vert \sigma_E^{\frac{\alpha-a}{2a\alpha}}  \rho_{E}^{\frac{1-\alpha}{2\alpha}}\rho_{E}^{\mathcal{C}}\rho_{E}^{\frac{1-\alpha}{2\alpha}}\sigma_E^{\frac{\alpha-a}{2a\alpha}}\right\Vert _{a}^{\alpha}\nb\\
= & \frac{2a}{1-a}\log\left\Vert \sigma_E^{\frac{a-\alpha}{2a\alpha}}\right\Vert _{\beta}+\frac{a}{\alpha(1-a)}\log\mathbb{E}_{\mathcal{C}}\left\Vert \sigma_E^{\frac{\alpha-a}{2a\alpha}}  \rho_{E}^{\frac{1-\alpha}{2\alpha}}\rho_{E}^{\mathcal{C}}\rho_{E}^{\frac{1-\alpha}{2\alpha}}\sigma_E^{\frac{\alpha-a}{2a\alpha}}\right\Vert _{a}^{\alpha}\nb\\
= & \frac{2a}{1-a}\log\left\Vert \sigma_{E}^{\frac{1}{\beta}}\right\Vert _{\beta}+\frac{a}{\alpha(1-a)}\log\mathbb{E}_{\mathcal{C}}\left({\rm Tr}\left(M^{-1}\sigma_E^{\frac{\alpha-a}{2a\alpha}}  \rho_{E}^{\frac{1-\alpha}{2\alpha}}\sum_{m=1}^M\rho_{E}^{X(m)}\rho_{E}^{\frac{1-\alpha}{2\alpha}}\sigma_E^{\frac{\alpha-a}{2a\alpha}}\right)^{a}\right)^{\frac{\alpha}{a}}\nb\\
\stackrel{(c)}{\leq} & \frac{a}{\alpha(1-a)}\log\left(\mathbb{E}_{\mathcal{C}}{\rm Tr}M^{-a}\left(\sigma_E^{\frac{\alpha-a}{2a\alpha}}  \rho_{E}^{\frac{1-\alpha}{2\alpha}}\sum_{m=1}^M\rho_{E}^{X(m)}\rho_{E}^{\frac{1-\alpha}{2\alpha}}\sigma_E^{\frac{\alpha-a}{2a\alpha}}\right)^{a}\right)^{\frac{\alpha}{a}}\nb\\
= & \frac{1}{1-a}\log\mathbb{E}_{\mathcal{C}}{\rm Tr}M^{-a}\left(\sigma_E^{\frac{\alpha-a}{2a\alpha}}  \rho_{E}^{\frac{1-\alpha}{2\alpha}}\sum_{m=1}^M\rho_{E}^{X(m)}\rho_{E}^{\frac{1-\alpha}{2\alpha}}\sigma_E^{\frac{\alpha-a}{2a\alpha}}\right)^{a}\nb\\
\stackrel{(d)}{\leq} & \frac{1}{1-a}\log\mathbb{E}_{\mathcal{C}}{\rm Tr}M^{-a}\sum_{m=1}^M\left(\sigma_E^{\frac{\alpha-a}{2a\alpha}}  \rho_{E}^{\frac{1-\alpha}{2\alpha}}\rho_{E}^{X(m)}\rho_{E}^{\frac{1-\alpha}{2\alpha}}\sigma_E^{\frac{\alpha-a}{2a\alpha}}\right)^{a}\nb\\
= & \log M+\frac{1}{1-a}\log{\rm Tr}\sum_{x\in\mathcal{X}}P_X(x)\left(\sigma_E^{\frac{\alpha-a}{2a\alpha}}  \rho_{E}^{\frac{1-\alpha}{2\alpha}}\rho_{E}^{x}\rho_{E}^{\frac{1-\alpha}{2\alpha}}\sigma_E^{\frac{\alpha-a}{2a\alpha}}\right)^{a} \nb\nb\\
= & \log M+\frac{1}{1-a}\log{\rm Tr}\left((\rho_X\otimes\sigma_E)^{\frac{\alpha-a}{2a\alpha}}  (\rho_X\otimes\rho_{E})^{\frac{1-\alpha}{2\alpha}}\rho_{XE}(\rho_X\otimes\rho_{E})^{\frac{1-\alpha}{2\alpha}}(\rho_X\otimes\sigma_E)^{\frac{\alpha-a}{2a\alpha}}\right)^{a} \nb\nb\\
=& \log M - \widetilde{I}_a^{\frac{\alpha-a}{\alpha(1-a)}}(X:E)_{\rho_{XE}},
\end{align*}
where $(a)$ is because $\supp(\rho_{E})\subset\supp(\sigma_{E})$,
$(b)$ is due to H{\"o}lder's inequality with $\frac{2}{\beta}+\frac{1}{a}=\frac{1}{\alpha}$,
$(c)$ comes from Jensen's inequality
and $x\mapsto x^s$ is concave function with $s\in[0,1]$, and $(d)$
follows from McCarthy's inequality.
\end{proof}
\begin{theorem}\label{Thm:channel-res-renyi}
Let $\rho_{XE}=\sum_{x\mathcal{\in X}}P_{X}(x)|x\rangle\langle x|\otimes\rho_{E}^{x}$
be a C-Q state and rate $R\geq0$. Then, for any $\alpha\in[\frac{1}{2},1)$, we have
\begin{equation*}
\varGamma_{\rm sc}^{(\alpha)}(\rho_{XE},R)\geq\sup_{\alpha\leq a\leq1}\frac{\alpha\left(1-a\right)}{a\left(1-\alpha\right)}\left\{\widetilde{I}_a^{\frac{\alpha-a}{\alpha(1-a)}}(X:E)_{\rho_{XE}}-R\right\}.
\end{equation*}
\end{theorem}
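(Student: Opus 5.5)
The plan is to apply Proposition~\ref{prop:renyi} to the $n$-fold memoryless setting and then optimize over $a$. First, fix $a\in(\alpha,1)$. Apply Proposition~\ref{prop:renyi} to the C--Q state $\rho_{XE}^{\otimes n}$ with $M=2^{nR}$ and codewords drawn from $P_X^{\otimes n}$. This gives
\[
\frac{a}{\alpha(1-a)}\log\mathbb E_{\mathcal C_n}Q_\alpha\bigl(\rho_{E^n}^{\mathcal C_n}\big\|\rho_E^{\otimes n}\bigr)\leq nR-\widetilde I_a^{\lambda_{\alpha,a}}(X^n:E^n)_{\rho_{XE}^{\otimes n}} .
\]
Multiplying by $\frac{\alpha(1-a)}{a(\alpha-1)}<0$ reverses the inequality. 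Using the identity $D_\alpha(\rho_{\mathcal C_nE^n}\|\rho_{\mathcal C_n}\otimes\rho_E^{\otimes n})=\frac{1}{\alpha-1}\log\mathbb E_{\mathcal C_n}Q_\alpha$, we obtain
\[
\frac1n D_\alpha\bigl(\rho_{\mathcal C_nE^n}\big\|\rho_{\mathcal C_n}\otimes\rho_E^{\otimes n}\bigr)\geq\frac{\alpha(1-a)}{a(1-\alpha)}\Bigl\{\tfrac1n\widetilde I_a^{\lambda_{\alpha,a}}(X^n:E^n)_{\rho^{\otimes n}}-R\Bigr\}.
\]

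The second step is to pass from $n$ copies to one copy. Since the coefficient $\frac{\alpha(1-a)}{a(1-\alpha)}$ is positive, only the inequality
\[
\widetilde I_a^{\lambda}(X^n:E^n)_{\rho^{\otimes n}}\geq n\,\widetilde I_a^{\lambda}(X:E)_\rho
\]
is needed; no equality is required. This follows directly from the definition~\eqref{def:club-mutual}, because the supremum over $\sigma_{E^n}$ may be restricted to product states $\sigma_E^{\otimes n}$. With that restriction, every operator inside the trace factorizes as an $n$-fold tensor power, since $(\rho_X\otimes\sigma_E)^{\otimes n}$, $(\rho_X\otimes\rho_E)^{\otimes n}$ and $\rho_{XE}^{\otimes n}$ are all tensor powers and functional calculus commutes with tensor products. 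The quantity $\operatorname{Tr}[\cdot]^a$ therefore becomes the $n$-th power of the single-copy trace, and the prefactor $\frac{1}{a-1}\log$ turns this power into a factor of $n$.

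Combining the two steps, for every $n$,
\[
\frac1n D_\alpha\geq\frac{\alpha(1-a)}{a(1-\alpha)}\bigl\{\widetilde I_a^{\lambda_{\alpha,a}}(X:E)-R\bigr\}.
\]
Taking $\liminf_{n\to\infty}$ and then the supremum over $a\in(\alpha,1)$ gives the bound on the open interval. The endpoints are handled by the boundary convention stated after~\eqref{def:club-mutual}. At $a=1$ the objective equals $0$, and this is trivially a lower bound because $D_\alpha\geq0$ for normalized states. At $a=\alpha$ the objective equals $I_\alpha(X:E)-R$, which is the limit of the objective as $a\downarrow\alpha$ and is therefore dominated by the supremum over the interior. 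This argument requires the one-sided limit asserted in the preliminaries, which I take as given. The supremum over $[\alpha,1]$ thus equals the supremum over $(\alpha,1)$ together with $0$, and the theorem follows.

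The main point needing care is the single-letterization step: the product-state restriction must give the correct direction of inequality once the negative factors are tracked. Because the bound uses $\widetilde I$ with a minus sign, a lower bound on the $n$-copy $\widetilde I$ is exactly what is required, so superadditivity under product $\sigma$ suffices. The other thing to check is the support condition $\operatorname{supp}(\rho_E^{\otimes n})\subseteq\operatorname{supp}(\sigma_{E^n})$ used in step~(a) of Proposition~\ref{prop:renyi}. This is avoided by not insisting on the $n$-copy optimizer: instead, rerun the chain of Proposition~\ref{prop:renyi} directly with $\sigma_E^{\otimes n}$, where $\sigma_E$ is the single-copy optimizer, whose support contains that of $\rho_E$. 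That version of the chain yields the single-letter bound immediately, so the superadditivity step is not even needed.
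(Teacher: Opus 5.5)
Your proposal is correct and is essentially the paper's proof. The paper runs the chain of Proposition~\ref{prop:renyi} on the $n$-fold state using the product state $\sigma_E^{\otimes n}$ built from the single-copy optimizer (exactly your closing remark), obtains $\frac{a}{\alpha(1-a)}\log\mathbb E_{\mathcal C_n}Q_\alpha\leq nR-n\widetilde I_a^{\frac{\alpha-a}{\alpha(1-a)}}(X:E)_{\rho_{XE}}$, and then takes the $\liminf$ and the supremum over $a$; your superadditivity formulation is an equivalent repackaging, and your explicit handling of the endpoints $a=\alpha$ and $a=1$ (via the boundary limits and $D_\alpha\geq0$) is slightly more careful than the paper's one-line conclusion.
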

\begin{proof}
Let $\sigma_E$ be the optimizing state for $\widetilde{I}_a^{\frac{\alpha-a}{\alpha(1-a)}}(X:E)_{\rho_{XE}}$ and $\mathcal{C}_{n}:=\{ X^{n}(1),...,X^{n}(2^{nR})\} $ be
a random i.i.d. codebook,
where each $X^{n}(m)$ is independently drawn from $P_{X}^{\otimes n}$.
Proposition \ref{prop:sce-converse} gives
\begin{align*}
&\frac{a}{\alpha(1-a)}\log\mathbb{E}_{\mathcal{C}_n}Q_{\alpha}\left(\rho_{E^n}^{\mathcal{C}_n}\|\rho_{E}^{\otimes n}\right) \nb\\
\leq & nR+\frac{1}{1-a}\log{\rm Tr}\sum_{x^n\in\mathcal{X}^n}P_X^{\otimes n}(x^n)\left((\sigma_E^{\otimes n})^{\frac{\alpha-a}{2a\alpha}}  (\rho_{E}^{\otimes n})^{\frac{1-\alpha}{2\alpha}}\rho_{E^n}^{x^n}(\rho_{E}^{\otimes n})^{\frac{1-\alpha}{2\alpha}}(\sigma_E^{\otimes n})^{\frac{\alpha-a}{2a\alpha}}\right)^{a} \nb\\
= & nR+\frac{1}{1-a}\log{\rm Tr}\left((\rho_X^{\otimes n}\otimes\sigma_E^{\otimes n})^{\frac{\alpha-a}{2a\alpha}}  (\rho_X^{\otimes n}\otimes\rho_{E}^{\otimes n})^{\frac{1-\alpha}{2\alpha}}\rho_{XE}^{\otimes n}(\rho_X^{\otimes n}\otimes\rho_{E}^{\otimes n})^{\frac{1-\alpha}{2\alpha}}(\rho_X^{\otimes n}\otimes\sigma_E^{\otimes n})^{\frac{\alpha-a}{2a\alpha}}\right)^{a} \nb\\
=&nR -n \widetilde{I}_a^{\frac{\alpha-a}{\alpha(1-a)}}(X:E)_{\rho_{XE}}.
\end{align*}
Since the above inequality holds for any $a\in[\alpha,1)$, we have
\begin{align*}
& \liminf_{n\rightarrow\infty}\frac{1}{n}D_{\alpha}(\rho_{\mathcal{C}_{n}E^{n}}\|\rho_{\mathcal{C}_{n}}\otimes\rho_{E}^{\otimes n})\nb\\
= & \liminf_{n\rightarrow\infty}\frac{1}{n(\alpha-1)}\log\mathbb{E}_{\mathcal{C}_n}Q_{\alpha}\left(\rho_{E^n}^{\mathcal{C}_n}\|\rho_{E}^{\otimes n}\right)\nb\\
\geq & \sup_{\alpha\leq a\leq1}\frac{\alpha\left(1-a\right)}{a\left(1-\alpha\right)}\left\{\widetilde{I}_a^{\frac{\alpha-a}{\alpha(1-a)}}(X:E)_{\rho_{XE}}-R\right\}.
\end{align*}
\end{proof}

\subsection{Optimality Part for $\alpha\in[\frac{1}{2},1)$}
We next prove the optimality part for the strong converse exponent of the quantum soft covering. Together
with Theorem~\ref{Thm:channel-res-renyi}, this gives the exact strong converse exponent for $\alpha\in[\frac{1}{2},1)$. Throughout this subsection, we make the convention that $\rho_{XE}=\sum_{x\in\mathcal{X}} P_X(x)\, |x\rangle\langle x| \otimes \rho_E^x$
be a C-Q state and $
\mathcal C=\{X_1,\ldots,X_M\}$
be a random codebook whose codewords are independently drawn according
to $P_X$. For any $\mathcal X$-indexed family of
matrices $\{A_x\}_{x\in\mathcal X}$ and any random
variable $X$ with distribution $P_X$, we use $A_X$ to denote the
matrix-valued random variable that takes the value $A_x$ with
probability $P_X(x)$.

We first establish the required one-shot lower bound on
$\mathbb E_{\mathcal C}Q_\alpha
(\rho_E^{\mathcal C}\|\rho_E)$.

\begin{proposition}
\label{prop:one-shot-lower-alpha-half-one}
Let $\alpha\in(\frac{1}{2},1)$ and \(X\) be a random variable distributed according to \(P_X\). Define
\begin{equation}
A_{X}
:=
(\rho_E^{X})^{1/2}
\rho_E^{\frac{1-\alpha}{2\alpha}}.
\label{eq:def-Ax-lower}
\end{equation}
Then there exists a constant $c_\alpha>0$, depending only on $\alpha$,
such that
\begin{align*}
\left[
\mathbb E_{\mathcal C}
Q_\alpha\left(
\rho_E^{\mathcal C}\middle\|\rho_E
\right)
\right]^{\frac{1}{2\alpha}}&\geq
c_\alpha
\inf_{A_{X}=U_{X}+V_{X}}
\Bigg\{
M^{\frac{1-\alpha}{2\alpha}}
\left(
\mathbb{E}_{X}\operatorname{Tr}|U_X|^{2\alpha}
\right)^{\frac{1}{2\alpha}}+
\left[
\operatorname{Tr}
\left(
\mathbb{E}V_X^*V_X
\right)^\alpha
\right]^{\frac{1}{2\alpha}}
\Bigg\}.
\label{eq:one-shot-lower-alpha-half-one}
\end{align*}
\end{proposition}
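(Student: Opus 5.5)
We plan to rewrite the left-hand side as an $L_p$ norm of a sum of independent column-type random operators with $p=2\alpha\in(1,2)$, and then apply a noncommutative Rosenthal-type lower bound, which takes the form of a $K$-functional between a diagonal $L_p$ term and a conditional column term. Throughout, $s:=\frac{1-\alpha}{2\alpha}$ and $p:=2\alpha$.

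\emph{Step 1 (algebraic rewriting).} With $A_x=(\rho_E^x)^{1/2}\rho_E^{s}$ we have $\rho_E^{s}\rho_E^{x}\rho_E^{s}=A_x^*A_x$. Hence
\[
\rho_E^{s}\rho_E^{\mathcal C}\rho_E^{s}=M^{-1}\sum_{m=1}^M A_{X_m}^*A_{X_m}=M^{-1}S^*S,
\qquad S:=\sum_{m=1}^M e_{m,1}\otimes A_{X_m}.
\]
Here $S$ is a column operator on $\mathbb C^M\otimes\mathcal H_E$. Therefore
\[
\bigl[\mathbb E_{\mathcal C}Q_\alpha(\rho_E^{\mathcal C}\|\rho_E)\bigr]^{1/p}=M^{-1/2}\bigl(\mathbb E\|S\|_p^p\bigr)^{1/p}.
\]
This is exactly the $L_p$ norm of $S$ in the tracial algebra $L_\infty(\Omega)\bar\otimes M_M\otimes\mathcal B(\mathcal H_E)$. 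Because $S^*S$ has no cross terms, the codewords do not need to be centered. The problem therefore reduces to lowering bounding $\mathbb E\operatorname{Tr}(\sum_m Y_m)^{\alpha}$ for independent positive summands $Y_m=A_{X_m}^*A_{X_m}$.

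\emph{Step 2 (Rosenthal lower bound for $p<2$).} We will show that there is $c_p>0$ such that
\[
\bigl(\mathbb E\|S\|_p^p\bigr)^{1/p}\ge c_p\inf_{A_{X_m}=U_m+V_m}\Bigl\{\Bigl(\sum_m\mathbb E\|U_m\|_p^p\Bigr)^{1/p}+\Bigl\|\Bigl(\sum_m\mathbb E\,V_m^*V_m\Bigr)^{1/2}\Bigr\|_p\Bigr\}.
\]
Here the infimum runs over decompositions inside $L_\infty(\Omega)\bar\otimes\mathcal B(\mathcal H_E)$. The plan is to argue by duality, in the spirit of \eqref{dual-def}.

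The right-hand side is the norm in a sum space. Its dual norm is the maximum of the dual diagonal norm $\bigl(\sum_m\mathbb E\|G_m\|_{p'}^{p'}\bigr)^{1/p'}$ and the dual conditional-column norm $\bigl\|\bigl(\sum_m\mathbb E G_m^*G_m\bigr)^{1/2}\bigr\|_{p'}$. These duals are the diagonal-$L_p$ duality and the conditional column duality $C_p^*=C_{p'}$ quoted from \cite{JungeParcet2010mixed}.

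Given such a dual family $\{G_m\}$, we first replace each $G_m$ by its conditional expectation onto $\sigma(X_m)\otimes\mathcal B(\mathcal H_E)$. This contracts both dual norms and does not change the pairing with $A_{X_m}$. We then test $S$ against $T=\sum_m e_{m,1}\otimes G_m$, using $|\mathbb E\operatorname{Tr}(S^*T)|\le\|S\|_p\|T\|_{p'}$. Since $p'\ge2$, the noncommutative Rosenthal upper bound for independent summands (Junge--Xu / Junge--Zeng), specialized to column operators, gives
\[
\bigl(\mathbb E\|T\|_{p'}^{p'}\bigr)^{1/p'}\lesssim\max\Bigl\{\Bigl(\sum_m\mathbb E\|G_m\|_{p'}^{p'}\Bigr)^{1/p'},\ \Bigl\|\Bigl(\sum_m\mathbb E\,G_m^*G_m\Bigr)^{1/2}\Bigr\|_{p'}\Bigr\}.
\]
For column operators the row term and the centering issues disappear, since $T^*T=\sum_m G_m^*G_m$ is a sum of independent positive terms. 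Taking the supremum over dual families then yields the displayed inequality for $S$, with $c_p$ depending only on $p$.

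\emph{Step 3 (identical decompositions and scaling).} The codewords are i.i.d., and both terms are convex and invariant under permuting the indices. So any near-optimal family $(U_m,V_m)$ can be averaged over permutations and over the conditional expectations onto $\sigma(X_m)$, giving a single decomposition $A_X=U_X+V_X$ used for every $m$. For that decomposition the two terms become $M^{1/p}(\mathbb E\operatorname{Tr}|U_X|^{p})^{1/p}$ and $M^{1/2}\|(\mathbb E V_X^*V_X)^{1/2}\|_p$.

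Multiplying by $M^{-1/2}$ and using $\frac1p-\frac12=\frac{1-\alpha}{2\alpha}$ gives
\[
M^{\frac{1-\alpha}{2\alpha}}\bigl(\mathbb E\operatorname{Tr}|U_X|^{2\alpha}\bigr)^{\frac1{2\alpha}}+\bigl[\operatorname{Tr}(\mathbb E V_X^*V_X)^{\alpha}\bigr]^{\frac1{2\alpha}},
\]
which is the claimed bound with $c_\alpha:=c_{2\alpha}$.

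\emph{Main obstacle.} The crux is Step 2: a dimension-free $p'\ge2$ Rosenthal upper bound for the column sum $T$, with the conditional column term $\|(\sum_m\mathbb E G_m^*G_m)^{1/2}\|_{p'}$ rather than a cruder quantity. My first attempt would be to apply the noncommutative Burkholder/Rosenthal inequality to the positive sum $T^*T=\sum_m G_m^*G_m$ in $L_{p'/2}$, where $p'/2\ge1$. Writing $\sum_m G_m^*G_m=\sum_m\mathbb E\,G_m^*G_m+\sum_m\bigl(G_m^*G_m-\mathbb E\,G_m^*G_m\bigr)$, we would bound the centered part by the Rosenthal inequality in $L_{p'/2}$. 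The resulting diagonal term $\bigl(\sum_m\mathbb E\|G_m^*G_m\|_{p'/2}^{p'/2}\bigr)^{2/p'}$ equals the square of the diagonal $L_{p'}$ term. The mixed square-function terms would be interpolated between the diagonal and column terms, and the whole estimate iterated on dyadic scales of $p'$ if necessary.
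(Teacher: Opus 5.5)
Your proposal is correct, but it reaches the bound by a genuinely different route from the paper.

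\textbf{The paper's route.} The paper works entirely on the primal side. It inserts Rademacher signs, which is legitimate because $\sum_m\varepsilon_m e_{mm}$ is unitary and so the column norm is unchanged. It then applies the $1<p<2$ Junge--Xu matrix Rosenthal \emph{lower} bound to the mean-zero summands $B_m=M^{-1/2}\varepsilon_mA_{X_m}$. The row term is absorbed into the diagonal term by concavity of $T\mapsto\operatorname{Tr}T^{p/2}$. Finally, the signs must be stripped again by taking odd parts $U_{m,x}=\frac{\sqrt M}{2}\bigl(D_m(x,+1)-D_m(x,-1)\bigr)$ and $V_{m,x}=\frac{\sqrt M}{2}\bigl(C_m(x,+1)-C_m(x,-1)\bigr)$, before averaging over $m$.

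\textbf{Your route.} You identify the right-hand side as a sum-space norm and dualize it to the intersection norm. You then reduce test families to $\sigma(X_m)$-measurable ones by conditional expectation, and invoke the $p'\ge2$ Rosenthal \emph{upper} bound for the column $T$.

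\textbf{Comparison.} The $1<p<2$ Rosenthal lower bound is essentially the dual statement of the $p\ge2$ upper bound, so the two arguments are mirror images. Yours places the symmetrization where it costs nothing and removes the odd-part extraction. It also makes transparent that the quantity is the dual description \eqref{dual-def} of the $K$-functional used in Section~\ref{K-func}. The paper's route, in exchange, avoids setting up the sum/intersection and conditional-column dualities for the $\sum_m\mathbb E$ structure, because it quotes the lower bound as a black box. Your Step~3 is exactly the paper's final averaging step: conditional expectation onto $\sigma(X_m)$, then averaging the resulting functions over $m$ using convexity of $\operatorname{Tr}|\cdot|^p$ and operator convexity of $V\mapsto V^*V$.

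\textbf{A correction to Step~2, which also removes your ``main obstacle''.} The Junge--Xu inequality requires mean-zero summands. What makes it applicable here is not that $T^*T$ is a sum of positive terms, but that $\|T\|_{p'}$ is invariant under $G_m\mapsto\varepsilon_mG_m$. Apply the $p'\ge2$ matrix Rosenthal inequality to the centered independent family $e_{m,1}\otimes\varepsilon_mG_m$. Its row term is $\bigl(\sum_m\operatorname{Tr}(\mathbb E\,G_mG_m^*)^{p'/2}\bigr)^{1/p'}$, which Jensen bounds by the diagonal term since $p'/2\ge1$. This gives the bound on $T$ directly.

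So the iterative scheme on $T^*T$ is unnecessary. It would in any case need a genuine interpolation inequality for the mixed term $\bigl\|\bigl(\sum_m\mathbb E\,(G_m^*G_m)^2\bigr)^{1/2}\bigr\|_{p'/2}$ once $p'/2>2$.
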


\begin{proof}
Write $e_{m n}=|m\rangle\langle n|$. For the codebook
$\mathcal C=\{X_1,\ldots,X_M\}$, set
\begin{equation*}
Z_{\mathcal C}
:=
\frac1{\sqrt M}\sum_{m=1}^M A_{X_m}\otimes e_{m1}.
\end{equation*}
Since $
A_X^*A_X
=
\rho_E^{\frac{1-\alpha}{2\alpha}}
\rho_E^X
\rho_E^{\frac{1-\alpha}{2\alpha}},
$
we have
\begin{align*}
Z_{\mathcal C}^*Z_{\mathcal C}=
\frac1M
\sum_{m=1}^M A_{X_m}^*A_{X_m}\otimes e_{11}=
\rho_E^{\frac{1-\alpha}{2\alpha}}
\rho_E^{\mathcal C}
\rho_E^{\frac{1-\alpha}{2\alpha}}\otimes e_{11},
\end{align*}
Set $p=2\alpha$ and therefore
\begin{equation*}
\|Z_{\mathcal C}\|_p^p
=
Q_\alpha\left(
\rho_E^{\mathcal C}\middle\|\rho_E
\right).
\end{equation*}

Let $\varepsilon_1,\ldots,\varepsilon_M$ be independent Rademacher
random variables. For any $m\in\{1,2,\cdots,M\}$, define
\begin{equation*}
B_m
:=
\frac1{\sqrt M}\varepsilon_m A_{X_m}.
\end{equation*}
Since $\sum_m\varepsilon_m e_{mm}$ is unitary, one obtains
\begin{align}
\left[
\mathbb E_{\mathcal C}
Q_\alpha\left(
\rho_E^{\mathcal C}\middle\|\rho_E
\right)
\right]^{1/p}=
\left(
\mathbb E_{\mathcal C,\varepsilon}
\left\|
\sum_{m=1}^M  B_m\otimes e_{m1}
\right\|_p^p
\right)^{1/p}.
\label{eq:Q-rademacher-column}
\end{align}
Applying  the matrix Rosenthal inequality (cf. Appendix Theorem~\ref{lem:reverse-matrix-rosenthal}), we obtain
\begin{align}
\left(
\mathbb E_{\mathcal C,\varepsilon}
\left\|
\sum_{m=1}^M B_m\otimes e_{m1} 
\right\|_p^p
\right)^{1/p}&\geq
c_\alpha
\inf_{B_m=D_m+C_m+R_m}
\Bigg\{
\left(
\sum_{m=1}^M
\mathbb E\operatorname{Tr}|D_m|^p
\right)^{1/p}
\nb\\
&\hspace{25mm}
+
\left[
\operatorname{Tr}
\left(
\sum_{m=1}^M
\mathbb E C_m^*C_m
\right)^{p/2}
\right]^{1/p}
\nb\\
&\hspace{25mm}
+
\left[
\sum_{m=1}^M
\operatorname{Tr}
\left(
\mathbb E R_mR_m^*
\right)^{p/2}
\right]^{1/p}
\Bigg\}.
\label{eq:one-column-rosenthal}
\end{align}
Here $D_m,C_m,R_m$ are mean zero elements measurable with respect to
$(X_m,\varepsilon_m)$. In the three terms above, $\mathbb E$ denotes
expectation over $(X_m,\varepsilon_m)$; since each component is local,
this is the same as the full expectation.

We first eliminate the row term. Since $p/2=\alpha<1$, concavity of
$T\mapsto\operatorname{Tr}T^{p/2}$ gives, for every $m$,
\begin{equation*}
\operatorname{Tr}
\left(
\mathbb E R_mR_m^*
\right)^{p/2}
\geq
\mathbb E\operatorname{Tr}
\left(
R_mR_m^*
\right)^{p/2}
=
\mathbb E\operatorname{Tr}|R_m|^p.
\label{eq:row-to-diagonal}
\end{equation*}
Consequently, Minkowski's inequality yields
\begin{align*}
\left(
\sum_m
\mathbb E\operatorname{Tr}|D_m+R_m|^p
\right)^{1/p}&\leq
\left(
\sum_m
\mathbb E\operatorname{Tr}|D_m|^p
\right)^{1/p}
+
\left(
\sum_m
\mathbb E\operatorname{Tr}|R_m|^p
\right)^{1/p}
\nb\\
&\leq
\left(
\sum_m
\mathbb E\operatorname{Tr}|D_m|^p
\right)^{1/p}
+
\left[
\sum_m
\operatorname{Tr}
\left(
\mathbb E R_mR_m^*
\right)^{p/2}
\right]^{1/p}.
\end{align*}
Thus every decomposition $B_m=(D_m+R_m)+C_m$ can be viewed as $B_m=\tilde{D}_m+C_m$.
Therefore, the infimum in \eqref{eq:one-column-rosenthal} is equal to
\begin{align}
\inf_{B_m=D_m+C_m}
\Bigg\{
\left(
\sum_m
\mathbb E\operatorname{Tr}|D_m|^p
\right)^{1/p}+
\left[
\operatorname{Tr}
\left(
\sum_m
\mathbb E C_m^*C_m
\right)^{p/2}
\right]^{1/p}
\Bigg\}.
\label{eq:two-term-column-decomposition}
\end{align}
Here $D_m$ and $C_m$ are still mean zero.

For any $x\in\mathcal{X}$ and decomposition $B_m=D_m+C_m$
 in
\eqref{eq:two-term-column-decomposition}, set
\begin{equation*}
U_{m,x}
=
\frac{\sqrt M}{2}
\left(
D_m(x,+1)-D_m(x,-1)
\right),
\quad
V_{m,x}
=
\frac{\sqrt M}{2}
\left(
C_m(x,+1)-C_m(x,-1)
\right).
\end{equation*}
Since
\begin{equation*}
\frac{\varepsilon}{\sqrt M}A_{x}
=
D_m(x,\varepsilon)+C_m(x,\varepsilon),
\qquad
\varepsilon\in\{-1,+1\},
\end{equation*}
we obtain $A_{x}=U_{m,x}+V_{m,x}$. The convexity of
$T\mapsto\operatorname{Tr}|T|^p$ implies
\begin{align*}
M^{-p/2}\operatorname{Tr}|U_{m,x}|^p
&=
\operatorname{Tr}
\left|
\frac{D_m(x,+1)-D_m(x,-1)}2
\right|^p
\\
&\leq
\frac12\Tr|D_m(x,+1)|^p
+
\frac12\Tr|D_m(x,-1)|^p.
\end{align*}
It then follows
\begin{equation}
M^{-p/2}
\sum_{m,x} P_{X_m}(x)\Tr|U_{m,x}|^p
\leq
\sum_m
\mathbb E_{X_m,\varepsilon_m}\operatorname{Tr}|D_m|^p.
\label{eq:odd-diagonal-contraction}
\end{equation}
Besides, the column part satisfies
\begin{align*}
0\leq& \frac14
\left(
C_m(x,+1)+C_m(x,-1)
\right)^*
\left(
C_m(x,+1)+C_m(x,-1)
\right) \nb\\
=&
\frac12C_m(x,+1)^*C_m(x,+1)
+
\frac12C_m(x,-1)^*C_m(x,-1)
-
M^{-1}V_{m,x}^*V_{m,x}.
\end{align*}
Taking the expectation with respect to $X_m$ and summation over $m$ gives
\begin{equation*}
M^{-1}
\sum_{m,x} P_{X_m}(x)V_{m,x}^*V_{m,x}
\leq
\sum_m
\mathbb E_{X_m,\varepsilon_m} C_m^*C_m.
\label{eq:odd-column-contraction}
\end{equation*}
Now define
\begin{equation}
	U_x
	=
	\frac1M\sum_{m=1}^M U_{m,x},
	\qquad
	V_x
	=
	\frac1M\sum_{m=1}^M V_{m,x}.
	\label{eq:average-local-decompositions}
\end{equation}
Since $A_x=U_{m,x}+V_{m,x}$ for every $m\in\{1,2,\cdots,M\}$ and
$x\in\mathcal X$, averaging over $m$ gives
\[
A_x
=
\frac1M\sum_{m=1}^M\bigl(U_{m,x}+V_{m,x}\bigr)
=
U_x+V_x.
\]
Consequently, for any $m\in\{1,2,\cdots,M\}$, we have $A_{X_m}=U_{X_m}+V_{X_m}$ and $A_{X}=U_{X}+V_{X}$.
By the convexity of the function $X\mapsto\Tr |X|^p$ and
\eqref{eq:odd-diagonal-contraction},
\begin{align*}
M^{\frac1p-\frac12}
\left(
\sum_xP_X(x)\operatorname{Tr}|U_x|^p
\right)^{1/p}&\leq
M^{\frac1p-\frac12}
\left(
\frac1M
\sum_{m,x}
P_{X_m}(x)\operatorname{Tr}|U_{m,x}|^p
\right)^{1/p}
\nb\\
&=
M^{-1/2}
\left(
\sum_{m,x}
P_{X_m}(x)\operatorname{Tr}|U_{m,x}|^p
\right)^{1/p}
\nb\\
&\leq
\left(
\sum_m
\mathbb E\operatorname{Tr}|D_m|^p
\right)^{1/p}.
\end{align*}
Moreover,
\begin{align*}
\sum_xP_X(x)V_x^*V_x\leq
\frac1M
\sum_{m,x}
P_{X_m}(x)V_{m,x}^*V_{m,x}\leq
\sum_m
\mathbb E C_m^*C_m,
\end{align*}
where the first inequality follows from, for each $x$,
\begin{equation*}
\frac1M\sum_mV_{m,x}^*V_{m,x}-V_x^*V_x
=
\frac1M\sum_m
\left(
V_{m,x}-V_x
\right)^*
\left(
V_{m,x}-V_x
\right)
\geq0.
\end{equation*}
Monotonicity of $T\mapsto\operatorname{Tr}T^{p/2}$ on positive
matrices therefore gives
\begin{align*}
\left[
\operatorname{Tr}
\left(
\sum_xP_X(x)V_x^*V_x
\right)^{p/2}
\right]^{1/p}\leq
\left[
\operatorname{Tr}
\left(
\sum_m
\mathbb E C_m^*C_m
\right)^{p/2}
\right]^{1/p}.
\label{eq:column-after-averaging}
\end{align*}
Thus every decomposition in
\eqref{eq:two-term-column-decomposition} has cost at least
\begin{align*}
\inf_{A_X=U_X+V_X}
\Bigg\{
M^{\frac1p-\frac12}
\left(
\mathbb{E}_X\operatorname{Tr}|U_X|^p
\right)^{1/p}+
\left[
\operatorname{Tr}
\left(
\mathbb{E}_X V_X^*V_X
\right)^{p/2}
\right]^{1/p}
\Bigg\}.
\end{align*}
Taking the infimum in
\eqref{eq:two-term-column-decomposition}, using
\eqref{eq:Q-rademacher-column}, and substituting
$1/p-1/2=(1-\alpha)/(2\alpha)$ proves the desired inequality.
\end{proof}

The next lemma gives the exact exponential behavior of the
$K$-functional appearing in
Proposition~\ref{prop:one-shot-lower-alpha-half-one}. 
\begin{lemma}
\label{lem:tensorized-column-interpolation}
Let $\alpha\in(\frac12,1)$ and $p:=2\alpha$. Let \(X\) be a random variable distributed according to \(P_X\). Define
\begin{equation*}
A_X
:=
(\rho_E^X)^{1/2}
\rho_E^{\frac{1-\alpha}{2\alpha}}
\end{equation*}
and $A_{X^n}=A_{X}^{\otimes n}$. Also set
\begin{align*}
D_{q,n}(A_{X^n}):=
\left(
\mathbb{E}
\operatorname{Tr}|A_{X^n}|^q
\right)^{1/q},\quad
C_{p,n}(A_{X^n}):=
\left[
\operatorname{Tr}
\left(
\mathbb{E}
A_{X^n}^*A_{X^n}
\right)^{p/2}
\right]^{1/p}.
\end{align*}
For $t>0$, set
\begin{equation*}
K_n(t)
:=
\inf_{A_{X^n}=U+V}
\left\{
tD_{p,n}(U)+C_{p,n}(V)
\right\},
\end{equation*}
Then, for every $r\in\mathbb R$,
\begin{align}
\lim_{n\to\infty}
\frac{1}{n}\log K_n(2^{nr})
=
\inf_{\alpha\leq a\leq1}
\left\{
\frac{\alpha(1-a)}{a(1-\alpha)}r
-
\frac{1-a}{2a}
\widetilde I_a^{
\frac{\alpha-a}{\alpha(1-a)}
}(X:E)_{\rho_{XE}}
\right\}.
\label{eq:tensor-K-exponent}
\end{align}
\end{lemma}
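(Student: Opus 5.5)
The plan is to realize the concrete $K$-functional $K_n(t)$ as an instance of the abstract $K$-functional of Subsection~\ref{K-func}, so that Theorem~\ref{thm:K-functional-rate} applies. The remaining work is then to evaluate the complex interpolation norms $\|A_X\|_{[D_{p,1},C_{p,1}]_\theta}$ in closed form and reparametrize $\theta$ in terms of $a$.

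\textbf{Step 1: the operator-algebra setup.}
Take $\mathcal M=L_\infty(\mathcal X,P_X)\bar\otimes\mathcal B(\mathcal H_E)$ with trace $\tau_{\mathcal M}=\mathbb E_X\otimes\operatorname{Tr}$, and let $\mathcal N=\mathbb C\otimes\mathcal B(\mathcal H_E)$ with $\mathcal E=\mathbb E_X\otimes\mathrm{id}$. In this setting:
\begin{itemize}
\item $L_p(\mathcal M)$ is exactly the norm $D_{p,1}$ of the lemma;
\item $L_p^c(\mathcal M,\mathcal E)$ is exactly $C_{p,1}$;
\item the $n$-fold tensor structure, with $P_X^{\otimes n}$ and $\mathcal E^{\otimes n}$, reproduces $D_{p,n}$ and $C_{p,n}$;
\item $A_{X^n}=A_X^{\otimes n}$ is the $n$-th tensor power of the single element $x=A_X\in\mathcal M$.
\end{itemize}
Since $1<p=2\alpha<2$, Theorem~\ref{thm:K-functional-rate} gives
\begin{align*}
\lim_n\frac1n\log K_n(2^{nr})=\inf_{0\le\theta\le1}\Bigl\{(1-\theta)r+\log\|A_X\|_{[D_{p,1},C_{p,1}]_\theta}\Bigr\}.
\end{align*}

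\textbf{Step 2: evaluate the interpolation norm.}
The key claim is that, under the change of variables
$$1-\theta=\frac{\alpha(1-a)}{a(1-\alpha)},\qquad a\in[\alpha,1],$$
one has
$$\log\|A_X\|_{[D_{p,1},C_{p,1}]_\theta}=-\frac{1-a}{2a}\,\widetilde I_a^{\lambda_{\alpha,a}}(X:E)_{\rho_{XE}}.$$
The map is a bijection onto $\theta\in[0,1]$: $a=\alpha$ gives $\theta=0$, and $a=1$ gives $\theta=1$. The endpoints serve as sanity checks:
\begin{itemize}
\item $D_{p,1}(A_X)^p=\mathbb E\,Q_\alpha(\rho_E^X\|\rho_E)$, so $\log D_{p,1}(A_X)=-\frac{1-\alpha}{2\alpha}I_\alpha$, matching the boundary convention at $a=\alpha$;
\item $C_{p,1}(A_X)^p=\operatorname{Tr}\rho_E^{1/\alpha}\rho_E^{(1-\alpha)/\alpha}=1$, matching $a=1$.
\end{itemize}

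For the interior, I would use the known description of interpolation between $L_p$ and conditional $L_p^c$ spaces as an asymmetric, amalgamated space (Junge--Parcet). In the present commutative-over-$X$ setting, this should reduce to
$$\|x\|_{[D_p,C_p]_\theta}=\inf\Bigl\{\|\gamma\|_{S_s}\,\bigl(\mathbb E\|y_X\|_{q}^{q}\bigr)^{1/q}:\ x_X=y_X\gamma\Bigr\},$$
with $\gamma$ a deterministic operator on $E$ and the exponents chosen by $\frac1p=\frac1q+\frac1s$, $\frac1q=\frac{1-\theta}{p}+\frac{\theta}{2}$.

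Rather than rely on that description, I would prove the two inequalities directly.
\begin{itemize}
\item \textbf{Upper bound.} Write $A_X=(A_X\sigma^{-\kappa})\sigma^{\kappa}$ for a state $\sigma$ and build the analytic family $F(z)=A_X\sigma^{-\kappa z}\sigma^{\kappa z}$ with suitable complex powers of $\sigma$, $\rho_E$ and $\rho_E^X$, checked on the two boundary lines $\Re z=0,1$. Optimizing over $\sigma$ yields the $\sup_{\sigma_E}$ appearing in the definition of $\widetilde I$. This mirrors the Hölder step (b) of Proposition~\ref{prop:sce-converse}, which already produces exactly this quantity.
\item \textbf{Lower bound.} Use Calderón duality $[D_p,C_p]_\theta^*=[D_{p'},C_{p'}]_\theta$ and exhibit a dual element $G$. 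Take $G$ to be a normalized power of $A_X\sigma^{\cdots}$ at the optimal $\sigma$ (the supremum is attained with $\supp\rho_E\subseteq\supp\sigma$), whose dual interpolation norm is bounded by the corresponding analytic family.
\end{itemize}

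\textbf{Step 3: substitute.}
Insert the formula of Step 2 into the expression from Step 1 to obtain \eqref{eq:tensor-K-exponent}.

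\textbf{Main obstacle.}
The hard part is Step 2: computing the complex interpolation norm exactly, and in particular matching the lower bound, so that the two-parameter exponents $\frac{\lambda}{2}\frac{1-a}{a}$ and $\frac{1-\lambda}{2}\frac{1-a}{a}$ emerge. I would first attempt the upper bound via the Hölder and three-lines computation. For the lower bound I would use the stationarity condition of the optimal $\sigma_E$ to certify optimality of the dual $G$.
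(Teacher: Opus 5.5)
Your proposal is correct. Your Steps 1 and 3 match the paper: it too realizes $K_n$ on $\mathcal M_n=\ell_\infty(\mathcal X^n)\bar\otimes\mathcal B(E^n)$ with $\mathcal E=\mathbb E_{X^n}$ and invokes Theorem~\ref{thm:K-functional-rate}. The genuine difference is Step 2.

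\textbf{How the paper does Step 2.} It evaluates the interpolation norm through the Junge--Parcet amalgamated interpolation theorem. It writes $D_{p,n}=L_\infty L_pL_\infty$ and $C_{p,n}=L_\infty L_2L_{s_1}$ with $\frac1{s_1}=\frac1p-\frac12$, and interpolates to $L_\infty L_{2a}L_{1/r_a}$ at $\theta_a$. A factorization argument then identifies that norm with $\inf_\sigma D_{2a,n}(A\sigma^{-r_a})$, where $r_a=\frac1p-\frac1{2a}$. This yields the identity for every element and every $n$ at once, but rests on a heavy black box.

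\textbf{What your route buys.} You prove only what Theorem~\ref{thm:K-functional-rate} actually consumes: the single element $A_X$ at $n=1$, using nothing but the three-lines lemma. This does go through, but both halves need to be made precise.

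\emph{Upper bound.} The analytic family must come from the polar decomposition $Y_X=U_X|Y_X|$ of $Y_X:=A_X\sigma^{-r_a}$, normalized so that $\mathbb E\Tr|Y_X|^{2a}=1$. Separate complex powers of $\rho_E$ and $\rho_E^X$ will not work, and $A_X\sigma^{-\kappa z}\sigma^{\kappa z}$ as written is constant. Take $F(z)=U_X|Y_X|^{2a(\frac{1-z}{p}+\frac z2)}\sigma^{z/s_1}$. Then $F(\theta_a)=A_X$, $D_{p,1}(F(it))=1$, and $C_{p,1}(F(1+it))\le\|(\mathbb E|Y_X|^{2a})^{1/2}\|_2\|\sigma^{1/s_1}\|_{s_1}=1$, for every invertible density $\sigma$.

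\emph{Lower bound.} Use the dual family $G(z)=U_X|Y_X|^{2a(\frac{1-z}{p'}+\frac z2)}\sigma^{-z/s_1}$. It gives $\langle A_X,G(\theta_a)\rangle=1$ and $D_{p',1}(G(it))=1$. However, $C_{p',1}(G(1+it))=\|(\sigma^{-1/s_1}W\sigma^{-1/s_1})^{1/2}\|_{p'}$ with $W:=\mathbb E_X(\sigma^{-r_a}A_X^*A_X\sigma^{-r_a})^a$, and here H\"older goes the wrong way. This norm equals $\|\sigma^{1/p'}\|_{p'}=1$ exactly when $W=(\Tr W)\sigma$.

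That fixed-point identity is what your ``stationarity condition'' has to deliver, and it does:
\begin{itemize}
\item The minimizer is invertible on $\supp\rho_E$ by coercivity, since $\sum_xP_X(x)A_x^*A_x=\rho_E^{1/\alpha}$ and $\|\cdot\|_{2a}\ge\|\cdot\|_\infty$.
\item The Lagrange condition is $a\Tr[W(\sigma^{r_a}\mathrm{D}\sigma^{-r_a}[H]+\mathrm{D}\sigma^{-r_a}[H]\sigma^{r_a})]=c\Tr H$ for all Hermitian $H$.
\item Written in the eigenbasis of $\sigma$ with the divided differences of $t\mapsto t^{-r_a}$, this forces $W$ to be diagonal with $W_{ii}\propto\lambda_i$.
\end{itemize}
You also need only the elementary pairing bound $|\langle x,G\rangle|\le\|x\|_{[D_p,C_p]_\theta}\|G\|_{[D_{p'},C_{p'}]_\theta}$, which is again three lines, not full Calder\'on duality.

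Minor slip: $C_{p,1}(A_X)^p=\Tr(\rho_E^{1/\alpha})^\alpha=\Tr\rho_E=1$. Your expression has the wrong sign on the exponent $(1-\alpha)/\alpha$.
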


\begin{proof}
We divide the proof into three steps.

\noindent\emph{\bf Step 1: Amalgamated interpolation.} We may remove the letters $x$ for which $P_X(x)=0$.  For every
$n\geq1$, consider the finite von Neumann algebras
\begin{align*}
\mathcal M_n=
\ell_\infty(\mathcal X^n)
\,\overline\otimes\,
\mathcal B(E^n),\quad
\mathcal N_n
=
\mathcal B(E^n),
\end{align*}
equipped with the traces
\begin{align*}
\tau_{\mathcal M_n}(A_{X^n})=
\mathbb{E} \operatorname{Tr}A_{X^n},\quad
\tau_{\mathcal N_n}(B)
=
\operatorname{Tr}B.
\end{align*}
The faithful trace-preserving conditional expectation from
$\mathcal M_n$ onto $\mathcal N_n$ is
\begin{equation*}
\mathbb E(A_{X^n})
:=
\sum_{x^n}P_X^{\otimes n}(x^n)A_{X^n}
\end{equation*}
In particular,
\[
\|A_{X^n}\|_{L_q(\mathcal M_n)}=D_{q,n}(A_{X^n}),\quad \|A_{X^n}\|_{L_p^{c}(\mathcal M_n,\mathbb{E})}=C_{p,n}(A_{X^n}).
\]

The details of the following properties of amalgamated spaces are included in Appendix~\ref{amag}. Let $s_1$ be determined by
\begin{equation}
\frac1{s_1}
:=
\frac1p-\frac12.
\label{eq:def-s1}
\end{equation}
% We first verify the factorization formula
% \begin{equation}
% C_{p,n}(W)
% =
% \inf_{W_{x^n}=Z_{x^n}B}
% D_{2,n}(Z)\|B\|_{s_1}.
% \label{eq:Cpn-factorization}
% \end{equation}
% By absorbing the unitary in the polar decomposition of $B$
% into $Z_{x^n}$, it suffices to take $B\geq0$.  If
% $W_{x^n}=Z_{x^n}B$ and $
% S_Z:=
% \mathbb{E}Z_{x^n}^*Z_{x^n}$, then H{\"o}lder's inequality gives
% \begin{align*}
% C_{p,n}(W)=
% \|S_Z^{1/2}B\|_p\leq
% \|S_Z^{1/2}\|_2\|B\|_{s_1}
% =
% D_{2,n}(Z)\|B\|_{s_1}.
% \end{align*}
% Conversely, put $
% S_W:=\mathbb{E}W_{x^n}^*W_{x^n}$.
% If $S_W\neq0$, define, on $\operatorname{supp}S_W$,
% \begin{equation*}
% B
% :=
% \left(\frac{S_W^{p/2}}{\operatorname{Tr}S_W^{p/2}}\right)^{1/s_1},
% \qquad
% Z_{x^n}:=W_{x^n}B^{-1}.
% \end{equation*}
% Then $\|B\|_{s_1}=1$ and, using
% (\ref{eq:def-s1}),
% \begin{align*}
% D_{2,n}(Z)^2=
% \operatorname{Tr}B^{-1}S_WB^{-1}=
% \left(
% \operatorname{Tr}S_W^{p/2}
% \right)^{2/s_1+1}
% =
% \left(
% \operatorname{Tr}S_W^{p/2}
% \right)^{2/p}.
% \end{align*}
% Thus $D_{2,n}(Z)=C_{p,n}(W)$. The singular case is similar (
% $W_{x^n}$ vanishes on $\ker S_W$ whenever
% $P_X^{\otimes n}(x^n)>0$). This proves
% \eqref{eq:Cpn-factorization}.
For $a\in(\alpha,1)$, put
\begin{equation*}
q:=2a,
\qquad
r_a
:=
\frac1p-\frac{1}{2a}
=
\frac{a-\alpha}{2a\alpha}.
\end{equation*}
Define
\begin{align}
\Phi_{a,n}(A_{X^n})
:=
\inf_{\substack{\sigma_{E^n}>0\\
\operatorname{Tr}\sigma_{E^n}=1}}
\Bigg[
\mathbb{E}
\operatorname{Tr}
\left(
\sigma_{E^n}^{-r_a}
A_{X^n}^*A_{X^n}
\sigma_{E^n}^{-r_a}
\right)^a
\Bigg]^{1/(2a)}.
\label{eq:def-Phi-an}
\end{align}
We now check the exact factorization
\begin{equation*}
\Phi_{a,n}(A_{X^n})
=
\inf_{A_{X^n}=LZ_{X^n}R}
\|L\|_{\infty} D_{2a,n}(Z)\|R\|_{1/r_a}=\|A_{X^n}\|_{\infty,2a,1/r_a}.
\end{equation*}
To see this, for any decomposition $A_{X^n}=LZ_{X^n}R$, take
\begin{equation*}
\sigma_{E^n}
:=
\|R\|_{1/r_a}^{-1/r_a}|R|^{1/r_a}
\end{equation*}
which is a density operator. With the polar decomposition $R=U|R|$, we obtain
$$ A_{X^n}\sigma_{E^n}^{-r_a}=LZ_{X^n}R\sigma_{E^n}^{-r_a}=\|R\|_{1/r_a}LZ_{X^n}U.$$
This gives 
\begin{align*}
\Phi_{a,n}(A_{X^n})=&\inf_{\substack{\sigma_{E^n}>0\\
\Tr \sigma_{E^n}=1}}
D_{2a,n}(A_{X^n}\sigma_{E^n}^{-r_a})\\
\leq &D_{2a,n}(LZ_{X^n}U)\|R\|_{1/r_a}\\
\leq &\|L\|_\infty D_{2a,n}(Z_{X^n})\|R\|_{1/r_a}.
\end{align*}
Conversely, for every admissible
$\sigma_{E^n}$, take
\begin{equation*}
L:=I_{E^n},\qquad R:=\sigma_{E^n}^{r_a},
\qquad
Z_{X^n}:=A_{X^n}\sigma_{E^n}^{-r_a}.
\end{equation*}
Then $\|R\|_{1/r_a}=1$, and the factorization cost equals the
corresponding expression in \eqref{eq:def-Phi-an}.  This gives $ \Phi_{a,n}(A_{X^n})\geq \|A_{X^n}\|_{\infty,2a,1/r_a}$. An approximation argument treats singular cases.

The two endpoint triples for the amalgamated spaces are
\begin{equation*}
(u_0,q_0,v_0)=(\infty,p,\infty),
\qquad
(u_1,q_1,v_1)=(\infty,2,s_1).
\label{eq:amalgamated-endpoints}
\end{equation*}
They satisfy the hypotheses of Theorem~\ref{amag-inter}, where
\begin{equation*}
\frac1{u_j}+\frac1{q_j}+\frac1{v_j}
=
\frac1p
\leq1,\qquad 2\leq u_j,v_j\leq \infty.
\end{equation*}
Moreover, one has (cf. Appendix~\ref{amag})
\begin{align*}
L_\infty(\mathcal N_n)
L_p(\mathcal M_n)
L_\infty(\mathcal N_n)
=
D_{p,n},\quad
L_\infty(\mathcal N_n)
L_2(\mathcal M_n)
L_{s_1}(\mathcal N_n)
=
C_{p,n}.
\end{align*}
 At $
\theta_a
=
\frac{a-\alpha}{a(1-\alpha)}$,
the interpolated indices satisfy
\begin{equation*}
\frac{1-\theta_a}{p}+\frac{\theta_a}{2}
=
\frac1{2a},
\qquad
\frac{1-\theta_a}{\infty}+\frac{\theta_a}{s_1}=\frac{1}{1/r_a}.
\end{equation*}
Consequently, the amalgamated interpolation theorem gives
\begin{equation}
\|A_{X^n}\|_{[D_{p,n},C_{p,n}]_{\theta_a}}
=
\Phi_{a,n}(A_{X^n}).
\label{eq:exact-interpolation-Phi}
\end{equation}
At $a=\alpha$(or $a=1$), this identity means
$\Phi_{\alpha,n}(A_{X^n})=D_{p,n}(A_{X^n})$
(or $\Phi_{1,n}(A_{X^n})=C_{p,n}(A_{X^n})$).

% \noindent\emph{\bf Step 2: multiplicativity on the tensor powers of $A$.} Fix $a\in(\alpha,1)$ and recall that
% \begin{equation*}
% -2r_a
% =
% \frac{\alpha-a}{a\alpha}
% \in(-1,0).
% \end{equation*}

% Applying Lemma~\ref{lem:interpolation-tensorization} with
% $\theta=\theta_a$ and using \eqref{eq:exact-interpolation-Phi}, we obtain
% \begin{equation*}
% \Phi_{a,n+m}(U\otimes V)
% =
% \Phi_{a,n}(U)\Phi_{a,m}(V).
% \end{equation*}
% Hence we get $
% \Phi_{a,n}(A^{\otimes n})
% =
% \Phi_{a,1}(A)^n$.

% At $a=\alpha$ and $a=1$, the same identity follows directly from the
% tensor multiplicativity of $D_{p,n}$ and $C_{p,n}$.

Besides, we have
\begin{equation*}
\frac{\alpha-a}{\alpha(1-a)}\cdot \frac{1-a}{2a}
=
-r_a,
\qquad
\left(1-\frac{\alpha-a}{\alpha(1-a)}\right)\frac{1-a}{2a}
=
\frac{1-\alpha}{2\alpha}.
\end{equation*}
The definition of the club-sandwiched information gives
\begin{align*}
\widetilde I_a^{\frac{\alpha-a}{\alpha(1-a)}}(X:E)_{\rho_{XE}}
&=
\frac1{a-1}
\log
\inf_{\sigma_E\in\mathcal D(E)}
\sum_xP_X(x)
\operatorname{Tr}
\left(
\sigma_E^{-r_a}A_x^*A_x\sigma_E^{-r_a}
\right)^a.
\end{align*}
Consequently,
\begin{equation}
\|A_X\|_{[D_{p,1},C_{p,1}]_{\theta_a}}
=
\Phi_{a,1}(A_X)
=
\exp\left\{
-\frac{1-a}{2a}
\widetilde I_a^{
\frac{\alpha-a}{\alpha(1-a)}
}(X:E)_{\rho_{XE}}
\right\}.
\label{eq:Phi-information-identification}
\end{equation}
The endpoint values follow by continuity; in particular,
\begin{equation*}
\Phi_{1,1}(A_X)
=
C_{p,1}(A_X)
=
\left[
\operatorname{Tr}
\left(
\rho_E^{1/\alpha}
\right)^\alpha
\right]^{1/(2\alpha)}
=1.
\end{equation*}

\noindent\emph{\bf Step 2: the $K$-functional.} We now put
$$k_n(r):=\frac{1}{n}\log K_n(2^{nr}),\quad k(r):=\lim_{n\to\infty} k_n(r).$$
Thus, by the discussion in Section~\ref{K-func}, $k(r)$ is finite, nondecreasing, $1$-Lipschitz, and concave. Theorem~\ref{thm:K-functional-rate} then yields
\begin{equation}
k(r)
=
\inf_{0\leq\theta_a\leq1}
\left\{
(1-\theta_a)r+\log \Phi_{a,1}(A_X)
\right\}.
\label{eq:K-Fenchel-inversion}
\end{equation}
Finally, $a\mapsto\theta_a$ is a bijection from $[\alpha,1]$ onto
$[0,1]$.  Substituting
\eqref{eq:exact-interpolation-Phi}, and
\eqref{eq:Phi-information-identification} into
\eqref{eq:K-Fenchel-inversion} proves
\eqref{eq:tensor-K-exponent}.
\end{proof}

\begin{theorem}[Exact strong converse exponent]
\label{thm:exact-sc-alpha-half-one}
Let
\begin{equation*}
\rho_{XE}
=
\sum_{x\in\mathcal X}
P_X(x)|x\rangle\langle x|\otimes\rho_E^x,
\end{equation*}
and let $R\geq0$.  For every
$\alpha\in[\frac12,1)$,
\begin{align}
\varGamma_{\rm sc}^{(\alpha)}
(\rho_{XE},R)
&=
\sup_{\alpha\leq a\leq1}
\frac{\alpha(1-a)}{a(1-\alpha)}
\left\{
\widetilde I_a^{
\frac{\alpha-a}{\alpha(1-a)}
}(X:E)_{\rho_{XE}}
-R
\right\}.
\label{eq:exact-sc-alpha-half-one}
\end{align}
\end{theorem}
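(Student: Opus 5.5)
The inequality $\geq$ is Theorem~\ref{Thm:channel-res-renyi}, so it remains to prove the matching upper bound on $\liminf_n\frac1n D_\alpha$. That is, we need a lower bound on $\mathbb E_{\mathcal C_n}Q_\alpha(\rho_{E^n}^{\mathcal C_n}\|\rho_E^{\otimes n})$, because $\frac{1}{\alpha-1}<0$.

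\textbf{Step 1: one-shot bound on $n$ letters.} For $\alpha\in(\frac12,1)$, apply Proposition~\ref{prop:one-shot-lower-alpha-half-one} to the product C--Q state $\rho_{XE}^{\otimes n}$ with $M=2^{nR}$. Here $A_{X^n}=A_X^{\otimes n}$, since $(\rho_{E^n}^{x^n})^{1/2}(\rho_E^{\otimes n})^{\frac{1-\alpha}{2\alpha}}$ factorizes. With $p=2\alpha$ this gives
\[
\left[\mathbb E_{\mathcal C_n}Q_\alpha\right]^{\frac1{2\alpha}}\geq c_\alpha K_n\bigl(2^{nR\frac{1-\alpha}{2\alpha}}\bigr),
\]
with $K_n$ as in Lemma~\ref{lem:tensorized-column-interpolation}. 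The constant $c_\alpha$ does not depend on $n$, which is essential here: it comes from the matrix Rosenthal inequality, which is dimension-free.

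\textbf{Step 2: exponent of $K_n$.} Take $\frac1n\log$ and apply Lemma~\ref{lem:tensorized-column-interpolation} with $r=R\frac{1-\alpha}{2\alpha}$. This yields
\[
\liminf_n\frac{1}{2\alpha n}\log\mathbb E Q_\alpha\geq \inf_{\alpha\le a\le1}\Big\{\tfrac{1-a}{2a}R-\tfrac{1-a}{2a}\widetilde I_a^{\lambda_{\alpha,a}}\Big\}.
\]
Multiplying by $\frac{2\alpha}{\alpha-1}<0$ flips the infimum into a supremum:
\[
\limsup_n\frac1n D_\alpha\leq \sup_{a}\frac{\alpha(1-a)}{a(1-\alpha)}\{\widetilde I_a^{\lambda_{\alpha,a}}-R\}.
\]
This matches the achievability bound, so the $\liminf$ equals the formula, and in fact the limit exists. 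The boundary values $a=\alpha,1$ are handled by the limiting conventions stated after \eqref{def:club-mutual}; they agree with the endpoints $\theta=0,1$ in Theorem~\ref{thm:K-functional-rate}.

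\textbf{Step 3: the endpoint $\alpha=\frac12$.} Here $p=1$, so the Rosenthal and duality arguments ($1<p$) fail and Proposition~\ref{prop:one-shot-lower-alpha-half-one} is not available. I would use continuity in $\alpha$ instead. By monotonicity (Lemma~\ref{lem:proverty}), $D_{1/2}\le D_{\alpha'}$ for $\alpha'>\frac12$. Hence $\varGamma^{(1/2)}_{\rm sc}\le\limsup_n\frac1nD_{1/2}\le\inf_{\alpha'\downarrow1/2}F(\alpha')$, where $F(\alpha')$ is the right-hand side at order $\alpha'$. It then suffices to show $\limsup_{\alpha'\downarrow\frac12}F(\alpha')\le F(\frac12)$, i.e.\ upper semicontinuity of the supremum over $a$.

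\textbf{Main obstacle.} The hard step is this continuity. After reparametrizing by $\theta\in[0,1]$, the objective is jointly continuous in $(\alpha',\theta)$ on a compact set: the optimization over $\sigma_E$ is an infimum of continuous functions over a compact set, with full support handled as in the definition. Uniform convergence on this compact set then gives the limit. If this fails near $a=\alpha$, a fallback is to run the whole argument directly with $p\downarrow1$ perturbation or a pinching reduction. I would try the compactness route first.
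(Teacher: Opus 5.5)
Your Steps 1--2, and the reduction in Step 3 to showing $\limsup_{\beta\downarrow\frac12}F(\beta)\le F(\frac12)$ by monotonicity in the order, are the paper's argument; I write $\beta$ for your $\alpha'$. The gap is in the step you flag as the obstacle. Write the objective as
\[
G(\beta,a)=-\frac{\beta}{a(1-\beta)}\log h(\beta,a)-\frac{\beta(1-a)}{a(1-\beta)}R,
\qquad
h(\beta,a)=\inf_{\sigma_E\in\mathcal D(E)}f(\beta,a,\sigma_E),
\]
where $f$ is the trace functional in \eqref{eq:endpoint-objective-trace}. Your $\limsup$ argument needs $G$ to be \emph{upper} semicontinuous in $(\beta,a)$, i.e.\ $h$ to be \emph{lower} semicontinuous, and your justification does not give this. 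The functional $f$ contains $\sigma_E^{\frac{\beta-a}{2\beta a}}$ with a negative exponent, so it is not continuous at rank-deficient $\sigma_E$; it equals $+\infty$ whenever $\operatorname{supp}\rho_E\not\subseteq\operatorname{supp}\sigma_E$. If you restrict to full-rank $\sigma_E$ to recover continuity, you lose compactness. An infimum of functions that are merely continuous in $(\beta,a)$ is only upper semicontinuous. After $-\log$ this makes $G$ lower semicontinuous, which yields $\liminf_{\beta\downarrow\frac12}F(\beta)\ge F(\frac12)$, the wrong inequality.

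Two ingredients are missing. First, you need joint lower semicontinuity of the extended-valued $f$ on $\{(\beta,a)\}\times\mathcal D(E)$. The paper gets it from the regularization identity \eqref{eq:endpoint-regularization-short}, taken from \cite[Lemmas~A.1 and A.2]{RGT2024quantum}: $f(\beta,a,\sigma_E)=\sup_{\varepsilon>0}f(\beta,a,\sigma_E+\varepsilon I_E)$. This is a supremum of functions jointly continuous in $(\beta,a,\sigma_E)$, and compactness of $\mathcal D(E)$ then makes $h$ lower semicontinuous. Your continuity claim is in fact true, because restricting the infimum to full-rank states gives the other half. But the half your argument actually uses is the one that requires this extra input.

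Second, the corner $a=1$ must be handled jointly in $(\beta,a)$, not through the fixed-order limiting convention. The paper notes that $h(\beta,1)=\inf_{\sigma_E}\operatorname{Tr}\rho_E^{1/\beta}\sigma_E^{-(1-\beta)/\beta}=1$, so the boundary value $0$ is genuinely the value of the same expression $G$. With both points added, your compactness argument over $(\beta,a)$, or over $(\beta,\theta)$, goes through as in the paper.
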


\begin{proof}
We first consider $\alpha\in(\frac12,1)$.  Let
$\mathcal C_n=\{X^n(1),\ldots,X^n(M_n)\}$ be an i.i.d. random
codebook with $M_n=2^{nR}$, where every codeword is distributed according to
$P_X^{\otimes n}$ and $
M=2^R.$
The matrices in \eqref{eq:def-Ax-lower} tensorize
\begin{equation*}
A_{X^n}
=
A_{X_1}\otimes\cdots\otimes A_{X_n}.
\end{equation*}
Set $
t_n
:=
M_n^{\frac{1-\alpha}{2\alpha}}$ and $r_n:=\frac{1}{n}\log t_n.$
Then
\begin{equation*}
r_n
=
r_0
:=
\frac{1-\alpha}{2\alpha}R.
\end{equation*}
Proposition~\ref{prop:one-shot-lower-alpha-half-one}, applied to the
$n$-fold state, gives
\begin{equation}
\left[
\mathbb E_{\mathcal C_n}
Q_\alpha
\left(
\rho_{E^n}^{\mathcal C_n}
\middle\|
\rho_E^{\otimes n}
\right)
\right]^{1/(2\alpha)}
\geq
c_\alpha K_n(t_n).
\label{eq:Q-lower-Kn}
\end{equation}
Besides, Lemma~\ref{lem:tensorized-column-interpolation} implies
\begin{align*}
\lim_{n\to\infty}
\frac{1}{n}\log K_n(t_n)
=
\inf_{\alpha\leq a\leq1}
\frac{1-a}{2a}
\left\{
R
-
\widetilde I_a^{
\frac{\alpha-a}{\alpha(1-a)}
}(X:E)_{\rho_{XE}}
\right\},
\end{align*}
where we used
\begin{equation*}
\frac{\alpha(1-a)}{a(1-\alpha)}r_0
=
\frac{1-a}{2a}R.
\end{equation*}
Hence, taking logarithms in \eqref{eq:Q-lower-Kn} and observing that
$c_\alpha$ is independent of $n$ gives
\begin{align}
\liminf_{n\to\infty}
\frac{1}{n}
\log
\mathbb E_{\mathcal C_n}
Q_\alpha
\left(
\rho_{E^n}^{\mathcal C_n}
\middle\|
\rho_E^{\otimes n}
\right)
\geq
-
\sup_{\alpha\leq a\leq1}
\frac{\alpha(1-a)}a
\left\{
\widetilde I_a^{
\frac{\alpha-a}{\alpha(1-a)}
}(X:E)_{\rho_{XE}}
-R
\right\}.
\label{eq:Q-exponent-lower}
\end{align}
Since
\begin{align*}
D_\alpha
\left(
\rho_{\mathcal C_nE^n}
\middle\|
\rho_{\mathcal C_n}\otimes\rho_E^{\otimes n}
\right)=
\frac1{\alpha-1}
\log
\mathbb E_{\mathcal C_n}
Q_\alpha
\left(
\rho_{E^n}^{\mathcal C_n}
\middle\|
\rho_E^{\otimes n}
\right),
\end{align*}
and $\alpha-1<0$, \eqref{eq:Q-exponent-lower} yields
\begin{align}
\limsup_{n\to\infty}
\frac{1}{n}
D_\alpha
\left(
\rho_{\mathcal C_nE^n}
\middle\|
\rho_{\mathcal C_n}\otimes\rho_E^{\otimes n}
\right)
\leq
\sup_{\alpha\leq a\leq1}
\frac{\alpha(1-a)}{a(1-\alpha)}
\left\{
\widetilde I_a^{
\frac{\alpha-a}{\alpha(1-a)}
}(X:E)_{\rho_{XE}}
-R
\right\}.
\label{eq:sc-upper-alpha-half-one}
\end{align}
Together with Theorem~\ref{Thm:channel-res-renyi}, this proves
\eqref{eq:exact-sc-alpha-half-one} for
$\alpha\in(\frac12,1)$.

It remains to consider $\alpha=\frac12$.  We prove the result by letting the R{\'e}nyi order
$\beta\downarrow\frac12$. For $\frac12\leq\beta\leq a<1$, set the
second parameter in the two-parameter club-sandwiched mutual information to
\[
\frac{\beta-a}{\beta(1-a)}.
\]
The two effective exponents appearing in its defining trace
functional are
\[
\frac{1-a}{2a}
\frac{\beta-a}{\beta(1-a)}
=
\frac{\beta-a}{2\beta a},
\qquad
\frac{1-a}{2a}
\left(
1-\frac{\beta-a}{\beta(1-a)}
\right)
=
\frac{1-\beta}{2\beta}.
\]
Moreover,
\[
\frac{1-2a}{1-a}
\leq
\frac{\beta-a}{\beta(1-a)}
\leq0,
\]
because
\[
\frac{\beta-a}{\beta(1-a)}
-
\frac{1-2a}{1-a}
=
\frac{a(2\beta-1)}{\beta(1-a)}
\geq0.
\]
Thus, such club-sandwiched mutual information satisfies the data processing inequality; see \cite{RGT2024quantum}. We first check that the map
\begin{equation}
(\beta,a)\longmapsto
\frac{\beta(1-a)}{a(1-\beta)}
\left\{
\widetilde I_a^{\frac{\beta-a}{\beta(1-a)}}
(X:E)_{\rho_{XE}}-R
\right\}
\label{eq:endpoint-objective-map}
\end{equation}
is jointly upper semicontinuous on
\[
\left\{
(\beta,a):
\frac12\leq\beta\leq\frac34,\
\beta\leq a\leq1
\right\},
\]
where the value at $a=1$ is defined to be zero. Indeed, for $a<1$, the definition of the club-sandwiched mutual
information gives
\begin{align}
&\frac{\beta(1-a)}{a(1-\beta)}
\left\{
\widetilde I_a^{\frac{\beta-a}{\beta(1-a)}}
(X:E)_{\rho_{XE}}-R
\right\}
\nonumber\\
&=
-\frac{\beta}{a(1-\beta)}
\log
\inf_{\sigma_E\in\mathcal D(E)}
\operatorname{Tr}\Big[
(\rho_X\otimes\sigma_E)^{\frac{\beta-a}{2\beta a}}
(\rho_X\otimes\rho_E)^{\frac{1-\beta}{2\beta}}
\rho_{XE}
\nonumber\\
&\hspace{42mm}\times
(\rho_X\otimes\rho_E)^{\frac{1-\beta}{2\beta}}
(\rho_X\otimes\sigma_E)^{\frac{\beta-a}{2\beta a}}
\Big]^a
-
\frac{\beta(1-a)}{a(1-\beta)}R.
\label{eq:endpoint-objective-trace}
\end{align}
Here we used the fact that $a-1<0$, so that the supremum in the
definition becomes an infimum of the underlying trace functional.
By \cite[Lemmas~A.1 and A.2]{RGT2024quantum}, the
extended-valued trace functional in
\eqref{eq:endpoint-objective-trace} satisfies
\begin{align}
&\operatorname{Tr}\Big[
(\rho_X\otimes\sigma_E)^{\frac{\beta-a}{2\beta a}}
(\rho_X\otimes\rho_E)^{\frac{1-\beta}{2\beta}}
\rho_{XE}
(\rho_X\otimes\rho_E)^{\frac{1-\beta}{2\beta}}
(\rho_X\otimes\sigma_E)^{\frac{\beta-a}{2\beta a}}
\Big]^a
\nonumber\\
&=
\sup_{\varepsilon>0}
\operatorname{Tr}\Big[
\bigl(\rho_X\otimes(\sigma_E+\varepsilon I_E)\bigr)^{
\frac{\beta-a}{2\beta a}}
(\rho_X\otimes\rho_E)^{\frac{1-\beta}{2\beta}}
\rho_{XE}
\nonumber\\
&\hspace{36mm}\cdot
(\rho_X\otimes\rho_E)^{\frac{1-\beta}{2\beta}}
\bigl(\rho_X\otimes(\sigma_E+\varepsilon I_E)\bigr)^{
\frac{\beta-a}{2\beta a}}
\Big]^a.
\label{eq:endpoint-regularization-short}
\end{align}
For every fixed $\varepsilon>0$, the expression on the right-hand
side is continuous in $(\beta,a,\sigma_E)$. Therefore, the left-hand
side is a pointwise supremum of
continuous functions and hence jointly lower semicontinuous. Since $\mathcal D(E)$ is compact, its infimum
over $\sigma_E$ is lower semicontinuous in $(\beta,a)$. Consequently,
the right-hand side of \eqref{eq:endpoint-objective-trace} is jointly
upper semicontinuous.

At $a=1$, the trace optimization reduces to
\begin{align}
\inf_{\sigma_E\in\mathcal D(E)}
\operatorname{Tr}
\rho_E^{1/\beta}
\sigma_E^{-(1-\beta)/\beta}
=1.
\label{eq:endpoint-a-one}
\end{align}
Indeed, this follows from the nonnegativity of the Petz R{\'e}nyi
divergence of order $1/\beta>1$, and equality is attained at
$\sigma_E=\rho_E$. Hence the expression in
\eqref{eq:endpoint-objective-map} has the upper-semicontinuous
extension
\[
\left.
\frac{\beta(1-a)}{a(1-\beta)}
\left\{
\widetilde I_a^{\frac{\beta-a}{\beta(1-a)}}(X:E)-R
\right\}
\right|_{a=1}
=0.
\]
Now let $\beta_k\downarrow\frac12$. By the joint upper
semicontinuity and the compactness of the
optimization intervals, we have
\begin{align}
&\limsup_{k\to\infty}
\sup_{\beta_k\leq a\leq1}
\frac{\beta_k(1-a)}{a(1-\beta_k)}
\left\{
\widetilde I_a^{\frac{\beta_k-a}
{\beta_k(1-a)}}(X:E)_{\rho_{XE}}-R
\right\}
\nonumber\\
&\hspace{30mm}\leq
\sup_{\frac12\leq a\leq1}
\frac{1-a}{a}
\left\{
\widetilde I_a^{\frac{1-2a}{1-a}}
(X:E)_{\rho_{XE}}-R
\right\}.
\label{eq:endpoint-sup-limit}
\end{align}
% To see this explicitly, one may choose a maximizing sequence
% $a_k\in[\beta_k,1]$, pass to a subsequence such that
% $a_k\to a_*\in[\frac12,1]$, and then apply the joint upper
% semicontinuity of \eqref{eq:endpoint-objective-map}.
Finally, the sandwiched R{\'e}nyi divergence is nondecreasing in its
order. Hence, for every $\beta\in(\frac12,1)$,
\begin{align*}
\limsup_{n\to\infty}
\frac1n
D_{\frac12}\left(
\rho_{\mathcal C_nE^n}
\middle\|
\rho_{\mathcal C_n}\otimes\rho_E^{\otimes n}
\right)
&\leq
\limsup_{n\to\infty}
\frac1n
D_{\beta}\left(
\rho_{\mathcal C_nE^n}
\middle\|
\rho_{\mathcal C_n}\otimes\rho_E^{\otimes n}
\right)
\\
&\leq
\sup_{\beta\leq a\leq1}
\frac{\beta(1-a)}{a(1-\beta)}
\left\{
\widetilde I_a^{\frac{\beta-a}{\beta(1-a)}}
(X:E)_{\rho_{XE}}-R
\right\}.
\end{align*}
Letting $\beta\downarrow\frac12$ and applying
\eqref{eq:endpoint-sup-limit} yields
\begin{align*}
&\limsup_{n\to\infty}
\frac1n
D_{\frac12}\left(
\rho_{\mathcal C_nE^n}
\middle\|
\rho_{\mathcal C_n}\otimes\rho_E^{\otimes n}
\right)\leq
\sup_{\frac12\leq a\leq1}
\frac{1-a}{a}
\left\{
\widetilde I_a^{\frac{1-2a}{1-a}}
(X:E)_{\rho_{XE}}-R
\right\},
\end{align*}
where the $a=1$ term is understood as its continuous extension,
whose value is zero.
\end{proof}

\section{Strong Converse Exponent of Quantum Soft Covering for $\alpha\in[1,\infty)$}
This section is devoted to the proof of Theorem~\ref{thm:sc-sc}. The first two subsections are for the one-shot achievability part, and in the last subsection we prove Theorem~\ref{thm:sc-sc}.

\subsection{One-Shot Achievability Part for $\alpha\in(1,2]$}
\begin{proposition}
\label{prop:strong-converse-achievability}
Let $\rho_{XE}=\sum_{x\in\mathcal{X}} P_X(x)\, |x\rangle\langle x| \otimes \rho_E^x$
be a C-Q state and $
\mathcal C=\{X(1),\ldots,X(M)\}$
be a random codebook whose codewords are independently drawn according
to $P_X$. For the codebook-induced
state $\rho_{E}^{\mathcal{{C}}}$ and the target marginal state \(\rho_E\),
we have for $\alpha\in(1,2],$
\begin{equation*}
\mathbb{E}_{\mathcal{C}}Q_{\alpha}\left(\rho_{E}^{\mathcal{C}}\|\rho_{E}\right)\leq\nu(\rho_{E})^{\alpha}\left(\exp\left\{ (\alpha-1)(D_{\alpha}(\rho_{XE}\|\rho_X\otimes\rho_{E})-\log M)\right\} +1\right).
\end{equation*}
\end{proposition}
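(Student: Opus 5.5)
The plan is the standard pinching-plus-convexity route used for privacy amplification with $\alpha\in(1,2]$, as in Li--Yao--Hayashi. The first step removes the noncommutativity between the random codebook state and $\rho_E$.

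\textbf{Step 1: pinching.} Apply the pinching map $\mathcal E_{\rho_E}$. By the pinching inequality, $\rho_E^{\mathcal C}\le \nu(\rho_E)\,\mathcal E_{\rho_E}(\rho_E^{\mathcal C})$. Since $\alpha>1$, the map $X\mapsto\operatorname{Tr}(\rho_E^{\frac{1-\alpha}{2\alpha}}X\rho_E^{\frac{1-\alpha}{2\alpha}})^\alpha$ is monotone on positive operators. This gives
\[
Q_\alpha(\rho_E^{\mathcal C}\|\rho_E)\le \nu(\rho_E)^\alpha\, Q_\alpha(\mathcal E_{\rho_E}(\rho_E^{\mathcal C})\|\rho_E),
\]
which accounts for the prefactor $\nu(\rho_E)^\alpha$. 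Now $\mathcal E_{\rho_E}(\rho_E^{\mathcal C})$ commutes with $\rho_E$, so
\[
Q_\alpha(\mathcal E_{\rho_E}(\rho_E^{\mathcal C})\|\rho_E)=\operatorname{Tr}\bigl(\tfrac1M\sum_m \tau^{X(m)}\bigr)^{\alpha}\rho_E^{1-\alpha},
\]
where $\tau^x:=\mathcal E_{\rho_E}(\rho_E^x)$ and all $\tau^x$ are block diagonal with respect to the eigenprojections of $\rho_E$.

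\textbf{Step 2: split off one codeword.} Write
\[
\operatorname{Tr}\Bigl(\tfrac1M\sum_m\tau^{X(m)}\Bigr)^\alpha\rho_E^{1-\alpha}=\tfrac1M\sum_m\operatorname{Tr}\Bigl[\tau^{X(m)}\Bigl(\tfrac1M\sum_{m'}\tau^{X(m')}\Bigr)^{\alpha-1}\rho_E^{1-\alpha}\Bigr].
\]
This identity is legitimate because everything inside the trace is a function of commuting blocks once we also pinch by $\rho_E$. Some care is needed here, since the $\tau^x$ need not commute with each other. The safer route is to leave the operator in sandwiched form $\rho_E^{\frac{1-\alpha}{2\alpha}}(\cdot)\rho_E^{\frac{1-\alpha}{2\alpha}}$, which is legitimate because $\rho_E$ commutes with all the $\tau^x$, and to use cyclicity:
\[
\operatorname{Tr}Y^\alpha=\tfrac1M\sum_m \operatorname{Tr}\,Y_m Y^{\alpha-1},\qquad Y_m:=\rho_E^{\frac{1-\alpha}{2\alpha}}\tau^{X(m)}\rho_E^{\frac{1-\alpha}{2\alpha}},\quad Y=\tfrac1M\textstyle\sum_m Y_m .
\]

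\textbf{Step 3: conditional expectation over the other codewords.} Condition on $X(m)$ and take the expectation over the remaining codewords. Since $\alpha-1\in(0,1]$, the map $Y\mapsto Y^{\alpha-1}$ is operator concave, so the operator Jensen inequality gives
\[
\mathbb E_{\neq m}Y^{\alpha-1}\le\Bigl(\tfrac1M Y_m+\tfrac{M-1}{M}\,\mathbb E Y_1\Bigr)^{\alpha-1},
\]
and pairing with $Y_m\ge0$ preserves the inequality. Here $\mathbb E Y_1=\rho_E^{\frac{1-\alpha}{2\alpha}}\rho_E\,\rho_E^{\frac{1-\alpha}{2\alpha}}=\rho_E^{1/\alpha}$, because pinching fixes $\rho_E$. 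Using operator monotonicity and subadditivity of $t^{\alpha-1}$ for exponents in $(0,1]$,
\[
\Bigl(\tfrac1MY_m+\rho_E^{1/\alpha}\Bigr)^{\alpha-1}\le M^{1-\alpha}Y_m^{\alpha-1}+\rho_E^{(\alpha-1)/\alpha}.
\]
This subadditivity does not hold for noncommuting operators in general. The remedy is to apply the trace inequality $\operatorname{Tr}A(B+C)^{s}\le\operatorname{Tr}AB^{s}+\operatorname{Tr}AC^{s}$ for $A=B$, $s\in(0,1]$, which holds for positive operators. Then
\[
\operatorname{Tr}Y_m(\cdot)^{\alpha-1}\le M^{1-\alpha}\operatorname{Tr}Y_m^\alpha+\operatorname{Tr}Y_m\rho_E^{(\alpha-1)/\alpha}.
\]

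\textbf{Step 4: evaluate the terms.} For the second term, $\mathbb E\operatorname{Tr}Y_m\rho_E^{(\alpha-1)/\alpha}=\operatorname{Tr}\rho_E=1$, which gives the "$+1$". For the first term, $M^{1-\alpha}\,\mathbb E_X\operatorname{Tr}\bigl(\rho_E^{\frac{1-\alpha}{2\alpha}}\mathcal E_{\rho_E}(\rho_E^X)\rho_E^{\frac{1-\alpha}{2\alpha}}\bigr)^\alpha$. Since pinching commutes with $\rho_E$, this is $Q_\alpha$ of the pinched state. By data processing for $\alpha>1$, this is at most $\sum_xP_X(x)Q_\alpha(\rho_E^x\|\rho_E)=Q_\alpha(\rho_{XE}\|\rho_X\otimes\rho_E)$, which equals $\exp\{(\alpha-1)D_\alpha(\rho_{XE}\|\rho_X\otimes\rho_E)\}$.

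Combining the four steps yields the claim. The main obstacle is the trace subadditivity in Step 3, and I would verify it first, e.g.\ via $\operatorname{Tr}B(B+C)^{s}\le\operatorname{Tr}B^{1+s}+\operatorname{Tr}BC^s$ using Hayashi's lemma for $s\in[0,1]$.
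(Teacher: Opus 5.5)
Your proposal is correct and follows the paper's proof essentially step by step: pinching gives the factor $\nu(\rho_E)^\alpha$, one codeword is split off, operator concavity of $t\mapsto t^{\alpha-1}$ averages the remaining codewords into $\rho_E$, subadditivity yields the ``$+1$'', and data processing handles the pinched first term. The only difference is that the Hayashi-type trace inequality you single out as the main obstacle is not needed: after pinching, $Y_m$ commutes with $\rho_E^{1/\alpha}$, so the scalar inequality $(a+b)^{\alpha-1}\le a^{\alpha-1}+b^{\alpha-1}$ applies directly (the paper makes this explicit by working on each eigenspace $\Pi_i$, where $\rho_E$ is the scalar $\lambda_i\Pi_i$).
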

\begin{proof}
Let the spectral projections of $\rho_{E}$ be $\{\Pi_{i}\}_{i\in\mathcal{I}}$,
and the corresponding eigenvalues be $\{\lambda_{i}\}_{i\in\mathcal{I}}$.
Then, with the pinching inequality, we can bound $\mathbb{E}_{\mathcal{C}}Q_{\alpha}\left(\rho_{E}^{\mathcal{C}}\|\rho_{E}\right)$
as follows.

\begin{align*}
& \mathbb{E}_{\mathcal{C}}Q_{\alpha}\left(\rho_{E}^{\mathcal{C}}\|\rho_{E}\right)\nb\\
\stackrel{(a)}{\leq} & \nu(\rho_{E})^{\alpha}\mathbb{E}_{\mathcal{C}}Q_{\alpha}\left(\mathcal{E}_{\rho_{E}}\left(\rho_{E}^{\mathcal{C}}\right)\|\rho_{E}\right)\nb\\
= & \nu(\rho_{E})^{\alpha}\mathbb{E}_{\mathcal{C}}\sum_{i=1}Q_{\alpha}\left(\sum_{m=1}\Pi_{i}\frac{1}{M}\rho_{E}^{X(m)}\Pi_{i}\|\lambda_{i}\Pi_{i}\right)\nb\\
= & \nu(\rho_{E})^{\alpha}\mathbb{E}_{\mathcal{C}}\sum_{i,m}\lambda_{i}^{1-\alpha}\frac{1}{M}{\rm Tr}\Pi_{i}\rho_{E}^{X(m)}\Pi_{i}\left(\frac{1}{M}\Pi_{i}\rho_{E}^{X(m)}\Pi_{i}+\sum_{m'\neq m}\frac{1}{M}\Pi_{i}\rho_{E}^{X(m')}\Pi_{i}\right)^{\alpha-1}\nb\\
\stackrel{(b)}{\leq} & \nu(\rho_{E})^{\alpha}\mathbb{E}_{\mathcal{C}}\sum_{i,m}\lambda_{i}^{1-\alpha}\frac{1}{M}{\rm Tr}\Pi_{i}\rho_{E}^{X(m)}\Pi_{i}\left(\frac{1}{M}\Pi_{i}\rho_{E}^{X(m)}\Pi_{i}+\mathbb{E}_{\mathcal{C}}\sum_{m'\neq m}\frac{1}{M}\Pi_{i}\rho_{E}^{X(m')}\Pi_{i}\right)^{\alpha-1}\nb\\
= & \nu(\rho_{E})^{\alpha}\mathbb{E}_{\mathcal{C}}\sum_{i,m}\lambda_{i}^{1-\alpha}\frac{1}{M}{\rm Tr}\Pi_{i}\rho_{E}^{X(m)}\Pi_{i}\left(\frac{1}{M}\Pi_{i}\rho_{E}^{X(m)}\Pi_{i}+\frac{M-1}{M}\Pi_{i}\rho_{E}\Pi_{i}\right)^{\alpha-1}\nb\\
\stackrel{(c)}{\leq} & \nu(\rho_{E})^{\alpha}\left(\mathbb{E}_{\mathcal{C}}\sum_{i,m}\lambda_{i}^{1-\alpha}{\rm Tr}\left(\frac{1}{M}\Pi_{i}\rho_{E}^{X(m)}\Pi_{i}\right)^{\alpha}+\left(\frac{M-1}{M}\right)^{\alpha-1}\right)\nb\\
= & \nu(\rho_{E})^{\alpha}\left(M^{1-\alpha}{\rm Tr}\sum_{x}P_{X}(x)\mathcal{E}_{\rho_{E}}(\rho_{E}^{x})^{\alpha}\rho_{E}^{1-\alpha}+\left(\frac{M-1}{M}\right)^{\alpha-1}\right)\nb\\
= & \nu(\rho_{E})^{\alpha}\left(\exp\left\{ (\alpha-1)\left(D_{\alpha}(\mathcal{E}_{\rho_{E}}(\rho_{XE})\|\rho_{X}\otimes\rho_{E})-\log M\right)\right\} +\left(\frac{M-1}{M}\right)^{\alpha-1}\right)\nb\\
\stackrel{(d)}{\leq} & \nu(\rho_{E})^{\alpha}\left(\exp\left\{ (\alpha-1)\left(D_{\alpha}(\rho_{XE}\|\rho_{X}\otimes\rho_{E})-\log M\right)\right\} +\left(\frac{M-1}{M}\right)^{\alpha-1}\right),
\end{align*}
where $(a)$ is by pinching inequality, $(b)$ follows from the matrix concavity
of $x\mapsto x^{\alpha-1}$, $(c)$ is since $\Pi_{i}\rho_{E}^{X(m)}\Pi_{i}$
commutes with $\lambda_{i}\Pi_{i}$,
$(d)$ is due to the data processing inequality of sandwiched R\'enyi
divergence.
\end{proof}

\subsection{One-Shot Achievability Part for $\alpha\in(2,\infty)$}
In \cite{liQiuZhang2026reliability}, the present authors have established the following one-shot achievability bound for
$\alpha\in(2,\infty)$.
\begin{lemma}\label{prop:two-sided-one-shot-detailed}
Let $\alpha>2$, $R\geq0$ and $\rho_{XE}$ be a C-Q state. Let $M=2^R$ and $\mathcal C=\{X_1,\cdots,X_M\}$ be a random codebook whose codewords are sampled independently according to $P_X$. There exists a constant $C_\alpha<\infty$, depending only on
$\alpha$, such that
\begin{align}
\mathbb E_{\mathcal C}
Q_{\alpha}\left(
\rho_E^{\mathcal C}\middle\|\rho_E
\right)\leq1+ C_\alpha\left(2^{-\gamma^\prime(\alpha)}+2^{-\gamma(\alpha-1)}
\right),
\label{eq:add-one-shot-upper}
\end{align}
where $\gamma(t):=t(R-I_{\alpha}(X:E)_{\rho_{XE}})$ and $\gamma'(t):=R-I_{2}^{(t)}(X:E)_{\rho_{XE}}$.
\end{lemma}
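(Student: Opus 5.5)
The lemma bounds $\mathbb E_{\mathcal C}Q_\alpha(\rho_E^{\mathcal C}\|\rho_E)$ for $\alpha>2$. The plan is to write this quantity as a noncommutative $L_\alpha$-moment of a sum of independent positive random matrices and apply a matrix Rosenthal-type upper bound. Set $Y_m:=\rho_E^{\frac{1-\alpha}{2\alpha}}\rho_E^{X_m}\rho_E^{\frac{1-\alpha}{2\alpha}}$, which are i.i.d. positive with
\[
\mathbb E Y_m=\rho_E^{1/\alpha}=:\Lambda .
\]
Then
\[
Q_\alpha(\rho_E^{\mathcal C}\|\rho_E)=\operatorname{Tr}\Big(\Lambda+\tfrac1M\sum_m (Y_m-\Lambda)\Big)^{\alpha}.
\]
We use $\operatorname{Tr}\Lambda^\alpha=\operatorname{Tr}\rho_E=1$ and expand around $\Lambda$.

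\textbf{Step 1 (linearization).} Write $S:=\frac1M\sum_m Z_m$ with $Z_m:=Y_m-\Lambda$ centered. For $\alpha>2$, a second-order Taylor-type inequality for $T\mapsto\operatorname{Tr}T^\alpha$ should give
\[
\operatorname{Tr}(\Lambda+S)^\alpha\le \operatorname{Tr}\Lambda^\alpha+\alpha\operatorname{Tr}\Lambda^{\alpha-1}S+C_\alpha\Big(\operatorname{Tr}\big(\Lambda^{\frac{\alpha-2}{2}}S^2\Lambda^{\frac{\alpha-2}{2}}\big)+\operatorname{Tr}|S|^\alpha\Big).
\]
One route is the integral form of the remainder together with the operator inequality $(A+B)^{\alpha-2}\lesssim A^{\alpha-2}+|B|^{\alpha-2}$ after H\"older. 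The linear term vanishes in expectation.

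\textbf{Step 2 (variance term).} By independence and centering,
\[
\mathbb E S^2=M^{-1}\big(\mathbb E Y^2-\Lambda^2\big)\le M^{-1}\mathbb E Y^2 .
\]
Hence the quadratic term is at most
\[
M^{-1}\sum_xP_X(x)\operatorname{Tr}\Big[\big(\rho_E^{\frac{1-\alpha}{2\alpha}}\rho_E^x\rho_E^{\frac{1-\alpha}{2\alpha}}\big)^2\rho_E^{\frac{\alpha-2}{\alpha}}\Big]=2^{-(R-I_2^{(\alpha)}(X:E))}.
\]
This is exactly $2^{-\gamma'(\alpha)}$.

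\textbf{Step 3 (high moment term).} For $\mathbb E\operatorname{Tr}|S|^\alpha$, apply the noncommutative Rosenthal inequality for $\alpha>2$ (Junge--Xu):
\[
\mathbb E\operatorname{Tr}|S|^\alpha\le C_\alpha\Big(M^{-\alpha}\sum_m\mathbb E\operatorname{Tr}|Z_m|^\alpha+\operatorname{Tr}\big(M^{-2}\textstyle\sum_m\mathbb E Z_m^2\big)^{\alpha/2}\Big).
\]
\begin{itemize}
\item The diagonal part is bounded by $M^{1-\alpha}2^{\alpha-1}\big(\mathbb E\operatorname{Tr}Y^\alpha+1\big)$, using $\mathbb E\operatorname{Tr}Y^\alpha=2^{(\alpha-1)I_\alpha}$. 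This gives $2^{-\gamma(\alpha-1)}$ up to constants, together with the harmless term $M^{1-\alpha}\le 2^{-\gamma'(\alpha)}$-type corrections.
\item The square-function part $\operatorname{Tr}(M^{-1}\mathbb E Y^2)^{\alpha/2}$ is the main obstacle, since it is not obviously dominated by either exponential. Interpolating, by H\"older/log-convexity of $q\mapsto\mathbb E\operatorname{Tr}Y^q$, between the $\alpha$-moment and the mixed quantity $I_2^{(\alpha)}$, one can bound it by $\max\{2^{-\gamma'(\alpha)},2^{-\gamma(\alpha-1)}\}$ raised to suitable powers. When $R$ is larger than both thresholds the exponents should be compatible; in the other regime, when either exponent is nonpositive, the bound is trivial because $Q_\alpha\le$ const$\cdot(1+2^{-\gamma(\alpha-1)})$ by convexity.
\end{itemize}
A case split on the sign of $\gamma$ and $\gamma'$ then finishes the argument. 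Collecting the constants into $C_\alpha$ yields the stated bound.
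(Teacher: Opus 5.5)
The paper does not prove this lemma; it imports it from \cite{liQiuZhang2026reliability}, so I judge your argument on its own. Your bookkeeping is right. Expanding around $\Lambda=\rho_E^{1/\alpha}$, the variance term is exactly $M^{-1}\operatorname{Tr}\Lambda^{\alpha-2}(\mathbb E Y^2-\Lambda^2)\le 2^{-\gamma'(\alpha)}$, and the diagonal Rosenthal term is $\lesssim M^{1-\alpha}\mathbb E\operatorname{Tr}Y^\alpha=2^{-\gamma(\alpha-1)}$.

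The argument breaks at the step you flag yourself: the square-function term $M^{-\alpha/2}a$, where $a:=\operatorname{Tr}(\mathbb E Y^2)^{\alpha/2}$. Log-convexity of $q\mapsto\mathbb E\operatorname{Tr}Y^q$ does not reach it. Neither $a$ nor $b:=\operatorname{Tr}\Lambda^{\alpha-2}\mathbb E Y^2=2^{I_2^{(\alpha)}}$ is a moment of that form. In $a$ the expectation sits inside the trace function, and in $b$ it is coupled to $\Lambda=\mathbb E Y$, which does not commute with $Y$. Your fallback is also false. Convexity only gives $\mathbb E_{\mathcal C}Q_\alpha\le\mathbb E\operatorname{Tr}Y^\alpha=2^{(\alpha-1)I_\alpha}=M^{\alpha-1}2^{-\gamma(\alpha-1)}$, which is off by a factor $M^{\alpha-1}$. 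A bound $\mathbb E_{\mathcal C}Q_\alpha\lesssim 1+2^{-\gamma(\alpha-1)}$ is exactly what the paper \emph{derives} from this lemma in Proposition~\ref{prop:one-shot-upper-s>1}, so it cannot be invoked as trivial. Moreover, since $I_2^{(\alpha)}\le I_\alpha$, your ``other regime'' is just $R\le I_\alpha$, and large $M$ is the whole difficulty there.

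Set $c:=\mathbb E\operatorname{Tr}Y^\alpha$. What your route needs is $M^{-\alpha/2}a\le C(M^{-1}b+M^{1-\alpha}c)$ uniformly in $M$. Since $b\le c$, choosing $M^{\alpha-2}\approx c/b$ shows this is essentially equivalent to
\[
\operatorname{Tr}(\mathbb E Y^2)^{\alpha/2}\le C\sqrt{\operatorname{Tr}\bigl(\rho_E^{\frac{\alpha-2}{\alpha}}\mathbb E Y^2\bigr)\,\mathbb E\operatorname{Tr}Y^\alpha}.
\]
Conversely, this inequality closes the argument at once by AM--GM, since $M^{-\alpha/2}\sqrt{bc}=\sqrt{(M^{-1}b)(M^{1-\alpha}c)}$, with no case split. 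The privacy-amplification analogue, Lemma~\ref{thm:one-shot}, visibly carries such a geometric-mean term.

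Classically the inequality follows from Cauchy--Schwarz plus Jensen applied pointwise in $e$, using $\mathbb E_X[\rho_E^X(e)/\rho_E(e)]=1$. In the quantum case it is a genuinely noncommutative statement. Cauchy--Schwarz gives $a\le\sqrt{b}\,\bigl(\operatorname{Tr}\Lambda^{2-\alpha}(\mathbb E Y^2)^{\alpha-1}\bigr)^{1/2}$. You would then need $\operatorname{Tr}(\mathbb E Y)^{2-\alpha}(\mathbb E Y^2)^{\alpha-1}\le\mathbb E\operatorname{Tr}Y^\alpha$. Ando's joint convexity of $(A,B)\mapsto\operatorname{Tr}A^pB^{1-p}$ yields this only for $p=\alpha-1\le 2$, that is, $\alpha\le 3$. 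Supplying this estimate for all $\alpha>2$ is the missing idea.

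A secondary point concerns Step~1. The operator inequality $(A+B)^{\alpha-2}\lesssim A^{\alpha-2}+|B|^{\alpha-2}$ you suggest is false for $0<\alpha-2<1$. Take two nearly parallel rank-one projections: the eigenvalue $\sim\theta^2$ of $A+B$ becomes $\sim\theta^{2(\alpha-2)}$ under the power. So the weighted Taylor remainder bound with $\operatorname{Tr}\Lambda^{\alpha-2}S^2$ is also not yet proved.
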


 We will use the above bound to prove the desired one-shot achievability bound of the strong converse exponent for quantum soft covering.
 
\begin{proposition}\label{prop:one-shot-upper-s>1}
Let $\alpha>2$, $R\geq0$ and $\rho_{XE}$ be a C-Q state. Let $M=2^R$ and $\mathcal C=\{X_1,\cdots,X_M\}$ be a random codebook whose codewords are sampled independently according to $P_X$. There exists a constant $C_\alpha<\infty$, depending only on
$\alpha$, such that
\begin{align}
\mathbb E_{\mathcal C}
Q_{\alpha}\left(
\rho_E^{\mathcal C}\middle\|\rho_E
\right)
&\leq
1+2C_\alpha\max\{1,\exp\left\{
(\alpha-1)\left(
I_{\alpha}(X:E)_{\rho_{XE}}-R
\right)
\right\}\}.
\label{eq:one-shot-upper-s>1-unpinched}
\end{align}
\end{proposition}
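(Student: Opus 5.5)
The plan is to deduce the bound directly from Lemma~\ref{prop:two-sided-one-shot-detailed}. In the notation there,
\[
2^{-\gamma(\alpha-1)}=\exp\{(\alpha-1)(I_\alpha(X:E)_{\rho_{XE}}-R)\}
\]
is exactly the second quantity inside the max. It therefore suffices to show that
\[
2^{-\gamma'(\alpha)}=2^{\,I_2^{(\alpha)}(X:E)_{\rho_{XE}}-R}\le \max\bigl\{1,\exp\{(\alpha-1)(I_\alpha(X:E)_{\rho_{XE}}-R)\}\bigr\}.
\]
The two terms are then each at most the max, which gives the factor $2C_\alpha$.

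\textbf{Reduction to an inequality between the two mutual informations.} I split according to $R$. If $R\ge I_2^{(\alpha)}(X:E)_{\rho_{XE}}$, the left-hand side is at most $1$ and there is nothing to prove. If $R<I_2^{(\alpha)}(X:E)_{\rho_{XE}}$, I need
\[
I_2^{(\alpha)}-R\le(\alpha-1)(I_\alpha-R),
\qquad\text{i.e.}\qquad
(\alpha-1)I_\alpha-I_2^{(\alpha)}-(\alpha-2)R\ge0 .
\]
Since $\alpha>2$, the left-hand side is decreasing in $R$, so the worst case is $R=I_2^{(\alpha)}$. There the condition becomes $(\alpha-1)\bigl(I_\alpha-I_2^{(\alpha)}\bigr)\ge0$. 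The whole proposition thus reduces to the single inequality
\[
I_2^{(\alpha)}(X:E)_{\rho_{XE}}\le I_\alpha(X:E)_{\rho_{XE}}\qquad(\alpha>2).
\]

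\textbf{Proving $I_2^{(\alpha)}\le I_\alpha$.} Write $B_x:=\rho_E^{\frac{1-\alpha}{2\alpha}}\rho_E^x\rho_E^{\frac{1-\alpha}{2\alpha}}$, so that $\mathbb E_X B_X=\rho_E^{1/\alpha}$. Consider
\[
f(p):=\log\,\mathbb E_X\operatorname{Tr}\bigl[B_X^{\,p}\,\rho_E^{\frac{\alpha-p}{\alpha}}\bigr],\qquad p\in[1,\alpha].
\]
This function has $f(1)=\log\operatorname{Tr}\rho_E=0$, $f(2)=I_2^{(\alpha)}$ and $f(\alpha)=\log Q_\alpha(\rho_{XE}\|\rho_X\otimes\rho_E)=(\alpha-1)I_\alpha$. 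If $f$ is convex, then, since $2=\frac{\alpha-2}{\alpha-1}\cdot1+\frac{1}{\alpha-1}\cdot\alpha$, we get $f(2)\le\frac1{\alpha-1}f(\alpha)=I_\alpha$, as required.

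The main obstacle is this log-convexity, because $B_x$ and $\rho_E$ do not commute. My first attempt is to write
\[
\operatorname{Tr}\bigl[B^{\,p}\rho^{\frac{\alpha-p}{\alpha}}\bigr]=\bigl\|\rho^{\frac{\alpha-p}{2\alpha}}B^{p/2}\bigr\|_2^2
\]
and apply Hadamard three-lines to the analytic family $z\mapsto \rho^{\frac{\alpha-z}{2\alpha}}B^{z/2}$ on the strip $1\le\operatorname{Re}z\le\alpha$. The $L_2$ norm is taken on $\ell_\infty(\mathcal X)\overline\otimes\mathcal B(E)$ with trace $\mathbb E\operatorname{Tr}$, as in Step~1 of Lemma~\ref{lem:tensorized-column-interpolation}. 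On the boundary lines, the imaginary powers are unitaries and can be absorbed, which should give the endpoint values at $p=1$ and $p=\alpha$.

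A cruder Hölder estimate is available as a partial check:
\[
\operatorname{Tr}\bigl[B^2\rho^{\frac{\alpha-2}{\alpha}}\bigr]\le(\operatorname{Tr}B^\alpha)^{2/\alpha},
\]
combined with Jensen's inequality over $X$. This only yields $I_2^{(\alpha)}\le\frac{2(\alpha-1)}{\alpha}I_\alpha$, which is not enough. So the three-lines step is where the real work lies.

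If the three-lines argument fails, the fallback is to first pinch by $\mathcal E_{\rho_E}$, where everything commutes and classical Hölder gives the convexity. The pinching cost $\nu(\rho_E)^\alpha$ is subexponential and so harmless for the exponent, but it would change the constant in the displayed one-shot statement.
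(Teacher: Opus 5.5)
Your proposal is correct. Its skeleton is the paper's: both arguments apply Lemma~\ref{prop:two-sided-one-shot-detailed} and bound the $\gamma'$-term by the max using only $I_2^{(\alpha)}(X:E)_{\rho_{XE}}\le I_\alpha(X:E)_{\rho_{XE}}$. The paper splits at $R=I_\alpha$ rather than at $R=I_2^{(\alpha)}$, which makes no difference.

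The one real difference is how that ordering is obtained. The paper does not prove it here; it imports it as Lemma~\ref{cor:additivity-ordering-information} from \cite{liQiuZhang2026reliability}. You derive it, and your three-lines step does go through without the pinching fallback:
\begin{itemize}
\item Take all powers of $\rho_E$ and $B_x$ on their supports, and note $\operatorname{supp}B_x\subseteq\operatorname{supp}\rho_E$. Then $G_x(z):=\rho_E^{\frac{\alpha-z}{2\alpha}}B_x^{z/2}$ is entire and, in finite dimension, bounded on the strip $1\le\operatorname{Re}z\le\alpha$.
\item For $p_0\in\{1,\alpha\}$ the boundary values are $\rho_E^{-\frac{it}{2\alpha}}G_x(p_0)B_x^{\frac{it}{2}}$. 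The outer factors are partial isometries, so the boundary norms are at most $\|G(p_0)\|_{L_2}$.
\item At $p_0=\alpha$, $\rho_E^{0}$ is the support projection, so $G_x(\alpha)=B_x^{\alpha/2}$ and $\|G(\alpha)\|_{L_2}^2=Q_\alpha(\rho_{XE}\|\rho_X\otimes\rho_E)$.
\item Pairing with a fixed norming element of $L_2(\ell_\infty(\mathcal X)\,\overline{\otimes}\,\mathcal B(E))$ reduces the claim to the scalar three-lines lemma at $\theta=\frac{1}{\alpha-1}$.
\end{itemize}
Since $f(1)=\log\operatorname{Tr}\rho_E=0$, this gives $I_2^{(\alpha)}\le\frac{1}{\alpha-1}\log Q_\alpha(\rho_{XE}\|\rho_X\otimes\rho_E)=I_\alpha$.

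So your version is self-contained where the paper relies on a citation. It also keeps the constant dependent on $\alpha$ alone, which, as you note, the pinching route would not.
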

\begin{proof}
By Lemma~\ref{cor:additivity-ordering-information}, $
I_{2}^{(\alpha)}(X:E)_{\rho_{XE}}
\leq
I_\alpha(X:E)_{\rho_{XE}}.$
Therefore, if
$R\geq I_\alpha(X:E)_{\rho_{XE}}$, then
\begin{align}
(\alpha-1)(R-I_{\alpha}\left(X:E\right)_{\rho_{XE}})\geq0,\\
R-I_{2}^{(\alpha)}(X:E)_{\rho_{XE}}\geq0.
\end{align}
Applying Lemma~\ref{prop:two-sided-one-shot-detailed}, we have
\begin{equation}
\mathbb E_{\mathcal C}
Q_{\alpha}\left(
\rho_E^{\mathcal C}\middle\|\rho_E
\right)
\leq
1+2C_\alpha.\label{eq-sc-1}
\end{equation}
If $R\leq I_\alpha(X:E)_{\rho_{XE}}$, we obtain
\begin{align}
-(\alpha-1)(R-I_{\alpha}\left(X:E\right)_{\rho_{XE}})\geq -(R-I_{2}^{(\alpha)}(X:E)_{\rho_{XE}}).
\end{align}
Applying Lemma~\ref{prop:two-sided-one-shot-detailed} again, we get
\begin{equation}
\mathbb E_{\mathcal C}
Q_{\alpha}\left(
\rho_E^{\mathcal C}\middle\|\rho_E
\right)
\leq
1+2C_\alpha\exp\left\{
(\alpha-1)\left(
I_{\alpha}(X:E)_{\rho_{XE}}-R
\right)
\right\}.\label{eq-sc-2}
\end{equation}
Combining \eqref{eq-sc-1} and \eqref{eq-sc-2} yields the desired result.
\end{proof}

\begin{proof}[Proof of Theorem~\ref{thm:sc-sc}.]
For any $n\in\mathbb{N}$,
let $\mathcal{C}_{n}:=\{ X^{n}(1),...,X^{n}(2^{nR})\} $ be
a random i.i.d. codebook,
where each $X^{n}(m)$ is independently drawn from $P_{X}^{\otimes n}$.
On the one hand, for any $\alpha>1$, we have
\begin{align}
& \mathbb{E}_{\mathcal{C}_{n}}Q_{\alpha}(\rho_{E^{n}}^{\mathcal{C}_{n}}\|\rho_{E}^{\otimes n})\nb\\
= & \mathbb{E}_{\mathcal{C}_{n}}{\rm Tr}\left(\left(\rho_{E}^{\otimes n}\right)^{\frac{1-\alpha}{2\alpha}}\sum_{m}\frac{1}{2^{nR}}\rho_{E^{n}}^{X^n(m)}\left(\rho_{E}^{\otimes n}\right)^{\frac{1-\alpha}{2\alpha}}\right)^{\alpha}\nb\\
\geq & \mathbb{E}_{\mathcal{C}_{n}}{\rm Tr}\sum_{m}\left(\frac{1}{2^{nR}}\left(\rho_{E}^{\otimes n}\right)^{\frac{1-\alpha}{2\alpha}}\rho_{E^{n}}^{X^n(m)}\left(\rho_{E}^{\otimes n}\right)^{\frac{1-\alpha}{2\alpha}}\right)^{\alpha}\nb\\
= & |\mathcal{C}_{n}|^{1-\alpha}Q_{\alpha}\left(\rho_{XE}^{\otimes n}\|\rho_{X}^{\otimes n}\otimes\rho_{E}^{\otimes n}\right),
\end{align}
where the inequality is due to McCarthy’s inequality.
Therefore, for any $\alpha>1$, it holds that
\begin{align}
\varGamma_{\rm sc}^{(\alpha)}(\rho_{XE},R) =\liminf_{n\rightarrow\infty}\frac{1}{n(\alpha-1)}\log\mathbb{E}_{\mathcal{C}_{n}}Q_{\alpha}(\rho_{E^{n}}\|\rho_{E}^{\otimes n})
\geq|I_{\alpha}(X:E)_{\rho_{XE}}-R|^{+}.\label{sc-1}
\end{align}

On the other hand, for $\alpha\in(1,2]$, we have
\begin{align}
& \varGamma_{\rm sc}^{(\alpha)}(\rho_{XE},R)\nb\\
= & \liminf_{n\rightarrow\infty}\frac{1}{n(\alpha-1)}\log\mathbb{E}_{\mathcal{C}_{n}}Q_{\alpha}(\rho_{E^{n}}^{\mathcal{C}_{n}}\|\rho_{E}^{\otimes n})\nb\\
\leq & \liminf_{n\rightarrow\infty}\frac{1}{n(\alpha-1)}\log\left\{\nu(\rho_{E}^{\otimes n})^{\alpha}\left(\exp\left\{ (\alpha-1)\left(D_{\alpha}(\rho_{XE}^{\otimes n}\|\rho_{X}^{\otimes n}\otimes\rho_{E}^{\otimes n})-nR\right)\right\} +1\right)\right\}\nb\\
\leq & \liminf_{n\rightarrow\infty}\left\{ |I_{\alpha}(X:E)_{\rho_{XE}}-R|^{+}+\frac{\alpha}{n(\alpha-1)}\log\nu(\rho_{E}^{\otimes n})\right\} \nb\\
= & |I_{\alpha}(X:E)_{\rho_{XE}}-R|^{+},\label{sc-2}
\end{align}
where the first inequality is by Proposition \ref{prop:strong-converse-achievability}
and the last inequality follows from the fact that $\nu(\rho_{E}^{\otimes n})$
is a polynomial of $n$.  For $\alpha>2$, applying
Proposition~\ref{prop:one-shot-upper-s>1}, we obtain
\begin{align}
\varGamma_{\rm sc}^{(\alpha)}(\rho_{XE},R)
= \liminf_{n\rightarrow\infty}\frac{1}{n(\alpha-1)}\log\mathbb{E}_{\mathcal{C}_{n}}Q_{\alpha}(\rho_{E^{n}}\|\rho_{E}^{\otimes n})
\leq
|I_{\alpha}(X:E)_{\rho_{XE}}-R|^+.
\label{sc-3}
\end{align}
Combining \eqref{sc-1}-\eqref{sc-3} yields the desired result for $\alpha>1$.

Finally, we prove the desired result for the case $\alpha=1$. Let $J$ be uniformly distributed over
$\{1,\ldots,2^{nR}\}$ and independent of the random codebook
$\mathcal C_n$. Consider the state
\[
\omega_{\mathcal C_nJE^n}
:=
\sum_{c_n}P_{\mathcal C_n}(c_n)
|c_n\rangle\langle c_n|
\otimes
\frac{1}{2^{nR}}
\sum_{j=1}^{2^{nR}}
|j\rangle\langle j|
\otimes\rho_{E^n}^{x^n(j)}.
\]
Since $
\omega_{E^n}
=
\mathbb E_{\mathcal C_n}
\rho_{E^n}^{\mathcal C_n}
=
\rho_E^{\otimes n},$
we have
\[
D\left(
\rho_{\mathcal C_nE^n}
\,\middle\|\,
\rho_{\mathcal C_n}\otimes\rho_E^{\otimes n}
\right)
=
I(\mathcal C_n:E^n)_\omega.
\]
Let $Y^n:=X^n(J)$ be the codeword selected by $J$. Then $Y^n$ follows the distribution $P_X^{\otimes n}$, and the state of $E^n$ conditioned on $(\mathcal C_n,J)$ depends only on $Y^n$. Since $Y^n$ is a
deterministic function of $(\mathcal C_n,J)$, it follows that
\[
I(\mathcal C_nJ:E^n)_\omega
=
I(Y^n:E^n)_\omega
=
nI(X:E)_{\rho_{XE}}.
\]
On the other hand, the chain rule gives
\[
I(\mathcal C_nJ:E^n)_\omega
=
I(\mathcal C_n:E^n)_\omega
+
I(J:E^n\mid\mathcal C_n)_\omega.
\]
Since $J$ is classical and takes at most $2^{nR}$ values,
\[
I(J:E^n\mid\mathcal C_n)_\omega
\leq
H(J\mid\mathcal C_n)
=
nR.
\]
Consequently,
\[
D\left(
\rho_{\mathcal C_nE^n}
\,\middle\|\,
\rho_{\mathcal C_n}\otimes\rho_E^{\otimes n}
\right)
\geq
nI(X:E)_{\rho_{XE}}-nR.
\]
By the non-negativity of the relative entropy,
\[
D\left(
\rho_{\mathcal C_nE^n}
\,\middle\|\,
\rho_{\mathcal C_n}\otimes\rho_E^{\otimes n}
\right)
\geq
\left|nI(X:E)_{\rho_{XE}}-nR\right|^+.
\]
Dividing by $n$, taking the limit inferior, and using
$\frac{1}{n}nR\to R$, we obtain
\[
\varGamma_{\mathrm{sc}}^{(1)}(\rho_{XE},R)
\geq
\bigl|I(X:E)_{\rho_{XE}}-R\bigr|^+.
\]
By the monotonicity of the sandwiched R{\'e}nyi divergence in its
order, together with the continuity of $x\mapsto |x|^+$ and $
\lim_{\alpha\searrow1} I_{\alpha}(X:E)_{\rho_{XE}}
= I(X:E)_{\rho_{XE}},$
we obtain
\begin{align}
\varGamma_{\mathrm{sc}}^{(1)}(\rho_{XE},R)
&\leq \limsup_{\alpha\searrow 1}
\varGamma_{\mathrm{sc}}^{(\alpha)}(\rho_{XE},R) \notag\\
&\leq \lim_{\alpha\searrow 1}
\bigl|I_{\alpha}(X:E)_{\rho_{XE}}-R\bigr|^+ \notag\\
&=\bigl|I(X:E)_{\rho_{XE}}-R\bigr|^+.
\end{align}
\end{proof}

\section{Strong Converse Exponent of Quantum Privacy Amplification}
To prove the achievability part of
Theorem~\ref{thm:5-15}, we employ the random binning function.
A random function
\[
F:\mathcal X\to\mathcal Z=\{1,2,\ldots,M\}
\]
is called the random binning function if the family of random
variables $\{F(x)\}_{x\in\mathcal X}$ is mutually independent and
each $F(x)$ is uniformly distributed over $\mathcal Z$. That is,
for every $x\in\mathcal X$ and $z\in\mathcal Z$,
\begin{equation*}
\Pr\{F(x)=z\}
=
\frac{1}{|\mathcal Z|}
=
\frac{1}{M}.
\end{equation*}
Given $\rho_{XE}=\sum_{x\in\mathcal{X}} P_X(x)\, |x\rangle\langle x| \otimes \rho_E^x$, the C-Q state induced by the random binning function is defined as
\begin{equation*}
\mathcal{R}_{F}(\rho_{XE})
=\sum_{x\in\mathcal{X}} P_X(x)
|F(x)\rangle\langle F(x)| \otimes \rho_{E}^{x}.
\end{equation*}
Note that $\mathcal{R}_{F}(\rho_{XE})$ is a random state due to the randomness of $F$.

In \cite{liQiuZhang2026reliability}, the present authors have established the following one-shot achievability bound for
$\alpha\in(2,\infty)$.

\begin{lemma}\label{thm:one-shot}
Let $\alpha\in(2,\infty)$, $M=2^R\geq2$ and $F:\mathcal{X}\to\mathcal{Z}=\{1,2,\cdots,M\}$ be the random binning function.
There exists a constant $C_\alpha<\infty$, depending only on
$\alpha$, such that
\begin{align}
\mathbb E_FQ_\alpha\left(
\mathcal R_F(\rho_{XE})
\middle\|
\frac{\mathbbm{1}_\mathcal{Z}}{|\mathcal{Z}|}\otimes\rho_E
\right)\leq 1+
C_\alpha\Big[
2^{-\eta'(\alpha)}
+2^{-\frac{1}{2}\eta'(\alpha)
-\frac{1}{2}\eta(\alpha-1)}
+2^{1-\eta(\alpha-1)}
\Big],
\label{eq:PA-upper}
\end{align}
where $\eta(t):=t(H_{1+t}\left(X|E\right)_{\rho_{XE}}-R)$, $\eta'(t):=H_{2}^{(t)}(X|E)_{\rho_{XE}}-R$.
\end{lemma}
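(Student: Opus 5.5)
The plan is to reduce $\mathbb E_F Q_\alpha$ to a moment of one sum of independent positive random matrices. I would then expand that sum around its mean and control the fluctuation with a noncommutative Rosenthal-type inequality. Each of the three error terms in \eqref{eq:PA-upper} should come out of one piece of this expansion.

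\textbf{Step 1: reduction to one bin.} Set $\gamma:=\frac{1-\alpha}{2\alpha}$ and $B_x:=\rho_E^{\gamma}\rho_E^x\rho_E^{\gamma}$. Since $\mathcal R_F(\rho_{XE})$ is block diagonal in $Z$ and the target is $\frac{\mathbbm 1_{\mathcal Z}}{M}\otimes\rho_E$, we get
\[
Q_\alpha\Big(\mathcal R_F(\rho_{XE})\Big\|\tfrac{\mathbbm 1_{\mathcal Z}}{M}\otimes\rho_E\Big)=M^{\alpha-1}\sum_{z}\operatorname{Tr}\Gamma_z^\alpha,
\qquad
\Gamma_z:=\sum_x\mathbf 1\{F(x)=z\}P_X(x)B_x .
\]
The bins are exchangeable under the random binning, so $\mathbb E_FQ_\alpha=M^{\alpha}\,\mathbb E\operatorname{Tr}\Gamma_1^\alpha$. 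Here $\Gamma_1=\sum_xY_x$ with $Y_x:=\mathbf 1\{F(x)=1\}P_X(x)B_x$. These summands are independent and positive, with mean
\[
\mathbb E\,\Gamma_1=\Lambda:=M^{-1}\rho_E^{1/\alpha}.
\]

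\textbf{Step 2: second-order expansion.} Write $\Gamma_1=\Lambda+W$ with $W:=\sum_x(Y_x-\mathbb EY_x)$. For $\alpha>2$ I would use a trace inequality of the form
\[
\operatorname{Tr}(\Lambda+W)^\alpha\le\operatorname{Tr}\Lambda^\alpha+\alpha\operatorname{Tr}\Lambda^{\alpha-1}W+C_\alpha\big(\operatorname{Tr}\Lambda^{\alpha-2}W^2+\operatorname{Tr}|W|^\alpha\big),
\]
valid for $\Lambda\ge0$ and $\Lambda+W\ge0$. Taking expectations removes the linear term, and the pieces scale as follows.
\begin{itemize}
\item The leading term is $M^\alpha\operatorname{Tr}\Lambda^\alpha=\operatorname{Tr}\rho_E=1$.
\item For the quadratic term, independence gives $\mathbb EW^2\le\sum_x\mathbb EY_x^2=M^{-1}\sum_xP_X(x)^2B_x^2$. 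Hence
\[
M^\alpha\,\mathbb E\operatorname{Tr}\Lambda^{\alpha-2}W^2\le M\sum_xP_X(x)^2\operatorname{Tr}\big[B_x^2\rho_E^{\frac{\alpha-2}{\alpha}}\big]=2^{-\eta'(\alpha)}.
\]
This is exactly the mixed-order quantity $H_2^{(\alpha)}$.
\end{itemize}

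\textbf{Step 3: the $\alpha$-th moment of the fluctuation.} I would apply the matrix Rosenthal inequality (the upper-bound counterpart of Theorem~\ref{lem:reverse-matrix-rosenthal}) to $W$ at $p=\alpha\ge2$. This gives
\[
\mathbb E\operatorname{Tr}|W|^\alpha\lesssim_\alpha\sum_x\mathbb E\operatorname{Tr}Y_x^\alpha+\operatorname{Tr}\Big(\sum_x\mathbb EY_x^2\Big)^{\alpha/2}.
\]
The diagonal term is $M^{-1}\sum_xP_X(x)^\alpha\operatorname{Tr}B_x^\alpha$. After multiplying by $M^\alpha$ it becomes $M^{\alpha-1}2^{-(\alpha-1)H_\alpha(X|E)}=2^{-\eta(\alpha-1)}$, up to the factor $2$.

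The variance term $\operatorname{Tr}\big(M^{-1}\sum_xP_X(x)^2B_x^2\big)^{\alpha/2}$ has to be split between the order-$2$ and order-$\alpha$ quantities. The idea is to write $\|S\|_{\alpha/2}^{\alpha/2}$ with $S=\sum_xP_X(x)^2B_x^2$ and apply Hölder/log-convexity of Schatten norms. This should bound $\|S\|_{\alpha/2}^{\alpha/2}$ by the geometric mean of the weighted $L_2$ quantity $\operatorname{Tr}[S\rho_E^{(\alpha-2)/\alpha}]$ and the $L_\alpha$ quantity $\sum_xP_X(x)^\alpha\operatorname{Tr}B_x^\alpha$. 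The intended interpolation uses $B_x\le P_X(x)^{-1}\rho_E^{1/\alpha}$, which holds because $P_X(x)\rho_E^x\le\rho_E$. After multiplying by $M^\alpha$, this should produce the cross term $2^{-\frac12\eta'(\alpha)-\frac12\eta(\alpha-1)}$.

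\textbf{Main obstacle.} I expect two genuinely noncommutative steps to be the hard part. The first is the expansion inequality in Step~2, since $W$ does not commute with $\Lambda$. I would try to prove it by first pinching with respect to $\rho_E$ (so that $\Lambda$ is central), but pinching costs a polynomial factor. If that factor cannot be avoided, I would instead derive the inequality from a Fréchet-derivative bound for $t\mapsto\operatorname{Tr}(\Lambda+tW)^\alpha$ combined with operator convexity estimates for $x^{\alpha-2}$. The second is the interpolation of the variance term in Step~3, where the weights $\rho_E^{(\alpha-2)/\alpha}$ must be distributed correctly through Hölder's inequality.
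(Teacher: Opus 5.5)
The paper does not prove this lemma; it imports it from the authors' earlier work \cite{liQiuZhang2026reliability}, so the only thing to compare against is the shape of \eqref{eq:PA-upper}. Several parts of your proposal are correct:
the reduction to one bin; $\mathbb E\Gamma_1=M^{-1}\rho_E^{1/\alpha}$; the quadratic term giving exactly $2^{-\eta'(\alpha)}$; and the diagonal Rosenthal term giving $2^{-\eta(\alpha-1)}$ (the factor $2$ comes from $\mathbb E|\xi-\frac1M|^\alpha\le\frac2M$ after centering). Your architecture reproduces all three terms. As written, however, the proof has a gap at exactly the two steps you defer. For one of them, the mechanism you name cannot work.

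\textbf{Step 2.} The expansion inequality carries the entire ``$1+\cdots$'' form, and pinching is not available. The bound $\Gamma_1\le\nu(\rho_E)\mathcal E_{\rho_E}(\Gamma_1)$ multiplies the leading term by $\nu(\rho_E)^\alpha$, which is incompatible with a constant depending only on $\alpha$. The paper can pinch in Proposition~\ref{prop:strong-converse-achievability} only because there it needs exponents, not this form.

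The derivative route does work, but it must be carried out. With $X_t=\Lambda+tW\ge0$, divided differences of $x^{\alpha-1}$ give
\[
\tfrac{d^2}{dt^2}\operatorname{Tr}X_t^\alpha\le 2\alpha(\alpha-1)\operatorname{Tr}[X_t^{\alpha-2}W^2].
\]
The real issue is to show $\operatorname{Tr}[X_t^{s}W^2]\lesssim\operatorname{Tr}[\Lambda^{s}W^2]+\operatorname{Tr}|W|^{s+2}$ with $s=\alpha-2$. This does not follow from $X_t\le\Lambda+|W|$, since $x^s$ is not operator monotone for $s>1$. Nor does operator convexity of $x^{\alpha-2}$ help beyond $\alpha=4$.

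One way to complete it is as follows.
\begin{itemize}
\item Write $X_t^{s/2}=X_t^{r}X_t^{m}$ with $m\in\{0,1,2,\dots\}$ and $r\in(0,1]$, expand $X_t^m$ into words in $\Lambda$ and $tW$, and split $X_t^r=\Lambda^r+(X_t^r-\Lambda^r)$.
\item Control $\|X_t^r-\Lambda^r\|_{(s+2)/r}\le\|W\|_{s+2}^r$ by Ando's inequality for the operator monotone function $x^r$.
\item Bound every resulting product $\Lambda^{c_1}W\cdots\Lambda^{c_k}W$ by H\"older's inequality together with the Stein interpolation estimate $\|\Lambda^{c}W\|_{(s+2)/(c+1)}\le\|\Lambda^{s/2}W\|_2^{2c/s}\|W\|_{s+2}^{1-2c/s}$. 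The H\"older exponents sum to exactly $\frac12$.
\item Young's inequality then finishes the bound.
\end{itemize}

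\textbf{Step 3.} The domination $P_X(x)B_x\le\rho_E^{1/\alpha}$ cannot produce the cross term. Set $C_x=P_X(x)B_x$, $\sigma=\rho_E^{1/\alpha}$, $S=\sum_xC_x^2$, $a=\operatorname{Tr}[S\sigma^{\alpha-2}]$, and $b=\sum_x\operatorname{Tr}C_x^\alpha$. Domination alone allows $C_x=\sigma$ for $N$ indices. Then $\operatorname{Tr}S^{\alpha/2}=N^{\alpha/2}\operatorname{Tr}\sigma^\alpha$, while $\sqrt{ab}=N\operatorname{Tr}\sigma^\alpha$.

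What is needed is the sum identity $\sum_xC_x=\sigma$; commutatively, this is the Lyapunov step $S^{\alpha-1}\le\sigma^{\alpha-2}\sum_xC_x^\alpha$. Noncommutatively, apply Stein interpolation to $z\mapsto\sum_x e_{x1}\otimes C_x\sigma^{\psi(z)}$. Let $\psi$ run affinely from $\frac{\alpha-2}{2}$ (measured in $S_2$) to $-\frac{\alpha-2}{2(\alpha-1)}$ (measured in $S_{2(\alpha-1)}$), vanishing at $\theta=\frac{\alpha-1}{\alpha}$, where the norm is $S_\alpha$. This yields
\[
\operatorname{Tr}S^{\alpha/2}\le\sqrt{a\,\widetilde Q_{\alpha-1}(S\|\sigma)},\qquad
\widetilde Q_q(A\|B):=\operatorname{Tr}\bigl(B^{\frac{1-q}{2q}}AB^{\frac{1-q}{2q}}\bigr)^q .
\]
For $q\ge1$, $\widetilde Q_q$ is jointly convex \cite{FrankLieb2013} and jointly homogeneous of degree one, hence subadditive. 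Therefore
\[
\widetilde Q_{\alpha-1}\Bigl(\sum_xC_x^2\Big\|\sum_xC_x\Bigr)\le\sum_x\widetilde Q_{\alpha-1}(C_x^2\|C_x)=b .
\]
Multiplying by $M^{\alpha/2}$ gives exactly $\sqrt{Ma\cdot M^{\alpha-1}b}=2^{-\frac12\eta'(\alpha)-\frac12\eta(\alpha-1)}$. With these two inequalities supplied, your route proves the lemma as stated.
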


 We will use the above bound to prove the desired one-shot achievability bound of the strong converse exponent for quantum privacy amplification.
 
\begin{proposition}\label{prop:pa-one-shot-upper-s>1}
Let $\alpha\in(2,\infty)$, $M=2^R\geq2$ and $F:\mathcal{X}\to\mathcal{Z}=\{1,2,\cdots,M\}$ be the random binning function.
There exists a constant $C_\alpha<\infty$, depending only on
$\alpha$, such that
\begin{align}
&\mathbb E_F
Q_{\alpha}\left(
\mathcal R_F(\rho_{XE})
\Big\|
\frac{\mathbbm 1_{\mathcal Z}}{M}\otimes\rho_E
\right)
\leq
1+4C_\alpha
\max\{1,\exp\left\{
(\alpha-1)\left(
R-H_{\alpha}(X|E)_{\rho_{XE}}
\right)
\right\}\}.
\label{eq:pa-one-shot-upper-s>1}
\end{align}
\end{proposition}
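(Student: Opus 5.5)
Our plan is to deduce the proposition from Lemma~\ref{thm:one-shot} by a case analysis on the sign of $R-H_\alpha(X|E)_{\rho_{XE}}$, in the same way Proposition~\ref{prop:one-shot-upper-s>1} was obtained from Lemma~\ref{prop:two-sided-one-shot-detailed}. The needed inputs are two orderings. First, $H_{1+t}$ with $t=\alpha-1$ is exactly $H_\alpha$, so $2^{-\eta(\alpha-1)}=\exp\{(\alpha-1)(R-H_\alpha(X|E)_{\rho_{XE}})\}$. Second, we need the conditional-entropy analogue of the ordering used before, namely $H_2^{(\alpha)}(X|E)_{\rho_{XE}}\geq H_\alpha(X|E)_{\rho_{XE}}$ for $\alpha>2$. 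We take this from the same auxiliary lemma (Lemma~\ref{cor:additivity-ordering-information}), which we expect to contain the entropy version alongside $I_2^{(\alpha)}\leq I_\alpha$.

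Write $\Delta:=R-H_\alpha(X|E)_{\rho_{XE}}$, so that $\eta(\alpha-1)=-(\alpha-1)\Delta$. The ordering gives $\eta'(\alpha)=H_2^{(\alpha)}-R\geq -\Delta$.

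\emph{Case $\Delta\leq0$.} Here $\eta(\alpha-1)\geq0$ and $\eta'(\alpha)\geq -\Delta\geq0$. Each of the three terms in \eqref{eq:PA-upper} is therefore at most $1$, except the last, which is at most $2$. The bracket is bounded by $4$, and the right-hand side is at most $1+4C_\alpha$.

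\emph{Case $\Delta>0$.} Set $L:=(\alpha-1)\Delta>0$. Then $2^{-\eta(\alpha-1)}=2^{L}$. Since $\alpha-1>1$, we have $-\eta'(\alpha)\leq\Delta\leq(\alpha-1)\Delta=L$, so $2^{-\eta'(\alpha)}\leq 2^L$. The cross term equals $2^{-\frac12\eta'(\alpha)}\,2^{-\frac12\eta(\alpha-1)}$, which is at most $2^{L/2}2^{L/2}=2^L$. The last term is $2\cdot 2^{L}$. The bracket is thus at most $4\cdot2^L$, giving $1+4C_\alpha\exp\{(\alpha-1)\Delta\}$. Combining the two cases yields \eqref{eq:pa-one-shot-upper-s>1} with $\max\{1,\cdot\}$.

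The main obstacle is the ordering $H_2^{(\alpha)}\geq H_\alpha$. If it is not available in the cited lemma, we would prove it directly. The plan is to write $\sum_x P_X(x)^2\operatorname{Tr}[(\rho_E^{\frac{1-\alpha}{2\alpha}}\rho_E^x\rho_E^{\frac{1-\alpha}{2\alpha}})^2\rho_E^{\frac{\alpha-2}{\alpha}}]$ as a weighted trace of squares of the sandwiched operators. We would then apply Hölder's inequality with exponents $\alpha/2$ and $\alpha/(\alpha-2)$, together with $\operatorname{Tr}\rho_E=1$, to bound it by $(\sum_x P_X(x)^\alpha\operatorname{Tr}(\rho_E^{\frac{1-\alpha}{2\alpha}}\rho_E^x\rho_E^{\frac{1-\alpha}{2\alpha}})^\alpha)^{2/\alpha}$. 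A further step, exploiting the $P_X(x)^2$ weighting, converts the exponent $2/\alpha$ into the $(\alpha-1)$ normalization of $H_\alpha$. Everything else is elementary exponent bookkeeping.
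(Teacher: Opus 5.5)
Your proof is correct and follows the paper's argument: the same case split on the sign of $R-H_\alpha(X|E)_{\rho_{XE}}$ in Lemma~\ref{thm:one-shot}, using $H_2^{(\alpha)}(X|E)_{\rho_{XE}}\geq H_\alpha(X|E)_{\rho_{XE}}$ from Lemma~\ref{cor:additivity-ordering-information}, which does state this entropy version; your term-by-term bookkeeping just spells out what the paper leaves implicit. Your fallback sketch for that ordering is therefore unnecessary, and as written it would not work: the intermediate bound of $\sum_x P_X(x)^2\operatorname{Tr}[\cdots]$ by $\bigl(\sum_x P_X(x)^\alpha\operatorname{Tr}(\cdots)^\alpha\bigr)^{2/\alpha}$ already fails for trivial $E$ and uniform $P_X$ on two letters, where it reads $\frac12\leq 2^{2/\alpha-2}$, which is false for $\alpha>2$.
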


\begin{proof}
By Lemma~\ref{cor:additivity-ordering-information}, $H_{2}^{(\alpha)}(X|E)_{\rho_{XE}}\geq H_\alpha(X|E)_{\rho_{XE}}$. If $R\leq H_\alpha(X|E)_{\rho_{XE}}$, we obtain
\begin{align}
(\alpha-1)(H_{\alpha}\left(X|E\right)_{\rho_{XE}}-R)\geq0,\\
H_{2}^{(\alpha)}(X|E)_{\rho_{XE}}-R\geq0.
\end{align}
Applying Lemma~\ref{thm:one-shot}, we have
\begin{equation}
\mathbb E_F
Q_{\alpha}\left(
\mathcal R_F(\rho_{XE})
\Big\|
\frac{\mathbbm 1_{\mathcal Z}}{M}\otimes\rho_E
\right)
\leq
1+4C_\alpha.\label{eq-pa-1}
\end{equation}
If $R\geq H_\alpha(X|E)_{\rho_{XE}}$, we obtain
\begin{align}
-(\alpha-1)(H_{\alpha}\left(X|E\right)_{\rho_{XE}}-R)\geq -(H_{2}^{(\alpha)}(X|E)_{\rho_{XE}}-R).
\end{align}
Applying Lemma~\ref{thm:one-shot} again, we have
\begin{equation}
\mathbb E_F
Q_{\alpha}\left(
\mathcal R_F(\rho_{XE})
\Big\|
\frac{\mathbbm 1_{\mathcal Z}}{M}\otimes\rho_E
\right)
\leq
1+4C_\alpha\exp\left\{
(\alpha-1)\left(
R-H_{\alpha}(X|E)_{\rho_{XE}}
\right)
\right\}.\label{eq-pa-2}
\end{equation}
Combining \eqref{eq-pa-1} and \eqref{eq-pa-2} yields the desired result.
\end{proof}

\begin{proof}[Proof of Theorem~\ref{thm:5-15}.]
In \cite{LYH2023tight}, for any $\alpha>2$, it holds that
\begin{equation*}
\varGamma_{\rm pa}^{(\alpha)}(\rho_{XE},R)\geq \left|R-H_{\alpha}(X|E)_{\rho_{XE}}
\right|^+.
\end{equation*}
We now apply Proposition~\ref{prop:pa-one-shot-upper-s>1} to prove the desired upper bound.  Let
$F_n:\mathcal X^n\to\mathcal Z_n=\{1,\cdots,2^{nR}\}$ be a random binning function.
The additivity of the sandwiched R{\'e}nyi conditional
entropy gives
\begin{equation}
H_{\alpha}(X^n|E^n)_{\rho^{\otimes n}}
=
nH_{\alpha}(X|E)_{\rho_{XE}}.
\label{eq:pa-conditional-entropy-additivity}
\end{equation}
Applying Proposition~\ref{prop:pa-one-shot-upper-s>1} with
$F\leftarrow F_n$ and
$\rho_{XE}\leftarrow\rho_{XE}^{\otimes n}$,
for any $\alpha>2$, we obtain
\begin{align}
\limsup_{n\to\infty}
\frac{1}{n}
\log
\mathbb E_{F_n}
Q_{\alpha}\left(
\mathcal R_{F_n}(\rho_{XE}^{\otimes n})
\Big\|
\frac{\mathbbm 1_{\mathcal Z_n}}{|\mathcal Z_n|}
\otimes\rho_E^{\otimes n}
\right)
\leq
\max\left\{
0,\,
(\alpha-1)\left(
R-H_{\alpha}(X|E)_{\rho_{XE}}
\right)\right\}.
\label{eq:pa-strong-converse-max}
\end{align}
Then, there exists a realization $f_n$ of $F_n$ such that
\begin{align*}
\limsup_{n\to\infty}
\frac{1}{n}
D_{\alpha}\left(
\mathcal R_{f_n}(\rho_{XE}^{\otimes n})
\Big\|
\frac{\mathbbm 1_{\mathcal Z_n}}{|\mathcal Z_n|}
\otimes\rho_E^{\otimes n}
\right)
\leq
\left|
R-H_{\alpha}(X|E)_{\rho_{XE}}
\right|^+.
\end{align*}
From the definition of the strong converse exponent of privacy amplification, we have
\begin{equation*}
\varGamma_{\rm pa}^{(\alpha)}(\rho_{XE},R)\leq \left|R-H_{\alpha}(X|E)_{\rho_{XE}}
\right|^+.
\end{equation*}
\end{proof}

\section{Conclusion}

In this work, we have determined the exact strong converse exponent of
quantum soft covering under the sandwiched R{\'e}nyi divergence for all
orders $\alpha\in[\frac{1}{2},\infty)$. The resulting characterization
exhibits a transition at order one. For
$\alpha\in[\frac{1}{2},1)$, the exponent is governed by the
two-parameter club-sandwiched mutual information, whereas for
$\alpha\in[1,\infty)$, it is governed by the order $\alpha$
sandwiched R{\'e}nyi mutual information. Besides completing the strong
converse analysis of quantum soft covering, this result provides a
precise operational interpretation of the club-sandwiched mutual
information in the quantum setting.

We have also characterized the exact strong converse exponent of
privacy amplification against quantum side information for
$\alpha\in(2,\infty)$. Under the fixed marginal criterion, the exponent
is determined by the order $\alpha$ sandwiched R{\'e}nyi conditional
entropy. Together, these results reveal a complementary structure
between the two tasks: the failure of soft covering is governed by the
amount by which the relevant mutual information exceeds the coding
rate, whereas the failure of privacy amplification is governed by the
amount by which the extraction rate exceeds the corresponding
conditional entropy.

Together with earlier results
\cite{LiYao2024operational,rubboliTomamichel2026composable}, the present work completes the
characterization of the strong converse exponents associated with the
sandwiched R{\'e}nyi divergence over the full range of orders
$\alpha\in[\frac{1}{2},\infty)$. A natural next step is to determine
the exact strong converse exponents of quantum soft covering and
quantum privacy amplification under the trace distance. For quantum
soft covering, Cheng and Gao~\cite{chengGao2024softCovering} established a lower
bound on the strong converse exponent. For quantum privacy
amplification, lower bounds were obtained by Shen, Gao, and
Cheng~\cite{shenGaoCheng2023strongConversePA} and by Salzmann and
Datta~\cite{SalzmannDatta2022total}. These bounds provide only one side of
the corresponding exponents, and exact characterizations under the
trace distance remain open. Another interesting direction is to
investigate whether the two-parameter club-sandwiched mutual information and related
R{\'e}nyi information quantities admit further operational
interpretations in other quantum information processing tasks.

\appendix
\section{Auxiliary Lemmas}
\begin{lemma}[Reverse matrix Rosenthal inequality]
\label{lem:reverse-matrix-rosenthal}
Let $1<p<2$, and let
$B_1,\ldots,B_M$ be independent matrix-valued random variables
satisfying
\begin{equation*}
\mathbb E B_m=0,
\qquad
\mathbb E \Tr |B_m|^p<\infty,
\qquad
1\leq m\leq M.
\end{equation*}
Then there exists a constant $c_p>0$, depending only on $p$, such
that
\begin{align*}
&\left(
\mathbb E
\left\|
\sum_{m=1}^M e_{m1}\otimes B_m
\right\|_p^p
\right)^{1/p}\quad\geq
c_p
\inf_{B_m=D_m+C_m+R_m}
\Bigg\{
\left(
\sum_{m=1}^M
\mathbb E\operatorname{Tr}|D_m|^p
\right)^{1/p}
\\
&\hspace{35mm}
+
\left[
\operatorname{Tr}
\left(
\sum_{m=1}^M
\mathbb E C_m^*C_m
\right)^{p/2}
\right]^{1/p}
\\
&\hspace{35mm}
+
\left[
\sum_{m=1}^M
\operatorname{Tr}
\left(
\mathbb E R_mR_m^*
\right)^{p/2}
\right]^{1/p}
\Bigg\}.
\label{eq:reverse-matrix-rosenthal}
\end{align*}
The infimum is taken over all decompositions
$
B_m=D_m+C_m+R_m
$.
% where
% \begin{equation*}
% \mathbb E D_m
% =
% \mathbb E C_m
% =
% \mathbb E R_m
% =
% 0.
% \end{equation*}
\end{lemma}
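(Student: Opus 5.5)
The plan is to obtain the lower bound by duality from the \emph{upper} noncommutative Rosenthal inequality at the conjugate exponent $p'=p/(p-1)\in(2,\infty)$ (Junge--Xu). This mirrors the argument in Section~\ref{K-func} that led to \eqref{dual-def}.

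First, I will use the upper inequality itself. For independent mean-zero matrix-valued $Y_1,\dots,Y_M$ and $p'>2$ it reads
\[
\Big(\mathbb E\Big\|\sum_m Y_m\Big\|_{p'}^{p'}\Big)^{1/p'}\le C_{p'}\max\Big\{\Big(\sum_m\mathbb E\Tr|Y_m|^{p'}\Big)^{1/p'},\ \Big\|\Big(\sum_m\mathbb E Y_m^*Y_m\Big)^{1/2}\Big\|_{p'},\ \Big\|\Big(\sum_m\mathbb E Y_mY_m^*\Big)^{1/2}\Big\|_{p'}\Big\}.
\]
I apply it to $Y_m=e_{m1}\otimes G_m$, where $G_m$ is mean-zero and measurable with respect to the $m$-th independent variable. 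The three terms become explicit:
\begin{itemize}
\item the diagonal term is $(\sum_m\mathbb E\Tr|G_m|^{p'})^{1/p'}$, because $|Y_m|^{p'}=e_{11}\otimes|G_m|^{p'}$;
\item the column term is $[\Tr(\sum_m\mathbb E G_m^*G_m)^{p'/2}]^{1/p'}$, because $\sum_m\mathbb E Y_m^*Y_m=e_{11}\otimes\sum_m\mathbb E G_m^*G_m$;
\item the row term is $[\sum_m\Tr(\mathbb E G_mG_m^*)^{p'/2}]^{1/p'}$, because $\sum_m\mathbb E Y_mY_m^*=\sum_m e_{mm}\otimes\mathbb E G_mG_m^*$ is block diagonal.
\end{itemize}
These are exactly the norms dual to the three terms in the statement: an $\ell_p$-sum of $L_p$ spaces, a conditional column space over $\mathcal B(E)$, and an $\ell_p$-sum of conditional row spaces. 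The dualities are those recalled in Section~\ref{K-func}.

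Second, I will pair $S:=\sum_m e_{m1}\otimes B_m$ with $G:=\sum_m e_{m1}\otimes G_m$. Since $e_{1m}e_{m'1}=\delta_{mm'}e_{11}$, one has $\langle S,G\rangle=\sum_m\mathbb E\Tr(B_m^*G_m)$, and this identity needs no independence. H\"older's inequality then gives
\[
\Big|\sum_m\mathbb E\Tr(B_m^*G_m)\Big|\le\Big(\mathbb E\|S\|_p^p\Big)^{1/p}\Big(\mathbb E\|G\|_{p'}^{p'}\Big)^{1/p'}\le C_{p'}\Big(\mathbb E\|S\|_p^p\Big)^{1/p}
\]
whenever all three dual norms of $(G_m)$ are at most $1$.

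Third, I will identify the supremum of $|\sum_m\mathbb E\Tr(B_m^*G_m)|$ over this intersection ball with the infimum over decompositions $B_m=D_m+C_m+R_m$ in the statement. This is the sum-space/intersection-space duality, proved exactly as \eqref{dual-def}: one inequality is the triangle inequality plus H\"older, and the other is Hahn--Banach. The spaces are finite-dimensional, so there are no completion issues. The result is the claim with $c_p=C_{p'}^{-1}$, up to one extra constant handled below.

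The main obstacle is the mean-zero constraint. The Rosenthal upper bound only holds for mean-zero $G_m$, whereas the Hahn--Banach functional is a priori an arbitrary tuple. I would handle this by restricting the whole duality to the closed subspace of mean-zero tuples. On that subspace the functional extends, and it can be replaced by its mean-zero part $G_m-\mathbb E G_m$ without changing its action on mean-zero $B_m$. The map $G_m\mapsto G_m-\mathbb E G_m$ is bounded by $2$ on each of the three dual norms, since conditional expectation contracts all of them. This costs only a factor $2$ in $c_p$. It also explains why the infimum may be taken over mean-zero $D_m,C_m,R_m$, which is the convention used in Proposition~\ref{prop:one-shot-lower-alpha-half-one}.
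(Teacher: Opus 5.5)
Your argument is correct, but it takes a genuinely different route from the paper's.

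\textbf{The paper's route.} The paper does no duality at all. It quotes the $1<p<2$ half of the Junge--Xu Rosenthal inequality \cite[Theorem~3.3]{JungeXu2008}, which is already a lower bound by an infimum over decompositions. It applies this in $\mathcal M\overline{\otimes}\mathcal B(\ell_2^M)$ to the matrix whose only nonzero column is $(B_m)_m$ and decomposes the zero entries trivially. It then drops the centering constraint, since enlarging the admissible class can only lower the infimum.

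\textbf{Your route.} You recover that lower bound from the upper bound at $p'>2$ via sum/intersection duality, in effect reproving the $p<2$ half. The following steps all check out:
\begin{itemize}
\item the evaluation of the three terms for $Y_m=e_{m1}\otimes G_m$;
\item the pairing identity;
\item the identification of the dual norms: an $\ell_{p'}(L_{p'})$ norm, the $S_{p'}$ norm of one column operator, and an $\ell_{p'}$-sum of $S_{p'}$ norms of row operators;
\item the centering of $G$, which is harmless because $\mathbb E B_m=0$. It even contracts the column and row norms, since $\mathbb E(G_m-\mathbb E G_m)^*(G_m-\mathbb E G_m)\le\mathbb E G_m^*G_m$.
\end{itemize}

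\textbf{Minor presentation points.}
\begin{itemize}
\item For the lemma exactly as stated, with uncentered decompositions, no restriction is needed. Apply Hahn--Banach on tuples $(Z_m)$ with $Z_m$ a function of the $m$-th variable, then center the norming $G$.
\item If you want centered decompositions, the functional must be extended separately for each of the three norms, and each extension projected back by centering. The three projections coincide, since they represent the same functional on the centered subspace.
\item ``Finite-dimensional'' holds only for finitely supported laws. Hahn--Banach and the $L_{p'}$ representation of functionals work anyway for $1<p<\infty$.
\end{itemize}

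\textbf{What each route buys.} Yours uses only the more standard upper inequality and gives the explicit constant $c_p=(2C_{p'})^{-1}$. It also yields decompositions that are centered and functions of the $m$-th variable alone. That is exactly what Proposition~\ref{prop:one-shot-lower-alpha-half-one} needs when it evaluates $D_m(x,\pm1)$, but the lemma as stated does not record it. The paper's route is a one-line specialization, but it relies on the full strength of the cited $p<2$ theorem.
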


\begin{proof}
Let $B_1,\cdots,B_M$ be matrix-valued random variables defined on $\mathcal{X}$ and take values in $\mathcal{B}(\mathcal{H})$, with probability distributions $\mu_1,\cdots,\mu_m$. Consider the finite von Neumann algebra
\[
\mathcal{M}
=
L_\infty(\mathcal{X}^m,\mu_1\times\cdots\times \mu_m)
\overline{\otimes}
\mathcal B(\mathcal H).
\]
equipped with the trace
\[
\tau(\cdot)=\mathbb E_{X_1,\cdots,X_m}\operatorname{Tr}(\cdot).
\]
We identify
$
\mathcal N=\mathcal B(\mathcal H)
$
with the subalgebra of $\mathcal M$. Then $\mathcal{E}=\mathbb E_{X_1,\cdots,X_m}$ is the trace preserving conditional expectation from $\mathcal{M}$ to $\mathcal{N}$.
Since $B_1,\cdots,B_m$ are independent, they are independent with respect to $\mathcal{E}$.

We now amplify $\mathcal M$ by the matrix algebra
$\mathcal B(\ell_2^M)$ and set
\[
\widehat{\mathcal M}
=
\mathcal M
\overline{\otimes}
\mathcal B(\ell_2^M),
\qquad
\widehat{\tau}
=
\tau\otimes\operatorname{Tr}_{\ell_2^M}.
\]
Applying the $1<p<2$ part of the matrix Rosenthal inequality
\cite[Theorem~3.3]{JungeXu2008} to the $M\times M$ matrix $(x_{mn})$ where
\[
x_{m1}=B_m,\quad x_{mn}=0,
\qquad
1\leq m\leq M,\quad 2\leq n\leq M,
\]
gives a constant $c_p>0$, depending only on $p$, such that
\begin{align}
\bigg\|
&\sum_{m,n=1}^M x_{mn}\otimes  e_{mn}
\bigg\|_{L_p(\widehat{\mathcal M})}\geq c_p\inf_{x_{mn}=D_{mn}+C_{mn}+R_{mn}}
\Bigg\{
\left(
\sum_{m,n=1}^M
\|D_{mn}\|_{L_p(\mathcal M)}^p
\right)^{\frac{1}{p}}
\nonumber\\
&
+\left(
\sum_{n=1}^{M}\left\|
\left(
\sum_{m=1}^M
\mathcal E(C_{mn}^*C_{mn})
\right)^{\frac{1}{2}}
\right\|_{L_p(\mathcal N)}\right)^{\frac{1}{p}}
\!+\!
\left(
\sum_{m=1}^M
\left\|\sum_{n=1}^M \left(
\mathcal E(R_{mn}R_{mn}^*)\right)^{\frac{1}{2}}
\right\|_{L_p(\mathcal N)}^p
\right)^{\frac{1}{p}}
\Bigg\}.
\label{eq:JX-one-column}
\end{align}
Here the infimum is taken over all decompositions such that
\[
\mathcal E(D_{mn})
=
\mathcal E(C_{mn})
=
\mathcal E(R_{mn})
=
0.
\]

By the definition of $(x_{mn})$, we have
$$ \bigg\|\sum_{m,n=1}^M x_{mn}\otimes  e_{mn}
\bigg\|_{L_p(\widehat{\mathcal M})}=\bigg\|\sum_{m,n=1}^M B_m\otimes  e_{m1}
\bigg\|_{L_p(\widehat{\mathcal M})}
=
\left(\mathbb E
\left\|
\sum_{m=1}^M B_m\otimes e_{m1}
\right\|_p^p\right)^{\frac{1}{p}}.$$

For the right-hand side in~\eqref{eq:JX-one-column}, when $n\geq 2$, we have $x_{mn}=0$. It then achieves its infimum with the decomposition $x_{mn}=0+0+0$ for $n\geq 2$. Hence, one only needs to optimize over all decompositions $x_{m1}=B_m=D_m+C_m+R_m$. Here we remove the centered decompositions restriction, for the infimum over a possibly larger class will not change the inequality. The right-hand side in~\eqref{eq:JX-one-column} then equals 
\begin{align*}
\inf_{B_m=D_m+C_m+R_m}\Bigg\{\left(\sum_{m=1}^M
\mathbb E\operatorname{Tr}|D_m|^p\right)^{\frac{1}{p}}
+
\Bigg(\operatorname{Tr}
\Bigg(&
\sum_{m=1}^M
\mathbb E C_m^*C_m
\Bigg)^{\frac{p}{2}}\Bigg)^{\frac{1}{p}} \nb\\
+&
\left(\sum_{m=1}^M
\operatorname{Tr}
\left(
\mathbb E R_mR_m^*
\right)^{\frac{p}{2}}\right)^{\frac{1}{p}}\Bigg\}.
\end{align*}
This completes the proof.
\end{proof}

\begin{theorem}[Complex interpolation]\label{complex-interpolation}
Let $(X_0,X_1)$ and $(Y_0,Y_1)$ be compatible Banach couples. Let
\[
T:X_0+X_1\longrightarrow Y_0+Y_1
\]
be a linear operator such that
\[
\|T\|_{X_0\to Y_0}\leq M_0, \quad \|T\|_{X_1\to Y_1}\leq M_1.
\]
Then, for every $0<\theta<1$, the operator $T$ induces a bounded
linear operator
\[
T:[X_0,X_1]_\theta\longrightarrow [Y_0,Y_1]_\theta,
\]
and
\[
\|T\|_{[X_0,X_1]_\theta\to [Y_0,Y_1]_\theta}
\leq M_0^{\,1-\theta}M_1^{\,\theta}.
\]
\end{theorem}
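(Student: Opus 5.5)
This is the classical Riesz--Thorin / Calder\'on complex interpolation theorem, and I would prove it by the three-lines method on the Calder\'on space $\mathcal F(X_0,X_1)$.

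\emph{Setup.} Recall that $[X_0,X_1]_\theta$ consists of the values $f(\theta)$ of functions $f$ with the following properties:
\begin{itemize}
\item $f$ is bounded and continuous on the closed strip $\bar S=\{0\le\operatorname{Re}z\le1\}$ with values in $X_0+X_1$;
\item $f$ is analytic in the interior of $S$;
\item $t\mapsto f(j+it)$ is continuous and bounded into $X_j$ for $j=0,1$.
\end{itemize}
The norm is $\|x\|_\theta=\inf\{\|f\|_{\mathcal F}: f(\theta)=x\}$, where $\|f\|_{\mathcal F}=\max_j\sup_t\|f(j+it)\|_{X_j}$.

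\emph{Step 1: push admissible functions forward.} Fix $x\in[X_0,X_1]_\theta$ and $\varepsilon>0$. Choose $f\in\mathcal F(X_0,X_1)$ with $f(\theta)=x$ and $\|f\|_{\mathcal F}\le\|x\|_\theta+\varepsilon$. If $M_0=0$ or $M_1=0$, replace it by a small $\delta>0$ and let $\delta\to0$ at the end. Define
\[
g(z):=M_0^{z-1}M_1^{-z}\,Tf(z).
\]
Since $T$ is bounded $X_0+X_1\to Y_0+Y_1$, the function $Tf$ is continuous and analytic into $Y_0+Y_1$. The boundedness of $T$ on $X_j\to Y_j$ gives continuity and boundedness of $t\mapsto Tf(j+it)$ in $Y_j$.

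\emph{Step 2: boundary bounds.} Since $|M_0^{it}M_1^{-it}|=1$, we have $\|g(it)\|_{Y_0}\le M_0^{-1}M_0\|f(it)\|_{X_0}$. Likewise $\|g(1+it)\|_{Y_1}\le\|f(1+it)\|_{X_1}$. Hence $g\in\mathcal F(Y_0,Y_1)$ with $\|g\|_{\mathcal F}\le\|f\|_{\mathcal F}$.

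\emph{Step 3: conclude.} Since $g(\theta)=M_0^{\theta-1}M_1^{-\theta}Tx$, we get $Tx\in[Y_0,Y_1]_\theta$ and
\[
\|Tx\|_{[Y_0,Y_1]_\theta}\le M_0^{1-\theta}M_1^\theta(\|x\|_\theta+\varepsilon).
\]
Letting $\varepsilon\to0$ finishes the proof.

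\emph{Main obstacle.} The only delicate point is the bookkeeping that $g$ genuinely lies in $\mathcal F(Y_0,Y_1)$: boundedness on the strip, and continuity of the boundary traces in the smaller spaces $Y_j$. Both follow directly from the two operator bounds, as above. The degenerate case $M_j=0$ is handled by the perturbation mentioned in Step 1. In the finite-dimensional setting of this paper all spaces coincide as vector spaces, so the measurability and continuity issues are trivial. Alternatively, one may simply cite \cite[Theorem~4.1.2]{BerghLofstrom1976interpolation}.
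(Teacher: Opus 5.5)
The paper gives no proof of its own here: it states this in the appendix as a standard auxiliary result (Theorem~4.1.2 in Bergh--L\"ofstr\"om), and your argument with $g(z)=M_0^{z-1}M_1^{-z}Tf(z)$ is exactly that standard three-lines proof and is correct. Since $|M_0^{z-1}M_1^{-z}|$ depends only on $\operatorname{Re}z$, it also works under the Bergh--L\"ofstr\"om convention where boundary values must tend to zero; the one step you assert without justification, boundedness of $T:X_0+X_1\to Y_0+Y_1$, follows at once from the endpoint bounds with constant $\max\{M_0,M_1\}$.
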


\begin{lemma}[McCarthy’s inequality~\cite{McCarthy1967Cp}]\label{Mc}
Let $P$ and $Q$ be positive semi-definite matrices. For $\alpha\in[0,1]$,
we have
\begin{equation*}
{\rm Tr}(P+Q)^{\alpha}\le{\rm Tr}P^{\alpha}+{\rm Tr}Q^{\alpha}.
\end{equation*}
For $\alpha>1$, we have
\begin{equation*}
{\rm Tr}(P+Q)^{\alpha}\ge{\rm Tr}P^{\alpha}+{\rm Tr}Q^{\alpha}.
\end{equation*}
\end{lemma}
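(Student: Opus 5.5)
This is McCarthy's inequality for positive semidefinite matrices; I prove it by reducing the trace of a power to a sum of eigenvalue weights and then using monotonicity of the function $T\mapsto T^{\alpha-1}$.

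\textbf{Trivial cases.} The endpoints $\alpha=0$ and $\alpha=1$ are immediate. At $\alpha=1$ both sides agree by linearity of the trace. At $\alpha=0$ we use the convention $\operatorname{Tr}P^0=\operatorname{rank}P$, and the claim becomes
\[
\operatorname{rank}(P+Q)\le \operatorname{rank}P+\operatorname{rank}Q,
\]
which holds because $\operatorname{supp}(P+Q)=\operatorname{supp}P+\operatorname{supp}Q$ for positive semidefinite $P,Q$. We may also assume $P+Q\neq 0$.

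\textbf{The splitting identity.} Write
\[
\operatorname{Tr}(P+Q)^{\alpha}=\operatorname{Tr}\,P(P+Q)^{\alpha-1}+\operatorname{Tr}\,Q(P+Q)^{\alpha-1}.
\]
For $\alpha<1$ the negative power $(P+Q)^{\alpha-1}$ is taken on $\operatorname{supp}(P+Q)$. This is harmless, because $\operatorname{supp}P$ and $\operatorname{supp}Q$ are both contained in $\operatorname{supp}(P+Q)$. It therefore suffices to show
\[
\operatorname{Tr}\,P(P+Q)^{\alpha-1}\le \operatorname{Tr}P^{\alpha}\quad(0<\alpha<1),\qquad \operatorname{Tr}\,P(P+Q)^{\alpha-1}\ge \operatorname{Tr}P^{\alpha}\quad(\alpha>1),
\]
together with the same two inequalities for $Q$, which follow by symmetry.

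\textbf{Case $\alpha>1$.} Here $\alpha-1>0$, and we want to compare $(P+Q)^{\alpha-1}$ with $P^{\alpha-1}$ inside the trace against $P$. The map $T\mapsto T^{\alpha-1}$ is operator monotone only when $\alpha\le 2$, so we do not use operator monotonicity. We use the weaker fact that $T\mapsto \operatorname{Tr}f(T)$ is monotone for any increasing $f$, combined with an eigenvalue argument.

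Let $P=\sum_i p_i\,|u_i\rangle\langle u_i|$ be the spectral decomposition of $P$. Then
\[
\operatorname{Tr}\,P(P+Q)^{\alpha-1}=\sum_i p_i\,\langle u_i|(P+Q)^{\alpha-1}|u_i\rangle .
\]
Since $\alpha-1$ may be below $1$, Jensen's inequality for unit vectors does not apply directly. Instead, I handle this case by convexity of $t\mapsto t^{\alpha}$ and a majorization argument:
\[
\operatorname{Tr}(P+Q)^{\alpha}=\sum_k \lambda_k(P+Q)^{\alpha}\ \ge\ \sum_k\bigl(\lambda_k(P)+\lambda_k(Q)\bigr)^{\alpha}\ \ge\ \sum_k\lambda_k(P)^{\alpha}+\sum_k\lambda_k(Q)^{\alpha}.
\]
Here $\lambda_k(\cdot)$ denotes eigenvalues in decreasing order.
\begin{itemize}
\item The first inequality follows from Ky Fan's majorization $\lambda(P+Q)\succ\lambda(P)+\lambda(Q)$ and the fact that $\sum_k\phi(\lambda_k)$ is Schur-convex for convex $\phi$.
\item The second inequality is the scalar superadditivity $(s+t)^{\alpha}\ge s^{\alpha}+t^{\alpha}$ for $s,t\ge0$ and $\alpha\ge1$.
\end{itemize}

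\textbf{Case $0<\alpha<1$.} The same majorization argument reverses. The function $t\mapsto t^{\alpha}$ is concave, so $\sum_k\phi(\lambda_k)$ is Schur-concave and the majorization gives
\[
\operatorname{Tr}(P+Q)^{\alpha}\le\sum_k\bigl(\lambda_k(P)+\lambda_k(Q)\bigr)^{\alpha}.
\]
The scalar subadditivity $(s+t)^{\alpha}\le s^{\alpha}+t^{\alpha}$ for $s,t\ge0$ then gives the claim. Alternatively, operator monotonicity of $t\mapsto t^{\alpha-1}$ reversed (it is operator monotone decreasing on $(0,\infty)$ for $\alpha\in[0,1)$) yields the bound via the splitting identity directly.

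\textbf{Main obstacle.} The only nontrivial input is Ky Fan's majorization for sums of Hermitian matrices, together with the Schur-convexity and concavity facts. I cite these as standard rather than prove them.
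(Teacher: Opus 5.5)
\textbf{There is a genuine gap: the majorization you cite points the wrong way, and it breaks the main chain in both cases.} Ky Fan's maximum principle gives $\sum_{k\le j}\lambda_k(P+Q)\le\sum_{k\le j}\lambda_k(P)+\sum_{k\le j}\lambda_k(Q)$ for every $j$, i.e.\ $\lambda(P+Q)\prec\lambda^{\downarrow}(P)+\lambda^{\downarrow}(Q)$, not $\succ$.

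With the correct direction, Schur-convexity yields $\operatorname{Tr}(P+Q)^{\alpha}\le\sum_k(\lambda_k(P)+\lambda_k(Q))^{\alpha}$ for $\alpha>1$, and the reverse inequality for $\alpha<1$. That is the opposite of what each of your chains needs, so it cannot be combined with the scalar super/subadditivity.

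In fact your first inequality is false in both regimes. Take $P=|0\rangle\langle 0|$ and $Q=|1\rangle\langle 1|$. Then $\operatorname{Tr}(P+Q)^{\alpha}=2$, while $\sum_k(\lambda_k(P)+\lambda_k(Q))^{\alpha}=2^{\alpha}$. Your $\alpha>1$ chain therefore asserts $2\ge 2^{\alpha}$, and your $0<\alpha<1$ chain asserts $2\le 2^{\alpha}$. Pairing the eigenvalues in the same order produces the most spread-out vector, so it can only bound $\operatorname{Tr}(P+Q)^{\alpha}$ from the wrong side.

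\textbf{Repair via opposite ordering.} Lidskii's theorem $\lambda^{\downarrow}(A)-\lambda^{\downarrow}(B)\prec\lambda(A-B)$, applied with $A=P$ and $B=-Q$, gives $\lambda^{\downarrow}(P)+\lambda^{\uparrow}(Q)\prec\lambda(P+Q)$. Schur-convexity (resp.\ concavity) of $\sum_k t_k^{\alpha}$ then gives $\operatorname{Tr}(P+Q)^{\alpha}\ge$ (resp.\ $\le$) $\sum_k(\lambda^{\downarrow}_k(P)+\lambda^{\uparrow}_k(Q))^{\alpha}$. Your scalar inequality finishes the argument, since $\sum_k\lambda^{\uparrow}_k(Q)^{\alpha}=\operatorname{Tr}Q^{\alpha}$.

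\textbf{Repair via your splitting identity.} This route needs no majorization at all.
\begin{itemize}
\item For $1<\alpha\le 2$, the map $t\mapsto t^{\alpha-1}$ is operator monotone, so $(P+Q)^{\alpha-1}\ge P^{\alpha-1}$.
\item For $\alpha\ge 2$, the map is convex and increasing, so Jensen for unit vectors gives $\langle u_i|(P+Q)^{\alpha-1}|u_i\rangle\ge\langle u_i|P+Q|u_i\rangle^{\alpha-1}\ge p_i^{\alpha-1}$. You discarded Jensen only where $\alpha-1<1$, and that is exactly the range covered by operator monotonicity.
\item For $0<\alpha<1$, your fallback via the operator-monotone decrease of $t\mapsto t^{\alpha-1}$ is correct. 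However, $\operatorname{supp}P$ may be strictly smaller than $\operatorname{supp}(P+Q)$. You must therefore compare $(P+Q+\varepsilon I)^{\alpha-1}\le(P+\varepsilon I)^{\alpha-1}$ and then let $\varepsilon\to 0$.
\end{itemize}

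The paper itself gives no proof of this lemma; it only cites McCarthy.
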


\begin{lemma}[\cite{liQiuZhang2026reliability}]\label{cor:additivity-ordering-information}
Let $\rho_{XE}$ be a C-Q state. For any $\alpha\in(2,\infty)$, we have
\begin{align*}
H_\alpha(X|E)_{\rho_{XE}}
\leq&H_2^{(\alpha)}(X|E)_{\rho_{XE}}, \\
I_2^{(\alpha)}(X:E)_{\rho_{XE}}
\leq&I_\alpha(X:E)_{\rho_{XE}}.
\end{align*}
\end{lemma}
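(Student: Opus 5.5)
This lemma is Lemma~\ref{cor:additivity-ordering-information}: for $\alpha>2$, $H_\alpha(X|E)\le H_2^{(\alpha)}(X|E)$ and $I_2^{(\alpha)}(X:E)\le I_\alpha(X:E)$. The plan is to reduce both inequalities to a single H\"older-type trace inequality for each fixed letter $x$ and then sum over $x$. Write $\gamma:=\frac{1-\alpha}{2\alpha}$ and $Y_x:=\rho_E^{\gamma}\rho_E^x\rho_E^{\gamma}\ge0$. Then
\[
Q_\alpha(\rho_{XE}\|\rho_X\otimes\rho_E)=\sum_x P_X(x)\operatorname{Tr}Y_x^\alpha,
\qquad
2^{I_2^{(\alpha)}}=\sum_xP_X(x)\operatorname{Tr}\bigl[Y_x^2\rho_E^{\frac{\alpha-2}{\alpha}}\bigr],
\]
with $\rho_X^{\gamma}$ acting trivially on the classical register in the first identity. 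Similarly, with reference $\mathbbm 1_X\otimes\rho_E$ one has $2^{-(\alpha-1)H_\alpha}=\sum_xP_X(x)^\alpha\operatorname{Tr}Y_x^\alpha$ and $2^{-H_2^{(\alpha)}}=\sum_xP_X(x)^2\operatorname{Tr}[Y_x^2\rho_E^{(\alpha-2)/\alpha}]$.

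The key per-letter step is H\"older's inequality with exponents $\frac{\alpha}{2}$ and $\frac{\alpha}{\alpha-2}$:
\[
\operatorname{Tr}\bigl[Y_x^2\rho_E^{\frac{\alpha-2}{\alpha}}\bigr]\le \|Y_x^2\|_{\alpha/2}\,\bigl\|\rho_E^{\frac{\alpha-2}{\alpha}}\bigr\|_{\frac{\alpha}{\alpha-2}}=(\operatorname{Tr}Y_x^\alpha)^{2/\alpha}\,(\operatorname{Tr}\rho_E)^{\frac{\alpha-2}{\alpha}}=(\operatorname{Tr}Y_x^\alpha)^{2/\alpha}.
\]

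For the mutual-information inequality, sum over $x$ with weights $P_X(x)$ and apply Jensen to the concave map $t\mapsto t^{2/\alpha}$ (concave since $\alpha>2$):
\[
2^{I_2^{(\alpha)}}\le\sum_xP_X(x)(\operatorname{Tr}Y_x^\alpha)^{2/\alpha}\le\Bigl(\sum_xP_X(x)\operatorname{Tr}Y_x^\alpha\Bigr)^{2/\alpha}=2^{\frac{2(\alpha-1)}{\alpha}I_\alpha}.
\]
This gives $I_2^{(\alpha)}\le\frac{2(\alpha-1)}{\alpha}I_\alpha$. Since $I_\alpha\ge0$ and $\frac{2(\alpha-1)}{\alpha}\ge1$ for $\alpha\ge2$, this is weaker than the claim $I_2^{(\alpha)}\le I_\alpha$, so the main obstacle is to remove the factor $\frac{2(\alpha-1)}{\alpha}$. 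A cleaner idea is to view $I_2^{(\alpha)}$ as a Petz-type quantity at order $2$ built from the operators $Y_x$. One would then use monotonicity in the order $\frac{\alpha}{2}$, i.e.\ log-convexity of $s\mapsto\sum_xP_X(x)\operatorname{Tr}[Y_x^{s}\rho_E^{1-s(\ldots)}]$ interpolated between $s=1$ (value $1$, since $\sum_xP_X(x)Y_x=\rho_E^{1/\alpha}$) and $s=\alpha$. Concretely, I would prove that $\phi(s):=\log\sum_xP_X(x)\operatorname{Tr}\bigl[Y_x^{s}\rho_E^{\frac{\alpha-s}{\alpha}}\bigr]$ is convex in $s\in[1,\alpha]$, by complex interpolation (Hadamard three lines), as in standard Rényi monotonicity proofs. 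Since $\phi(1)=0$, convexity at $s=2=\frac{\alpha-2}{\alpha-1}\cdot1+\frac{1}{\alpha-1}\cdot\alpha$ gives $\phi(2)\le\frac{1}{\alpha-1}\phi(\alpha)$. That is exactly $I_2^{(\alpha)}\le I_\alpha$, because $\phi(\alpha)=(\alpha-1)I_\alpha$ and $\phi(2)=I_2^{(\alpha)}$. Establishing this convexity, in place of the crude H\"older bound, is the step I expect to be hardest.

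The conditional-entropy case follows the same convexity argument with weights $P_X(x)^s$. One sets $\psi(s):=\log\sum_xP_X(x)^s\operatorname{Tr}\bigl[Y_x^s\rho_E^{\frac{\alpha-s}{\alpha}}\bigr]$, whose endpoint value is $\psi(1)=\log\operatorname{Tr}\rho_E^{1/\alpha}\rho_E^{(\alpha-1)/\alpha}=0$. Convexity then gives $\psi(2)\le\frac{1}{\alpha-1}\psi(\alpha)$, that is, $-H_2^{(\alpha)}\le -H_\alpha$, which is the desired inequality.
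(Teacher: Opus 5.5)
The paper does not prove this lemma; it is quoted from the companion work \cite{liQiuZhang2026reliability}, so there is no in-paper argument to compare with. Judged on its own, your reduction is correct and gives a complete proof once the convexity step is filled in. The following all check out:
\begin{itemize}
\item $\sum_xP_X(x)Y_x=\rho_E^{1/\alpha}$, hence $\phi(1)=\psi(1)=0$;
\item $\phi(\alpha)=(\alpha-1)I_\alpha$ and $\psi(\alpha)=-(\alpha-1)H_\alpha$;
\item $\phi(2)=I_2^{(\alpha)}$ and $\psi(2)=-H_2^{(\alpha)}$;
\item the weights $2=\frac{\alpha-2}{\alpha-1}\cdot1+\frac{1}{\alpha-1}\cdot\alpha$.
\end{itemize}
Here negative powers of $\rho_E$ are taken on its support, $\operatorname{supp}Y_x\subseteq\operatorname{supp}\rho_E$, and $\rho_E^{0}$ at $s=\alpha$ is read as the support projection. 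You were also right to abandon the per-letter H\"older bound, which only yields $I_2^{(\alpha)}\le\frac{2(\alpha-1)}{\alpha}I_\alpha$.

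The step you single out as hardest is elementary and needs no three-lines argument. Discard letters with $P_X(x)=0$ and take spectral decompositions $Y_x=\sum_i a_{x,i}P_{x,i}$ with $a_{x,i}>0$ and $\rho_E=\sum_j b_jQ_j$ with $b_j>0$. Then, for $s>0$,
\[
\operatorname{Tr}\bigl[Y_x^{s}\rho_E^{1-s/\alpha}\bigr]=\sum_{i,j}\operatorname{Tr}(P_{x,i}Q_j)\,b_j\,2^{\,s\left(\log a_{x,i}-\frac{1}{\alpha}\log b_j\right)}.
\]
This is a combination of exponentials of affine functions of $s$ with nonnegative coefficients $\operatorname{Tr}(P_{x,i}Q_j)\,b_j$. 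Multiplying by $P_X(x)$ (for $\phi$) or by $P_X(x)^s=2^{s\log P_X(x)}$ (for $\psi$) and summing over $x$ keeps this form. Any such sum $F(s)=\sum_k c_k2^{\ell_k(s)}$ with $c_k\ge0$ and $\ell_k$ affine is log-convex by H\"older's inequality for finite sums:
\[
F(\lambda s_0+(1-\lambda)s_1)=\sum_k\bigl(c_k2^{\ell_k(s_0)}\bigr)^{\lambda}\bigl(c_k2^{\ell_k(s_1)}\bigr)^{1-\lambda}\le F(s_0)^{\lambda}F(s_1)^{1-\lambda}.
\]
This is the same log-convexity that underlies monotonicity of Petz-type R\'enyi quantities.

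Your route also gives a small bonus. Convexity together with $\phi(1)=\psi(1)=0$ makes $s\mapsto\phi(s)/(s-1)$ and $s\mapsto\psi(s)/(s-1)$ nondecreasing. So you get the ordering against $I_\alpha$ and $H_\alpha$ for every intermediate order $s\in(1,\alpha]$, not only $s=2$, and both inequalities come from a single mechanism.
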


\section{Amalgamated $L_p$ Spaces}\label{amag}
We refer to \cite{JungeParcet2010mixed} for the details about the following discussion on the amalgamated $L_p$ spaces. Let $\mathcal M$ be a von Neumann algebra equipped with a normal faithful trace, let $\mathcal N\subseteq\mathcal M$ be a von Neumann
subalgebra. For $p\geq 1$, set
\[
\mathcal K
:=
\left\{
\left(\frac1u,\frac1v,\frac1q\right):
2\leq u,v\leq\infty,\quad
1\leq q\leq\infty,\quad
\frac1u+\frac1q+\frac1v=\frac{1}{p}\leq 1
\right\}.
\]
For $(u,v,q)\in \mathcal{K}$, we define
$$\|x\|_{u,q,v}=\inf \{\|a\|_{L_u(\mathcal N)}\|y\|_{L_q(\mathcal M)}\|b\|_{L_v(\mathcal N)}\,:\,x=ayb\in L_p(\mathcal M)\}$$
and
\[
L_u(\mathcal N)L_q(\mathcal M)L_v(\mathcal N)=\{x\in L_p(\mathcal{M})\,:\, \|x\|_{u,q,v}<\infty\}.
\]
Let $\mathcal E$ be a conditional expectation from $\mathcal M$ to $\mathcal N$. For $p\geq 1$, we have
$$ L_\infty(\mathcal N)L_p(\mathcal M)L_\infty(\mathcal N)=L_p(\mathcal{M}),\quad L_\infty(\mathcal N)L_2(\mathcal M)L_s(\mathcal N)=L_p^c(\mathcal M,\mathcal E),$$
where $1/p=1/2+1/s$.
For $j=0,1$, suppose that
\[
\left(\frac1{u_j},\frac1{v_j},\frac1{q_j}\right)
\in\mathcal K.
\]
Given $0\leq\theta\leq1$, define
\[
\frac1{u_\theta}
=
\frac{1-\theta}{u_0}+\frac{\theta}{u_1},
\qquad
\frac1{q_\theta}
=
\frac{1-\theta}{q_0}+\frac{\theta}{q_1},
\qquad
\frac1{v_\theta}
=
\frac{1-\theta}{v_0}+\frac{\theta}{v_1}.
\]
The following interpolation result is \cite[Theorem~3.2]{JungeParcet2010mixed}.
\begin{theorem}[Junge--Parcet]\label{amag-inter}
For every
$
\left(\frac1u,\frac1v,\frac1q\right)\in\mathcal K
$, the amalgamated space
\[
L_u(\mathcal N)L_q(\mathcal M)L_v(\mathcal N)
\]
is a Banach space. Moreover, one has the isometric complex
interpolation identity
\[
\left[
L_{u_0}(\mathcal N)L_{q_0}(\mathcal M)L_{v_0}(\mathcal N),
L_{u_1}(\mathcal N)L_{q_1}(\mathcal M)L_{v_1}(\mathcal N)
\right]_\theta
=
L_{u_\theta}(\mathcal N)
L_{q_\theta}(\mathcal M)
L_{v_\theta}(\mathcal N).
\]
\end{theorem}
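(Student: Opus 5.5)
The plan has three parts: the Banach-space assertion, the easy half of the interpolation identity via analytic families, and the hard half via duality. Throughout, all algebras may be taken finite-dimensional, as in our application, so completeness and reflexivity are automatic.

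\emph{Banach property.} Positive homogeneity and $\|x\|_{u,q,v}=0\Rightarrow x=0$ follow from H\"older's inequality $\|ayb\|_{L_p(\mathcal M)}\le\|a\|_u\|y\|_q\|b\|_v$. The real content is the triangle inequality, and this is where $u,v\ge2$ enters.

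\begin{itemize}
\item Take $x_j=a_jy_jb_j$ for $j=1,2$, normalized by scaling so that $\|a_j\|_u^2=\|b_j\|_v^2=\lambda_j$ and $\|y_j\|_q^{\,}=\mu$. This normalization splits a weight between the outer factors.
\item Write $x_1+x_2$ as a row $(a_1\ a_2)$, times the block-diagonal $\mathrm{diag}(y_1,y_2)$, times a column $(b_1;b_2)$.
\item By polar decomposition, $(a_1\ a_2)=(a_1a_1^*+a_2a_2^*)^{1/2}W$ for a contraction $W$, and similarly for the column.
\item Since $u/2\ge1$, the triangle inequality in $L_{u/2}(\mathcal N)$ gives $\|(a_1a_1^*+a_2a_2^*)^{1/2}\|_u\le(\|a_1\|_u^2+\|a_2\|_u^2)^{1/2}$, and likewise for $v$.
\item Absorb the contractions into the middle factor, which stays in $L_q(\mathcal M\otimes M_2)$ compressed back to $\mathcal M$. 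Optimizing the weights then yields $\|x_1+x_2\|\le\|x_1\|+\|x_2\|$ by the usual arithmetic–geometric mean trick.
\end{itemize}

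\emph{Easy half of the interpolation identity.} We want
\[
\|x\|_{[\cdot,\cdot]_\theta}\le\|x\|_{u_\theta,q_\theta,v_\theta}.
\]
Given a near-optimal factorization $x=ayb$, build analytic families $a(z)=w_a|a|^{u_\theta(\frac{1-z}{u_0}+\frac{z}{u_1})}$, and analogously $y(z)$ and $b(z)$, using polar parts. Then $f(z)=a(z)y(z)b(z)$ is an admissible analytic function on the strip with $f(\theta)=x$. On $\mathrm{Re}\,z=j$ we have $\|f(j+it)\|_{u_j,q_j,v_j}\le\|a\|_{u_\theta}^{u_\theta/u_j}\cdots$, and the three-lines definition of the complex method gives the bound. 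This is just the trilinear form of Theorem~\ref{complex-interpolation}.

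\emph{Hard half, via duality.} For the reverse inequality I would not try to factor an analytic family directly. Instead I would argue through duals.

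\begin{enumerate}
\item Identify the dual of $L_u(\mathcal N)L_q(\mathcal M)L_v(\mathcal N)$ under the trace pairing with the ``intersection-type'' space of $g\in L_{p'}(\mathcal M)$ normed by
\[
\sup\bigl\{\|\alpha g\beta\|_{L_{q'}(\mathcal M)}:\ \|\alpha\|_{L_{u}(\mathcal N)}\le1,\ \|\beta\|_{L_{v}(\mathcal N)}\le1\bigr\}.
\]
The inequality ``$\le$'' is H\"older's inequality.
\item Establish equality in this duality. This requires a Hahn–Banach/minimax argument: the supremum over $(\alpha,\beta)$ and the infimum over factorizations of $x$ are exchanged by convexity in $(|\alpha|^{u},|\beta|^{v})$, which again uses $u,v\ge2$ so that these power maps behave concavely.
\item Interpolate the sup-type dual spaces. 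For these the upper bound $\|g\|_\theta\le\ldots$ is the easy direction, obtained by inserting analytic $\alpha(z),\beta(z)$ into $\langle f(z),\cdot\rangle$ and applying the three-lines lemma.
\item Apply Calder\'on duality \cite[Corollary~4.5.2]{BerghLofstrom1976interpolation}, valid by reflexivity, to transfer the dual estimate back and obtain the missing inequality for the amalgamated spaces, hence the isometric identity.
\end{enumerate}

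I expect step 2, the exact dual-norm identification (the minimax step making the duality isometric), to be the main obstacle. I would first try to prove it by fixing $x$, choosing the optimal $\alpha,\beta$ from a compactness argument on the unit balls, and reading off the factorization of $x$ from the first-order optimality conditions.
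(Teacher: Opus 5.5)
The paper gives no proof of this statement; it quotes it from \cite[Theorem~3.2]{JungeParcet2010mixed}. Your self-contained attempt has two genuine gaps.

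\textbf{The triangle inequality is only proved for $q=\infty$.} After the row/column polar decompositions, the middle factor is controlled only by $\|\mathrm{diag}(y_1,y_2)\|_{L_q(\mathcal M\otimes M_2)}=(\|y_1\|_q^q+\|y_2\|_q^q)^{1/q}$. Writing $s_j=\|a_j\|_u$, $m_j=\|y_j\|_q$, $r_j=\|b_j\|_v$, what you actually obtain is
\[
\|x_1+x_2\|_{u,q,v}\le\Bigl(\sum_j s_j^2\Bigr)^{1/2}\Bigl(\sum_j m_j^q\Bigr)^{1/q}\Bigl(\sum_j r_j^2\Bigr)^{1/2}.
\]
Since $\frac12+\frac1q+\frac12>1$, no rescaling with $s_jm_jr_j=N_j$ fixed brings this down to $N_1+N_2$. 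Indeed, always $N_1+N_2\le\|m\|_{\ell_\infty}\|s\|_{\ell_2}\|r\|_{\ell_2}<\|m\|_{\ell_q}\|s\|_{\ell_2}\|r\|_{\ell_2}$. For $N_1=N_2$ the optimum, attained by your equal-$\mu$ normalization, is exactly $2^{1/q}(N_1+N_2)$.

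So for $q<\infty$ you only get a quasi-norm. This case covers every space the paper uses: $q_0=p$, $q_1=2$, $q_\theta=2a$.

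Step~2 does not repair this. The isometric identification of the dual with the sup-type space is elementary: take $y$ norming $\alpha g\beta$ and set $x=\alpha^*y\beta^*$. The ``equality'' that needs a minimax is the bidual identity, which is equivalent to convexity of the unit ball. That is the same missing triangle inequality, and it is left unproved.

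\textbf{The duality route to the hard half is circular.} This is the more serious gap. Let $X_j$ be the endpoint amalgamated spaces and $Y_j$ their sup-type duals. Let $Y_\theta$ be the sup-type space with norm $\sup\{\|\alpha g\beta\|_{q_\theta'}:\|\alpha\|_{u_\theta}\le1,\ \|\beta\|_{v_\theta}\le1\}$.

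Via Calder\'on duality, and granting step~2, the missing inequality $\|x\|_{u_\theta,q_\theta,v_\theta}\le\|x\|_{[X_0,X_1]_\theta}$ amounts to $\|g\|_{[Y_0,Y_1]_\theta}\le\|g\|_{Y_\theta}$. That is, for a given $g$ you must build an analytic family $G$ with $G(\theta)=g$ and $\|G(j+it)\|_{Y_j}\le\|g\|_{Y_\theta}$.

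Inserting analytic $\alpha(z),\beta(z)$ into an analytic family for $g$ and applying the three-lines lemma proves the \emph{opposite} inequality, $\|g\|_{Y_\theta}\le\|g\|_{[Y_0,Y_1]_\theta}$. This is just the dual statement of your easy half. Transported back through Calder\'on duality, it returns $\|x\|_{[X_0,X_1]_\theta}\le\|x\|_{u_\theta,q_\theta,v_\theta}$ once more.

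On the sup-type side the needed direction is the hard one. The element $g$ carries no factorization to deform, and the boundary bound must hold against all $\alpha,\beta$ at once. So the step that carries the theorem is absent from your plan: producing, for $\|x\|_{[X_0,X_1]_\theta}<1$, a factorization $x=ayb$ with $\|a\|_{u_\theta}\|y\|_{q_\theta}\|b\|_{v_\theta}\le1$, or equivalently the analytic families on the dual side. The direction you call easy for the dual spaces is precisely the hard one.
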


\backmatter
\bmhead{AI Statement} The authors acknowledge the use of AI tools, including ChatGPT, for language
polishing, LaTeX editing, and exploratory mathematical discussions during the development and
preparation of this manuscript. Some ideas used in the proof-development process arose during
interactions with GPT-5.6 Sol. Suggestions from these AI interactions were subsequently examined, reformulated, incorporated into the manuscript, and independently verified by
the authors. The authors take full responsibility for all mathematical content in the final manuscript.

\bmhead{Acknowledgements}
The research of Shi-Bing Li was supported by NSFC under Grant 62571166. Hongsen Qiu and Xinyu Zhang were supported
by the National Natural Science Foundation of China (Grant No. 12371138 and No. W2441002).

\bmhead{Data Availability} Data sharing is not applicable to this article, since no data sets were used.

\bmhead{Conflict of Interest} The authors declare that they have no conflict of interest.

\bibliography{references}
\end{document}